\numberwithin{equation}{section}
\newtheorem{theorem}{Theorem}[section]
\newtheorem{lemma}[theorem]{Lemma}
\newtheorem{proposition}[theorem]{Proposition}
\newtheorem{claim}[theorem]{Claim}
\newtheorem{corollary}[theorem]{Corollary}
\theoremstyle{definition}
\newtheorem{definition}[theorem]{Definition}
\newtheorem*{problem}{Problem}
\theoremstyle{remark}
\newcommand{\rmdp}{{\mathrm{dp}}}
\newcommand{\twdp}{{\mathrm{twdp}}}
\newcommand{\auxdp}{\omega}
\newcommand{\lpath}{\lambda}
\newcommand{\argmin}{\mathop{\rm arg\,min}\limits}
\newcommand{\Opt}{{\mathrm{Opt}}}
\newcommand{\Feas}{{\mathrm{Feas}}}
\newcommand{\IG}{{\mathrm{IG}}}
\newcommand{\cH}{{\mathcal{H}}}
\newcommand{\Vdd}{{V_{{}^\bullet\!{}_\bullet}}}
\newcommand{\tw}{{\mathrm{tw}}}
\newcommand{\GMMN}{P}
\title{Dynamic Programming Approach to~the~Generalized~Minimum~Manhattan~Network~Problem\thanks{A preliminary version \cite{MOY} of this paper will appear in ISCO 2020.}}
\author{Yuya Masumura\thanks{Fast Retailing Co., Ltd., Tokyo 135-0063, Japan. Email: \texttt{yuya.masumura@fastretailing.com}}  \and
Taihei Oki\thanks{The University of Tokyo, Tokyo 113-8656, Japan. Email: \texttt{taihei\_oki@mist.i.u-tokyo.ac.jp}} \and
Yutaro Yamaguchi\thanks{Kyushu University, Fukuoka 819-0395, Japan. Email: \texttt{yutaro\_yamaguchi@inf.kyushu-u.ac.jp}}}
\date{\empty}
\begin{document}
\maketitle
\thispagestyle{empty}

\begin{abstract}
We study the generalized minimum Manhattan network (GMMN) problem: given a set $\GMMN$ of pairs of points in the Euclidean plane $\mathbb R^2$, we are required to find a minimum-length geometric network which consists of axis-aligned segments and contains a shortest path in the $L_1$ metric (a so-called Manhattan path) for each pair in $\GMMN$.
This problem commonly generalizes several NP-hard network design problems that admit constant-factor approximation algorithms, such as the rectilinear Steiner arborescence (RSA) problem, and it is open whether so does the GMMN problem.

As a bottom-up exploration, Schnizler (2015) focused on the intersection graphs of the rectangles defined by the pairs in $\GMMN$, and gave a polynomial-time dynamic programming algorithm for the GMMN problem whose input is restricted so that both the treewidth and the maximum degree of its intersection graph are bounded by constants.
In this paper, as the first attempt to remove the degree bound, we provide a polynomial-time algorithm for the star case, and extend it to the general tree case based on an improved dynamic programming approach.
\end{abstract}

\paragraph{Keywords}
Geometric Network Design, Euclidean Plane, Manhattan Distance, Combinatorial Optimization, Dynamic Programming.

\clearpage
\thispagestyle{empty}
\setcounter{tocdepth}{2}
\tableofcontents
\clearpage
\setcounter{page}{1}

\section{Introduction}
In this paper, we study a geometric network design problem in the Euclidean plane $\mathbb{R}^2$.
For a pair of points $s$ and $t$ in the plane, a path between $s$ and $t$ is called a {\it Manhattan path} (or an {\it M-path} for short) if it consists of axis-aligned segments whose total length is equal to the Manhattan distance of $s$ and $t$ (in other words, it is a shortest $s$--$t$ path in the $L_1$ metric).
The {\it minimum Manhattan network (MMN) problem} is to find a minimum-length geometric network that contains an M-path for every pair of points in a given terminal set.
In the {\it generalized minimum Manhattan network (GMMN) problem}, given a set $\GMMN$ of pairs of terminals, we are required to find a minimum-length network that contains an M-path for every pair in $\GMMN$.
Throughout this paper, let $n = |\GMMN|$ denote the number of terminal pairs.

The GMMN problem was introduced by Chepoi, Nouioua, and Vax\`es~\cite{chepoi2008rounding}, and is known to be NP-hard as so is the MMN problem~\cite{chin2011minimum}.
The MMN problem and another NP-hard special case named the \emph{rectilinear Steiner arborescence (RSA) probelm} admit polynomial-time constant-factor approximation algorithms, and in~\cite{chepoi2008rounding} they posed a question whether so does the GMMN problem or not, which is still open.

Das, Fleszar, Kobourov, Spoerhase, Veeramoni, and Wolff~\cite{10.1007/s00453-017-0298-0} gave an $O(\log^{d+1} n)$-approximation algorithm for the $d$-dimensional GMMN problem based on a divide-and-conquer approach.
They also improved the approximation ratio for $d = 2$ to $O(\log n)$.
Funke and Seybold~\cite{funke2014generalized} (see also \cite{seybold2018thesis}) introduced the {\it scale-diversity} measure $\mathcal{D}$ for (2-dimensional) GMMN instances, and gave an $O(\mathcal{D})$-approximation algorithm.
Because $\mathcal{D} = O(\log n)$ is guaranteed, this also implies $O(\log n)$-approximation as with Das et al.~\cite{10.1007/s00453-017-0298-0}, which is the current best approximation ratio for the GMMN problem in general.

As another approach to the GMMN problem, Schnizler~\cite{michael2015masters} explored tractable cases by focusing on the intersection graphs of GMMN instances.
The intersection graph represents for which terminal pairs M-paths can intersect.
He showed that, when both the treewidth and the maximum degree of intersection graphs are bounded by constants, the GMMN problem can be solved in polynomial time by dynamic programming (see Table \ref{tab:previous_result}).
His algorithm heavily depends on the degree bound, and it is natural to ask whether we can remove it, e.g., whether the GMMN problem is difficult even if the intersection graph is restricted to a tree without any degree bound.

In this paper, we give an answer to this question.
Specifically, as the first tractable case without any degree bound in the intersection graphs,
we provide a polynomial-time algorithm for the star case by reducing it to the longest path problem in directed acyclic graphs.

\begin{theorem}\label{thm:star}
There exists an $O(n^2)$-time algorithm for the GMMN problem when the intersection graph is restricted to a star.
\end{theorem}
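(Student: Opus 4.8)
After reflecting the whole instance if necessary, I would assume $t_0$ lies to the north-east of $s_0$, so the M-paths of the center pair are exactly the monotone (rightward/upward) staircases from $s_0$ to $t_0$ inside the rectangle $R_0$ they span. Let $R_i$ be the rectangle spanned by the $i$-th leaf pair $(s_i,t_i)$, $1\le i\le n-1$; since the intersection graph is a star, the $R_i$ are pairwise disjoint and each meets $R_0$. My first step is a decomposition lemma: because every M-path of a pair stays inside the rectangle it spans, an optimal network is a union $Q_0\cup Q_1\cup\dots\cup Q_{n-1}$ of M-paths with $Q_j\subseteq R_j$, and as the $R_i$ ($i\ge 1$) are pairwise disjoint, only $Q_0$ can be shared; hence its length is $\sum_j\lVert s_j-t_j\rVert_1-\sum_{i\ge 1}\lvert Q_i\cap Q_0\rvert$, and
\[
  \Opt = \sum_{j}\lVert s_j-t_j\rVert_1 \;-\; \max_{Q_0}\ \sum_{i\ge 1}\sigma_i(Q_0),
\]
where $Q_0$ ranges over M-paths of the center pair and $\sigma_i(Q_0):=\max\{\lvert Q_i\cap Q_0\rvert : Q_i \text{ an M-path of pair } i\}$, which depends on $Q_0$ only through $Q_0\cap R_i'$ with $R_i':=R_i\cap R_0$.

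The heart of the argument is a structural description of $\sigma_i$. Since both coordinates are monotone along $Q_0$, the sets $\{x\ge\cdot\}$ and $\{y\ge\cdot\}$ are suffixes of $Q_0$ and $\{x\le\cdot\},\{y\le\cdot\}$ are prefixes, so $Q_0\cap R_i'$ is a contiguous monotone sub-path; I would denote its endpoints $p_i,q_i$ and let $\Delta x_i,\Delta y_i$ be the side lengths of the rectangle spanned by $p_i$ and $q_i$, so $\lvert Q_0\cap R_i'\rvert=\Delta x_i+\Delta y_i$. If pair $i$ has the \emph{same} diagonal orientation as the center pair, then $Q_i$ can follow the entire sub-path $Q_0\cap R_i'$ (connected to $s_i$ and $t_i$ by monotone detours lying in $R_i$), so $\sigma_i(Q_0)=\Delta x_i+\Delta y_i$. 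If the orientations are \emph{opposite}, I would argue that after traversing a maximal segment of $Q_0$ the path $Q_i$ would be forced to continue in a direction its orientation forbids; hence $Q_i$ can reuse at most one maximal axis-parallel segment of $Q_0$, so $\sigma_i(Q_0)$ equals the length of the longest axis-parallel segment of $Q_0\cap R_i'$, which is at most $\max(\Delta x_i,\Delta y_i)$ and equals it when $Q_0\cap R_i'$ is the L through $p_i$ and $q_i$ --- and replacing $Q_0$ inside $R_i'$ by that L changes neither the feasibility of the center path nor any $\sigma_j$ with $j\ne i$, since the $R_j'$ are disjoint. This would give
\[
  \max_{Q_0}\ \sum_{i\ge 1}\sigma_i(Q_0)
  = \max_{Q_0}\Bigl( \sum_{i:\ \text{same}}(\Delta x_i+\Delta y_i) \;+\; \sum_{i:\ \text{opp.}}\max(\Delta x_i,\Delta y_i)\Bigr).
\]

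To compute this maximum I would pass to a grid and a DAG. A perturbation argument (fixing the combinatorial type of $Q_0$, the objective is a sum of linear and $\max$-of-linear terms over a polytope, hence maximized at a vertex) lets me restrict $Q_0$ to turn only on the grid $\Gamma$ of the $O(n)$ distinct coordinates of the $2n$ terminals; then $Q_0$ is a monotone lattice path between two opposite corners of $R_0$, and $\Gamma\cap R_0$ has $O(n^2)$ vertices. Orienting each horizontal grid edge eastward and each vertical edge northward yields a DAG. Each grid edge lies in at most one $R_i'$ because the $R_i'$ are disjoint and grid-aligned. For a same-orientation leaf $i$ I give every edge of $R_i'$ weight equal to its length, so such a leaf contributes exactly $\Delta x_i+\Delta y_i$ to any path; for an opposite-orientation leaf $i$ I replace the sub-grid of $R_i'$ by two copies glued along the boundary of $R_i'$, in the first of which only horizontal edges carry their length and in the second only vertical ones, so that a path crossing $R_i'$ collects $\Delta x_i$ or $\Delta y_i$ depending on the copy used, and an optimal path collects the larger; all other edges get weight $0$. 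This at most doubles the size, so the DAG still has $O(n^2)$ vertices and edges, a maximum-weight $s_0$--$t_0$ path is found in linear time by a topological-order DP, and from it I recover $Q_0$ and then each $Q_i$ in $O(n^2)$ time in total.

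I expect the main obstacle to be the structural analysis of $\sigma_i$ for opposite-orientation leaves: making rigorous that an M-path of the wrong orientation reuses at most one maximal segment of $Q_0$, so that the ostensibly non-additive term $\max(\Delta x_i,\Delta y_i)$ is at once an upper bound and attainable --- this is exactly what makes the two-copy gadget correct. Fitting the degenerate configurations (rectangles of zero width or height, terminal pairs sharing a coordinate, $Q_0$ meeting some $R_i'$ in a single point) into the same framework will also need some care.
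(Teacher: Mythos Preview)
Your plan is correct and matches the paper's approach in all essentials: the same decomposition into maximizing total overlap with the center M-path, the same structural lemma that the sharable length inside a leaf box is $\Delta x_i+\Delta y_i$ for a same-orientation leaf and $\max(\Delta x_i,\Delta y_i)$ for an opposite-orientation one (this is Lemma~\ref{lem:sharable_segments}), and the same reduction to a longest $s_0$--$t_0$ path in an $O(n^2)$-size DAG on the Hanan grid using a horizontal/vertical two-copy gadget for the flipped leaves. The paper's gadget differs only in implementation---it removes interior grid vertices and splits just the relevant boundary vertices into $p^{\mathrm{hor}},p^{\mathrm{vert}}$ rather than duplicating the whole subgrid---and the corner-vertex ``cheating'' it guards against is exactly the degenerate case you flag as needing care.
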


Then, we extend it to the general tree case based on a dynamic programming (DP) approach inspired by and improving Schnizler's algorithm~\cite{michael2015masters}.

\begin{theorem}\label{thm:tree}
There exists an $O(n^5)$-time algorithm for the GMMN problem when the intersection graph is restricted to a tree.
\end{theorem}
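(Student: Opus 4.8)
The plan is to first recast length minimization as \emph{maximizing the total pairwise overlap} of the M-paths, and then to run a bottom-up dynamic program over the intersection tree $G$ whose per-node work is (a mild generalization of) the longest-path-in-a-DAG subroutine behind Theorem~\ref{thm:star}. For a pair $i$, let $R_i$ be the axis-parallel bounding box of $s_i$ and $t_i$; every M-path of $i$ lies in $R_i$, and all M-paths of $i$ have the same length (the Manhattan distance of $s_i,t_i$). Any feasible network contains, for each $i$, some M-path $\pi_i$, hence contains $\bigcup_i\pi_i$, so we may assume an optimal network has the form $\bigcup_i\pi_i$. Since $G$ is a tree, for every point $x$ the set $\{\,i : x\in R_i\,\}$ is a clique of $G$ and thus has at most two elements. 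Two consequences: (i) at most two of the $\pi_i$ meet at any point, so
\[
 \mathrm{len}\Bigl(\bigcup_i\pi_i\Bigr)\;=\;\sum_i\mathrm{len}(\pi_i)\;-\;\sum_{ij\in E(G)}\mathrm{len}(\pi_i\cap\pi_j),
\]
and since the first sum is fixed, minimizing the length is equivalent to \emph{maximizing} $\sum_{ij\in E(G)}\mathrm{len}(\pi_i\cap\pi_j)$; and (ii) the contact rectangles $Q_{ij}:=R_i\cap R_j$, over all edges $ij\in E(G)$, are pairwise disjoint, and $\mathrm{len}(\pi_i\cap\pi_j)$ depends only on $\pi_i\cap Q_{ij}$ and $\pi_j\cap Q_{ij}$. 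Finally, as is standard for rectilinear/Manhattan network problems, some optimal network lies on the Hanan grid of the $2n$ terminals; henceforth all $\pi_i$ are grid paths, so each turns only at $O(n^2)$ grid points and meets each $\partial Q_{ij}$ at grid points, of which there are $O(n)$.

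\emph{The dynamic program.} I would root $G$ at an arbitrary pair $r$ and, for a non-root pair $i$, put $p:=\parent(i)$. Because $Q_{ip}$ is convex and $\pi_i$ is monotone, $\pi_i\cap Q_{ip}$ is a single (possibly empty) monotone grid staircase, and by (ii) the shape of $\pi_i$ inside $Q_{ip}$ affects no overlap other than $\mathrm{len}(\pi_i\cap\pi_p)$. Hence the only information about the subtree $T_i$ that the rest of $G$ needs is, for every admissible pair $(u,v)$ of grid points on $\partial Q_{ip}$ (the entry and exit of $\pi_i$, in the order fixed by pair $i$'s orientation), the value
\[
 g_i(u,v)\;:=\;\max\sum_{ab\in E(T_i)}\mathrm{len}(\pi_a\cap\pi_b),
\]
the maximum over families $\{\pi_a:a\in V(T_i)\}$ of M-paths with $\pi_i$ crossing $Q_{ip}$ from $u$ to $v$, together with one extra value $g_i(\varnothing)$ for the option that $\pi_i$ avoids $Q_{ip}$; there are $O(n^2)$ such states. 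Let $\mathrm{ovl}_Q(u,v;u',v')$ denote the maximum of $\mathrm{len}(A\cap B)$ over monotone grid staircases $A$ from $u$ to $v$ and $B$ from $u'$ to $v'$ inside a box $Q$ --- a purely geometric quantity depending only on the four points and the two orientations, given by an explicit formula and evaluable in $O(1)$ time. Then the profit that child $i$ contributes to $p$, as a function of how $\pi_p$ crosses $Q_{ip}$ (from $u'$ to $v'$), is
\[
 H_i(u',v')\;:=\;\max_{(u,v)}\bigl(g_i(u,v)+\mathrm{ovl}_{Q_{ip}}(u,v;\,u',v')\bigr),
\]
a table of $O(n^2)$ entries computed from $g_i$ in $O(n^4)$ time.

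\emph{Computing $g_i$ by longest paths, and the running time.} Suppose $H_j$ is known for every child $j$ of $i$. Computing $g_i$ is then exactly a ``star problem at $i$'' of the kind solved by Theorem~\ref{thm:star}: choose a monotone M-path $\pi_i$ in $R_i$ through the prescribed points, collecting from each child $j$ the profit $H_j(\text{entry},\text{exit of }\pi_i\text{ in }Q_{ij})$; since the boxes $Q_{ij}$ are pairwise disjoint, $\pi_i$ visits them in the order forced by monotonicity. I would form a DAG on the $O(n^2)$ Hanan-grid points of $R_i$, with every grid edge oriented toward $t_i$ and weighted $0$, plus, for each child $j$ and each admissible entry/exit pair $(u',v')$ on $\partial Q_{ij}$, a shortcut edge $u'\to v'$ of weight $H_j(u',v')$; then $g_i(u,v)=\mathrm{LP}(s_i,u)+\mathrm{LP}(v,t_i)$, where $\mathrm{LP}$ is the longest-path value in this DAG (the forced $u$-to-$v$ portion of $\pi_i$ lies in $Q_{ip}$ and earns nothing), and $g_i(\varnothing)=\mathrm{LP}(s_i,t_i)$. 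The DAG has $O(n^2)$ grid edges and $O(\deg_G(i)\cdot n^2)$ shortcut edges, so all needed $\mathrm{LP}$ values cost $O(\deg_G(i)\,n^2)$; with the $O(n^4)$ cost of the table $H_i$, pair $i$ is processed in $O(n^4)$ time, and summing over all pairs gives $O(n^5)$. At the root $r$ a single longest-path computation yields the maximum total overlap; subtracting it from $\sum_i\mathrm{len}(\pi_i)$ gives the optimum, and an optimal network is recovered by standard backtracking.

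\emph{Main obstacle.} The conceptual heart is the state compression: that the entire influence of a subtree $T_i$ on the rest of the instance is captured by the $O(n^2)$ boundary-crossing values $g_i(u,v)$. This uses the tree structure twice --- the clique bound forces all contact rectangles $Q_{ij}$ to be pairwise disjoint, and this disjointness is precisely what lets each $\pi_i$ be reshaped inside each box without side effects, so that only the boundary crossings matter. The remaining technical work, and the most case-heavy part, is the geometric lemma evaluating $\mathrm{ovl}_Q$: for every relative orientation of the two M-paths through a box, one must determine the maximum length on which monotone staircases with prescribed boundary crossings can be made to coincide and show it is a closed-form expression. (The Hanan-grid reduction is routine but must be checked to be compatible with the overlap objective.)
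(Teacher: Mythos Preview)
Your approach is essentially the paper's: both recast the problem as maximizing total pairwise overlap (using that a tree is triangle-free, so at most two M-paths meet at any point), run a bottom-up DP over the rooted intersection tree with $O(n^2)$ states per node given by in-out pairs on the Hanan grid, and evaluate each state via a longest-path computation in an auxiliary DAG built on the grid inside the current bounding box. The paper indexes its DP value $\rmdp(v,p_v,q_v)$ directly by the \emph{parent's} in-out pair and folds the parent into the DAG as a virtual leaf, so that Lemma~\ref{lem:sharable_segments} applies verbatim; you instead index $g_i(u,v)$ by the \emph{child's own} crossing of $Q_{ip}$ and then convert to the parent's frame via $H_i$ and your $\mathrm{ovl}_Q$. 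This is a cosmetic reparameterization---your $H_i$ plays exactly the role of the paper's $\rmdp$---and both routes land at $O(n^5)$.

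There is, however, one genuine bookkeeping slip. Your identity $g_i(u,v)=\mathrm{LP}(s_i,u)+\mathrm{LP}(v,t_i)$ is not correct as stated: if the longest path in your DAG does not traverse some child box $Q_{ij}$, it collects zero from child $j$, whereas the subtree $T_j$ still contributes its best internal overlap regardless of what $\pi_i$ does. And a monotone $\pi_i$ cannot in general be routed through every $Q_{ij}$ simultaneously (place two disjoint child boxes near opposite off-diagonal corners of $R_i$). The fix---precisely what the paper does in Lemma~\ref{lem:tree_the_longest}---is to weight each shortcut edge by $H_j(u',v')-g_j^\ast$ rather than $H_j(u',v')$, where $g_j^\ast$ is the unconstrained optimum for $T_j$, and to add the constant $\sum_j g_j^\ast$ back at the end. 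This restores correctness without affecting the running time.
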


The above algorithm involves two types of DPs, which are nested.
We furthermore improve its running time by reducing the computational cost of inner DPs, and obtain the following result.

\begin{theorem}\label{thm:speedup}
There exists an $O(n^3)$-time algorithm for the GMMN problem when the intersection graph is restricted to a tree.
\end{theorem}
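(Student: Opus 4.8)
The plan is to start from the $O(n^5)$-time algorithm behind Theorem~\ref{thm:tree} and to shave a factor of roughly $n^2$ off its running time by reorganizing the two nested DPs so that the inner DP is not recomputed from scratch for every state of the outer DP. As I read that algorithm, one roots the intersection tree and the outer DP processes its nodes from the leaves to the root; at a node $v$ it tabulates, over all relevant ``interface'' states describing how a cheapest partial network can meet the rectangle of $\parent(v)$, the corresponding optimal cost. Producing one such entry amounts to solving an inner DP that optimally draws the M-path of the pair $v$ inside its rectangle while attaching to it the already-solved subtrees rooted at the children of $v$ --- whose rectangles, by the tree structure of the intersection graph, are pairwise disjoint, so these attachments do not interfere with one another. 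The running time is essentially $\sum_v(\text{number of interface states at }v)\times(\text{cost of one inner DP at }v)$, and it is the inner DP, invoked once per interface state, that is the bottleneck.

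The main step is to avoid solving the inner DP from scratch for each interface state at a node $v$, and there are two complementary handles on this. First, the interface with $\parent(v)$ is essentially pinned down by the single grid point at which $v$'s monotone staircase enters the overlap of the two rectangles (the grid being the one whose coordinates come from the $2n$ terminals); so the inner-DP optimum at $v$ can be regarded as a function of a single discrete parameter, and this function should be piecewise-linear and concave --- enlarging the portion of the staircase that lies inside $\parent(v)$'s rectangle can only help, and the length saved by sharing segments with the children's subtrees is concave in where the staircase is drawn --- so all of its values can be produced by one sweep at amortized $O(1)$ cost each (for instance via a SMAWK-type argument, or incrementally). Second, to carry out that sweep I would encode the inner problem at $v$ --- itself a ``star-like'' instance, with $v$ as the centre and its children as the leaves --- as a longest-path problem in a directed acyclic graph, exactly in the spirit of the construction behind Theorem~\ref{thm:star}; its size is near-linear in $n$ plus the number of children of $v$, since only the coordinates contributed by $v$ and by its children are relevant, and leaving the entry point of $v$'s staircase free as an extra layer of sources lets a single longest-path computation serve all interface states at once. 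Since the total number of children over all nodes is $n-1$, the reorganized algorithm runs in $O(n^3)$ time.

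The step I expect to be the real obstacle is establishing the structural lemma that licenses this speedup. The correctness of the $O(n^5)$ inner DP rests on a geometric case analysis of how two M-paths sharing a common rectangle --- that of $v$ with a child, and that of $v$ with $\parent(v)$ --- may overlap, and of exactly how much length such an overlap saves; to reuse it here one must show that the resulting inner-DP optimum is a well-behaved (monotone / piecewise-linear concave, or cleanly DAG-expressible) function of the interface parameter \emph{uniformly across all these cases}, which is where the geometry, rather than the bookkeeping, has to be pinned down. A second, more routine obstacle is re-verifying the outer DP under the compressed encoding: one must check that the single-parameter interface state really records everything a parent needs about a subtree --- in particular that no feasible sharing of axis-aligned segments across different parts of the tree is lost --- and that this information survives the amortized, all-states-at-once recomputation. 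With those in place, the final running-time bound is a direct summation over the nodes of the tree.
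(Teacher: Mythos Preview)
Your high-level strategy --- reuse rather than recompute the inner DP across interface states --- matches the paper, but the execution has a real gap.

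The interface at $v$ is not a single parameter. The state in the outer DP is the \emph{in-out pair} $(p_v,q_v)$ of the parent's M-path in $B(v)$: two boundary points, both of which can vary independently (cases (b) and (c) in the paper). Your ``extra layer of sources'' idea handles only the case where one endpoint is fixed; when both endpoints move you would need all-pairs longest paths, not a single sweep, and the cost blows back up. Your alternative handle, piecewise-linear concavity of the optimum in the entry point, is not established and in fact is suspect: the sharable length with a flipped neighbour is $\max\{d_x,d_y\}$, which is convex rather than concave in the position, so a SMAWK-style argument would need work that you have not supplied.

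What the paper actually does is this. For each $v$ it first runs the longest-path DP once in $G[v,\epsilon,\epsilon]$, from $s_v$ and from $t_v$, to tabulate $\lambda(s_v,z)$ and $\lambda(z,t_v)$ for all grid points $z$ (this is the $O(n^3)$ total you want). It then splits the geometry of $B(u)\cap B(v)$ into six cases --- three positional types (one endpoint of the in-out pair fixed; the two endpoints on adjacent sides; on opposite sides) crossed with whether the parent $u$ is regular or flipped --- and for each designs a bespoke two-index auxiliary table $\omega(v,i,j)$ with constant-time recurrences built from the precomputed $\lambda$-values. These recurrences are where the sharable-length formulas from Lemma~\ref{lem:sharable_segments} are absorbed; for opposite sides there is even a closed-form reduction $\rmdp(v,p_{1,j},p_{a,k})=\rmdp(v,p_{1,1},p_{a,1})+\|p_{1,j}p_{1,k}\|$. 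Each case fills the $O(n^2)$ entries of $\rmdp(v,\cdot,\cdot)$ in $O(n^2)$ time, giving $O(n^3)$ overall. The substantive content is exactly the ``structural lemma'' you flag as the obstacle, but it is discharged by explicit case analysis, not by a general concavity principle.
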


Furthermore, we show that the cycle case can be solved by solving the tree case $O(n)$ times.
This fact is shown as Proposition~\ref{prop:cycle} in a generalized form from cycles to triangle-free pseudotrees, where a triangle is a cycle consisting of three vertices and a pseudotree is a connected graph that contains at most one cycle.\footnote{Precisely, a triangle itself is not a triangle-free pseudotree, but its size is trivially bounded by a constant. In contrast, the size of a pseudotree containing a triangle is unbounded, and it remains open whether such a case is tractable or not. See Section~\ref{sec:specialization} for why the triangle-freeness is crucial in our approach.}
Combining this with Theorem~\ref{thm:speedup}, we obtain the following result.

\begin{corollary}\label{cor:cycle}
There exists an $O(n^4)$-time algorithm for the GMMN problem when the intersection graph is restricted to a cycle (or a triangle-free pseudotree).
\end{corollary}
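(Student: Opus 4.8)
The plan is to obtain Corollary~\ref{cor:cycle} as an immediate consequence of two ingredients already in hand: the $O(n^{3})$-time algorithm for tree intersection graphs from Theorem~\ref{thm:speedup}, and the reduction recorded in Proposition~\ref{prop:cycle}, which solves an instance whose intersection graph is a triangle-free pseudotree by solving $O(n)$ instances whose intersection graphs are trees. Granting these, the running time is $O(n)\cdot O(n^{3})=O(n^{4})$, so the substance of the argument lies entirely in the reduction, and I would organize the write-up around it. Before starting, I would dispose of the degenerate cases: if the intersection graph is already a tree there is nothing to prove, and if it is a triangle then $n=|\GMMN|=3$ is a constant, so the instance is solved in constant time (for instance by optimizing over the Hanan grid, which then has constantly many vertices); hence I may assume the intersection graph is a triangle-free pseudotree that is not a tree, with unique cycle $C$ of length at least four.

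For the reduction itself, I would fix one edge $e=\{u,v\}$ of $C$; deleting $e$ leaves the tree $T=G-e$. The role of triangle-freeness is that $u$ and $v$ have no common neighbor, so no rectangle other than $R_u$ and $R_v$ meets the rectangle $B:=R_u\cap R_v$; consequently, in any optimal network the only coupling that occurs inside $B$ is the sharing of segments between the M-path for $u$ and the M-path for $v$. I would then show that this coupling is summarized by a single \emph{interface} parameter --- concretely, how far the two M-paths run together inside $B$, which in an optimal solution can be normalized to a maximal common sub-staircase and thus pinned down by one coordinate among the $O(n)$ coordinates supplied by the terminals. Enumerating this interface, for each of the $O(n)$ choices I would run the tree dynamic program underlying Theorem~\ref{thm:speedup} on $T$ with the corresponding boundary condition forced at $u$ and $v$ (pre-committing the common sub-staircase and otherwise forbidding the program from exploiting the $u$--$v$ adjacency), and return the shortest network found over all choices. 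Correctness would follow from a feasibility equivalence in both directions: every feasible network for the given instance is feasible for the instance associated with \emph{some} interface, and every network produced for some interface is feasible for the original instance.

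The main obstacle I anticipate is precisely the correctness of this guess-and-reduce step: proving that an optimal network can be assumed to take the normalized, single-common-sub-staircase form inside $B$, and that only $O(n)$ interfaces need to be tried. This is exactly the point at which triangle-freeness is indispensable --- if $u$ and $v$ shared a neighbor $w$, the rectangle $R_w$ would cut across $B$ and the network inside $B$ could no longer be captured by a bounded interface, which is also why the case of pseudotrees containing a triangle stays open. A secondary, more routine point is to verify that the dynamic program of Theorem~\ref{thm:speedup} accepts a prescribed boundary condition at two vertices without any loss in its $O(n^{3})$ running time; this should be clear by inspection, since fixing the state at $u$ and $v$ only shrinks the relevant DP tables.
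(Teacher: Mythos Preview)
Your top-level derivation is exactly the paper's: Corollary~\ref{cor:cycle} is obtained by plugging Theorem~\ref{thm:speedup} into Proposition~\ref{prop:cycle}, and the paper says no more than that. Where you go further is in sketching a proof of the reduction behind Proposition~\ref{prop:cycle}, and here your route diverges from the paper's and has a real gap.

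The paper does \emph{not} delete a cycle edge and guess an interface between two adjacent pairs. Instead it fixes a single vertex $v$ on the cycle, lets $u_1,u_2$ be its two cycle neighbors, and defines an L-shaped frontier $X^{\mathrm{hor}}\cup X^{\mathrm{vert}}$ inside $B(v)$ determined by the corners of $B(u_1)\cap B(v)$ and $B(u_2)\cap B(v)$. Any M-path $\pi_v$ crosses this frontier in exactly one of $O(n)$ ways, encoded as a triple $(q^-,q,q^+)$ of consecutive grid points. For each such triple, $v$ is replaced by four pairs $v_1=(s_v,q^-)$, $v_2=(q^-,q)$, $v_3=(q,q^+)$, $v_4=(q^+,t_v)$, yielding a genuine new GMMN instance $\tilde P$; the heart of the argument is Claim~\ref{lem:reduction_forest}, a careful case analysis showing $\IG[\tilde P]$ is a forest. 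The $O(n)$ count is immediate because the frontier has $O(n)$ grid points.

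Your sketch, by contrast, asks for the ``interface'' between $\pi_u$ and $\pi_v$ inside $B(u)\cap B(v)$ to be captured by a \emph{single} coordinate. This is the gap: the natural summary of that interaction is the in-out pair of $\pi_u$ relative to $B(v)$, which ranges over $\Theta(n^2)$ values (this is precisely why Section~\ref{sec:tree} first yields $O(n^5)$ before the extra work of Section~\ref{sec:speedup}). Your ``maximal common sub-staircase pinned down by one coordinate'' does not survive the case where both $u$ and $v$ are regular with respect to each other, in which the shared subpath can start and end at independently varying boundary points. With $\Theta(n^2)$ interfaces you would only recover $O(n^5)$. A second, smaller issue: deleting the edge $\{u,v\}$ from $\IG[P]$ does not produce a GMMN[Tree] \emph{instance} to which Theorem~\ref{thm:speedup} applies as a black box; you would have to reopen the DP and argue that boundary conditions at two non-parent/child vertices can be imposed without loss, whereas the paper's vertex-splitting manufactures honest forest instances and invokes the tree algorithm unchanged.
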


We also improve the time complexity for the general case as in Table~\ref{tab:previous_result}.
The dependency on the maximum degree is substantially improved, but it is still exponential.
In addition, the approach is apart from the above main results and is also a straightforward improvement from Schnizler's result for the tree case.
For these reasons, we just sketch this result in the appendix. 

\begin{table}[t]
    \centering
    \caption{Exactly solvable cases classified by the class of intersection graphs, whose treewidth and maximum degree are denoted by $\tw$ and $\Delta$, respectively.}
    \begin{tabular}{cc}\hline
         Class & Time Complexity \\\hline\hline
         $\tw = O(1)$, $\Delta = O(1)$ & $O(n^{4\Delta(\Delta+1)(\tw+1)+2})$ \cite{michael2015masters}\\\hline
         Trees ($\tw = 1$, $\Delta = O(1))$ & $O(n^{4\Delta^2 + 1})$ \cite{michael2015masters} \\\hline
         Cycles ($\tw = \Delta = 2$) & $O(n^{25})$ \cite{michael2015masters} \\\hline\hline 
         $\tw = O(1)$, $\Delta = O(1)$ & $O(n^{2\Delta(\tw+1)+1})$ (Theorem~\ref{thm:tree-decomposition}) \\\hline
         Stars ($\tw = 1$, $\Delta = n - 1$) & $O(n^2)$ (Theorem~\ref{thm:star}) \\\hline
         Trees ($\tw = 1$) & $O(n^3)$ (Theorem~\ref{thm:speedup}) \\\hline
         Cycles ($\tw = \Delta = 2$) & $O(n^4)$ (Corollary~\ref{cor:cycle}) \\\hline
    \end{tabular}
    \label{tab:previous_result}    
\end{table}

\subsection*{Related work}
The MMN problem was first introduced by Gudmundsson, Levcopoulos, and Narashimhan~\cite{gudmundsson2001approximating}.
They gave 4- and 8-approximation algorithms running in $O(n^3)$ and $O(n \log n)$ time, respectively.
The current best approximation ratio is 2, which was obtained independently by Chepoi et al.~\cite{chepoi2008rounding} using an LP-ronding technique, by Nouioua~\cite{Nouioua2005EnveloppesDP} using a primal-dual scheme, and by Guo, Sun, and Zhu~\cite{guo2008greedy} using a greedy method.

The RSA problem is another important special case of the GMMN problem. 
In this problem, given a set of terminals in $\mathbb{R}^2$, we are required to find a minimum-length network that contains an M-path between the origin and every terminal. 
The RSA problem was first studied by Nastansky, Selkow, and Stewart~\cite{nastansky1974cost} in 1974.
The complexity of the RSA problem had been open for a long time, and Shi and Su~\cite{shi2005rectilinear} showed that the decision version is strongly NP-complete after three decades.
Rao, Sadayappan, Hwang, and Shor~\cite{rao1992rectilinear} proposed a 2-approximation algorithm that runs in $O(n \log n)$ time.
Lu and Ruan~\cite{lu2000polynomial} and Zachariasen~\cite{zachariasen2000approximation} independently obtained PTASes, which are both based on Arora's technique~\cite{arora2003approximation} of building a PTAS for the metric Steiner tree problem.

\subsection*{Organization}
The rest of this paper is organized as follows.
In Section~\ref{sec:preliminary}, we describe necessary definitions and notations.
In Section~\ref{sec:star}, we present an algorithm for the star case and prove Theorem~\ref{thm:star}.
In Section \ref{sec:tree}, based on a DP approach, we extend our algorithm to the tree case and prove Theorem~\ref{thm:tree}. 
Then, in Section~\ref{sec:speedup}, we improve the algorithm shown in Section~\ref{sec:tree} by reducing the computational cost of solving subproblems in our DP and prove Theorem~\ref{thm:speedup}.
Finally, in Section~\ref{sec:cycle}, we show that any cycle (or triangle-free pseudotree) instance can be reduced to $O(n)$ tree instances, which implies Corollary~\ref{cor:cycle}.
We also discuss an improvement on the general case and another observation in the appendix.
\section{Preliminaries}\label{sec:preliminary}
\subsection{Problem Formulation}
For a point $p \in \mathbb{R}^2$, we denote by $p_x$ and $p_y$ its $x$- and $y$-coordinates, respectively, i.e., $p = (p_x, p_y)$.
Let $p, q \in \mathbb{R}^2$ be two points.
We write $p \leq q$ if both $p_x \leq q_x$ and $p_y \leq q_y$ hold.
We define two points
\begin{align}
  p \wedge q &= \left(\min\left\{p_x, q_x\right\},\, \min\left\{p_y, q_y\right\}\right),\\
  p \vee q &= \left(\max\left\{p_x, q_x\right\},\, \max\left\{p_y, q_y\right\}\right).
\end{align}
We denote by $pq$ the segment whose endpoints are $p$ and $q$, and by $\|pq\|$ its length, i.e., $pq = \{ \alpha p + (1 - \alpha) q \mid \alpha \in [0, 1] \}$ and $\|pq\| = \sqrt{(p_x - q_x)^2 + (p_y - q_y)^2}$.
We also define $d_x(p, q) = |p_x - q_x|$ and $d_y(p, q) = |p_y - q_y|$,
and denote by $d(p, q)$ the \emph{Manhattan distance} between $p$ and $q$, i.e., $d(p, q) = d_x(p, q) + d_y(p, q)$.
Note that $\|pq\| = d(p, q)$ if and only if $p_x = q_x$ or $p_y = q_y$, and then the segment $pq$ is said to be \emph{vertical} or \emph{horizontal}, respectively, and \emph{axis-aligned} in either case.

A \emph{(geometric) network} $N$ in $\mathbb{R}^2$ is a finite simple graph with a vertex set $V(N) \subseteq \mathbb{R}^2$ and an edge set $E(N) \subseteq \binom{V(N)}{2} = \{ \{p, q\} \mid p, q \in V(N),~p \neq q \}$,
where we often identify each edge $\{p, q\}$ with the corresponding segment $pq$.
The \emph{length} of $N$ is defined as $\|N\| = \sum_{\{p, q\} \in E(N)} \|pq\|$.
For two points $s, t \in \mathbb{R}^2$, a \emph{path} $\pi$ between $s$ and $t$ (or an \emph{$s$--$t$ path}) is a network of the following form: 
\begin{align}
    V(\pi) &= \{s = p_0, p_1, p_2, \dots, p_k = t\},\\
    E(\pi) &= \bigl\{\{p_{i-1}, p_i\} \mid i \in [k] \bigr\},
\end{align}
where $[k] = \{1, 2, \dots, k\}$ for a nonnegative integer $k$.
An $s$--$t$ path $\pi$ is called a \emph{Manhattan path} (or an \emph{M-path}) for a pair $(s, t)$ if
every edge $\{p_{i-1}, p_i\} \in E(\pi)$ is axis-aligned and $\|\pi\| = d(s, t)$ holds.

We are now ready to state our problem formally.

\begin{problem}[Generalized Minimum Manhattan Network (GMMN)]
\begin{description}
\setlength{\itemsep}{0mm}
\item[]
\item[Input:] A set $\GMMN$ of $n$ pairs of points in $\mathbb{R}^2$.
\item[Goal:] Find a minimum-length network $N$ in $\mathbb{R}^2$ that consists of axis-aligned edges and contains a Manhattan path for every pair $(s, t) \in \GMMN$.
\end{description}
\end{problem}

Throughout this paper, when we write a pair $(p, q) \in \mathbb{R}^2 \times \mathbb{R}^2$, we assume $p_x \leq q_x$ (by swapping if necessary).
A pair $(p, q)$ is said to be \emph{regular} if $p_y \leq q_y$, and \emph{flipped} if $p_y \geq q_y$.
In addition, if $p_x = q_x$ or $p_y = q_y$, then there exists a unique M-path for $(p, q)$ and we call such a pair \emph{degenerate}.

\subsection{Restricting a Feasible Region to the Hanan Grid}
For a GMMN instance $\GMMN$, we denote by $\mathcal{H}(\GMMN)$ the \emph{Hanan grid}, which is a grid network in $\mathbb{R}^2$ consisting of vertical and horizontal lines through each point appearing in $\GMMN$.
More formally, it is defined as follows (see Figure~\ref{fig:GMMN_Hanangrid}):
\begin{align}
    V(\mathcal{H}(\GMMN)) &= \left(\bigcup_{(s, t) \in \GMMN}\{ s_x, t_x \}\right) \times \left(\bigcup_{(s, t) \in \GMMN}\{ s_y, t_y \}\right) \subseteq \mathbb{R}^2,\\
    E(\mathcal{H}(\GMMN)) &= \bigl\{\{p, q\} \in \textstyle\binom{V(\mathcal{H}(\GMMN))}{2}\mid \|pq\| = d(p,q),~pq \cap V(\mathcal{H}(\GMMN)) = \{p, q\} \bigr\}.
\end{align}
Note that $\mathcal{H}(\GMMN)$ is an at most $2n \times 2n$ grid network. 
It is not difficult to see that, for any GMMN instance $\GMMN$, at least one optimal solution is contained in the Hanan grid $\mathcal{H}(\GMMN)$ as its subgraph (cf.~\cite{funke2014generalized}).

\begin{figure}
    \centering
    \includegraphics[width=0.45\hsize]{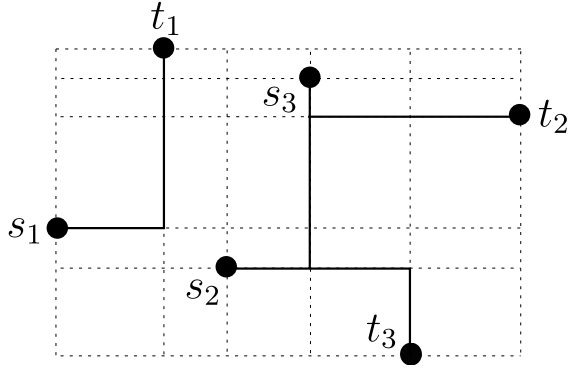}
    \caption{An optimal solution (solid) to a GMMN instance $\{(s_1, t_1),(s_2, t_2),(s_3, t_3)\}$ lies on the Hanan grid (dashed), where $(s_1, t_1)$ and $(s_2, t_2)$ are regular pairs and $(s_3, t_3)$ is a flipped pair.}
    \label{fig:GMMN_Hanangrid}
\end{figure}

For each pair $v = (p, q) \in V(\mathcal{H}(\GMMN)) \times V(\mathcal{H}(\GMMN))$, we denote by $\Pi_\GMMN(v)$ or $\Pi_\GMMN(p, q)$ the set of all M-paths for $v$ that are subgraphs of the Hanan grid $\cH(\GMMN)$.
By the problem definition, we associate each $n$-tuple of M-paths, consisting of an M-path $\pi_v \in \Pi_\GMMN(v)$ for each $v \in \GMMN$, with a feasible solution $N = \bigcup_{v \in \GMMN} \pi_v$ on $\cH(\GMMN)$, where the union of networks is defined by the set unions of the vertex sets and of the edge sets.
Moreover, each minimal feasible (as well as optimal) solution on $\cH(\GMMN)$ must be represented in this way.
Based on this correspondence, we abuse the notation as $N = (\pi_v)_{v \in \GMMN} \in \prod_{v \in \GMMN}\Pi_\GMMN(v)$, and define $\mathrm{Feas}(\GMMN)$ and $\mathrm{Opt}(\GMMN)$ as the sets of feasible solutions covering all minimal ones and of all optimal solutions, respectively, on $\cH(\GMMN)$, i.e.,
\begin{align}
    \Feas(\GMMN) &= \prod_{v \in \GMMN}\Pi_\GMMN(v),\\
    \Opt(\GMMN) &= \argmin\{ \|N\| \mid N \in \Feas(\GMMN) \}.
\end{align}

Thus, we have restricted a feasible region of a GMMN instance $\GMMN$ to the Hanan grid $\cH(\GMMN)$.
In other words, the GMMN problem reduces to finding a network $N = (\pi_v)_{v \in \GMMN} \in \Opt(\GMMN)$ as an $n$-tuple of M-paths in $\Feas(\GMMN)$.

\subsection{Specialization Based on Intersection Graphs}\label{sec:specialization}
The \emph{bounding box} of a pair $v = (p, q) \in \mathbb{R}^2 \times \mathbb{R}^2$ indicates the rectangle region
\begin{align}
    \{z \in \mathbb R^2 \mid p \wedge q \le z \le p \vee q \},
\end{align}
and we denote it by $B(v)$ or $B(p, q)$.
Note that $B(p, q)$ is the region where an M-path for $(p, q)$ can exist.
For a GMMN instance $\GMMN$ and a pair $v \in \GMMN$, we denote by $\mathcal{H}(\GMMN, v)$ the subgraph of the Hanan grid $\mathcal{H}(\GMMN)$ induced by $V(\mathcal{H}(\GMMN)) \cap B(v)$.
We define the \emph{intersection graph} $\mathrm{IG}[\GMMN]$ of $\GMMN$ by
\begin{align}
    V(\mathrm{IG}[\GMMN]) &= \GMMN, \\
    E(\mathrm{IG}[\GMMN]) &= \bigl\{\{u, v\}\in \textstyle\binom{\GMMN}{2} \mid E(\mathcal{H}(\GMMN, u)) \cap E(\mathcal{H}(\GMMN, v)) \ne \emptyset \bigr\}.
\end{align}

The intersection graph $\IG[\GMMN]$ intuitively represents how a GMMN instance $\GMMN$ is complicated in the sense that, for each $u, v \in \GMMN$,
an edge $\{u, v\} \in E(\mathrm{IG}[\GMMN])$ exists if and only if two M-paths $\pi_u \in \Pi_\GMMN(u)$ and $\pi_v \in \Pi_\GMMN(v)$ can share some segments, which saves the total length of a network in $\Feas(\GMMN)$.\footnote{We remark that our definition of the intersection graph is slightly different from Schnizler's one~\cite{michael2015masters}, which regards two pairs as adjacent even when their bounding boxes share exactly one point. We employ our definition because M-paths for such pairs cannot share any nontrivial segment. This difference itself expands tractable situations, and sometimes requires more careful arguments due to shared points of nonadjacent pairs (in particular, the corners of their bounding boxes).}
In particular, if $\IG[\GMMN]$ contains no triangle, then no segment can be shared by M-paths for three different pairs in $\GMMN$, and hence $N \in \Feas(\GMMN)$ is optimal (i.e., $\|N\|$ is minimized) if and only if the total length of segments shared by two M-paths in $N$ is maximized.

We denote by GMMN[$\cdots$] the GMMN problem with restriction on the intersection graph of the input; e.g., $\IG[P]$ is restricted to a tree in GMMN[Tree].
Each restricted problem is formally stated in the relevant section.
\section{An $O(n^2)$-Time Algorithm for GMMN[Star]}\label{sec:star}
In this section, as a step to GMMN[Tree], we present an $O(n^2)$-time algorithm for GMMN[Star], which is formally state as follows.

\begin{problem}[{GMMN[Star]}]
\begin{description}
\setlength{\itemsep}{0mm}
\item[]
\item[Input:] A set $\GMMN \subseteq \mathbb{R}^2 \times \mathbb{R}^2$ of $n$ pairs whose intersection graph $\IG[\GMMN]$ is a star, whose center is denoted by $r = (s, t) \in \GMMN$.
\item[Goal:] Find an optimal network $N = (\pi_v)_{v \in \GMMN} \in \Opt(\GMMN)$.
\end{description}
\end{problem}

A crucial observation for GMMN[Star] is that an M-path $\pi_l \in \Pi_\GMMN(l)$ for each leaf pair $l \in \GMMN - r$ can share some segments only with an M-path $\pi_r \in \Pi_\GMMN(r)$ for the center pair $r$.
Hence, minimizing the length of $N = (\pi_v)_{v \in \GMMN}\in \Feas(\GMMN)$ is equivalent to maximizing the total length of segments shared by two M-paths $\pi_r$ and $\pi_l$ for $l \in \GMMN - r$.

In Section~\ref{sec:sharable}, we observe that, for each leaf pair $l \in \GMMN - r$,
once we fix where an M-path $\pi_r \in \Pi_\GMMN(r)$ for $r$ enters and leaves the bounding box $B(l)$,
the maximum length of segments that can be shared by $\pi_r$ and $\pi_l \in \Pi_\GMMN(l)$ is easily determined.
Thus, GMMN[Star] reduces to finding an optimal M-path $\pi_r \in \Pi_\GMMN(r)$ for the center pair $r = (s, t)$, and in Section~\ref{sec:reduction}, we formulate this task as the computation of a longest $s$--$t$ path in an auxiliary directed acyclic graph (DAG), which is constructed from the subgrid $\cH(\GMMN, r)$.
As a result, we obtain an exact algorithm that runs in linear time in the size of auxiliary graphs, which are simplified so that it is always $O(n^2)$ in Section~\ref{sec:star:runtime_analysis}.

\subsection{Observation on Sharable Segments}\label{sec:sharable}
Without loss of generality, we assume that the center pair $r = (s, t)$ is regular, i.e., $s \leq t$.
Fix an M-path $\pi_r \in \Pi_\GMMN(r)$ and a leaf pair $l = (s_l, t_l) \in \GMMN - r$.
Obviously, if $\pi_r$ is disjoint from the bounding box $B(l)$, then any M-path $\pi_l \in \Pi_\GMMN(l)$ cannot share any segment with $\pi_r$.
Suppose that $\pi_r$ intersects $B(l)$, and let $\pi_r[l]$ denote the intersection $\pi_r \cap \cH(\GMMN, l)$.
Let $v = (p, q)$ be the pair of two vertices on $\pi_r$ such that $\pi_r[l]$ is a $p$--$q$ path, and we call $v$ the \emph{in-out pair} of $\pi_r$ for $l$.
As $\pi_r \in \Pi_\GMMN(r)$, we have $\pi_r[l] \in \Pi_\GMMN(v)$, and $v$ is also regular (recall the assumption $p_x \leq q_x$).
Moreover, for any M-path $\pi_v \in \Pi_\GMMN(v)$, the network $\pi'_r$ obtained from $\pi_r$ by replacing its subpath $\pi_r[l]$ with $\pi_v$ is also an M-path for $r$ in $\Pi_\GMMN(r)$.
Since $B(v) \subseteq B(l)$ does not intersect $B(l')$ for any other leaf pair $l' \in \GMMN \setminus \{r, l\}$, once $v = (p, q)$ is fixed, we can freely choose an M-path $\pi_v \in \Pi_\GMMN(v)$ instead of $\pi_r[l]$ for maximizing the length of segments shared with some $\pi_l \in \Pi_\GMMN(l)$.
For each possible in-out pair $v = (p, q)$ of M-paths in $\Pi_\GMMN(r)$
(the sets of those vertices $p$ and $q$ are formally defined in Section~\ref{sec:reduction} as $V_{\llcorner}(r, l)$ and $V_{\urcorner}(r, l)$, respectively),
we denote by $\gamma(l, p, q)$ the maximum length of segments shared by two M-paths for $l$ and $v = (p, q)$, i.e.,
\begin{align}
\gamma(l, p, q) = \max\left\{\|\pi_l \cap \pi_v\| \mid \pi_l \in \Pi_\GMMN(l),~\pi_v \in \Pi_\GMMN(p, q)\right\}. \label{eq:sharable_length}
\end{align}
Then, the following lemma is easily observed (see Figure~\ref{fig:share}).

\begin{lemma}\label{lem:sharable_segments}
For every leaf pair $l \in \GMMN - r$, the following properties hold.
\begin{enumerate}[label={(\arabic*)},labelindent=\parindent,leftmargin=*]
    \item[$(1)$] If $l$ is a regular pair, $\gamma(l, p, q) = d(p, q)\ \bigl(= d_x(p, q) + d_y(p, q)\bigr)$.
    \item[$(2)$] If $l$ is a flipped pair, $\gamma(l, p, q) = \max\left\{d_x(p, q),\, d_y(p, q)\right\}$. 
\end{enumerate}
\end{lemma}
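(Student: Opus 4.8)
The plan is to analyze the two cases directly by describing, for a fixed in-out pair $v = (p, q)$ with $p_x \le q_x$, how an M-path $\pi_v \in \Pi_\GMMN(p,q)$ and an M-path $\pi_l \in \Pi_\GMMN(l)$ can be routed to overlap as much as possible inside the grid $\cH(\GMMN, l)$. Throughout, recall that $v$ is regular (so $p \le q$), that $B(v) \subseteq B(l)$, and that every M-path for $v$ stays inside $B(v)$ using only axis-aligned segments with total length exactly $d(p,q) = d_x(p,q) + d_y(p,q)$; in particular $\|\pi_l \cap \pi_v\| \le d(p,q)$ always, and also $\|\pi_l \cap \pi_v\|$ is at most the length of the portion of $\pi_l$ lying in $B(v)$.

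For the regular case (1), the upper bound $\gamma(l,p,q) \le d(p,q)$ is immediate from the previous sentence. For the matching lower bound I would exhibit a single M-path $\pi_v$ for $(p,q)$ and an M-path $\pi_l$ for $l = (s_l, t_l)$ that contain $\pi_v$ as a common subpath. Since $l$ is regular, $s_l \le t_l$, and $B(v) \subseteq B(l)$ means $s_l \le p \le q \le t_l$ (using $p_x \le q_x$ and $p \le q$). Now take any M-path $\pi_v$ from $p$ to $q$ inside $B(v)$; I would extend it to an M-path for $l$ by prepending a monotone staircase from $s_l$ up to $p$ (which exists inside $B(l)$ because $s_l \le p$) and appending a monotone staircase from $q$ up to $t_l$ (which exists because $q \le t_l$). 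The concatenation is monotone, hence an M-path for $l$ of length $d(s_l,t_l)$, and it contains all of $\pi_v$, giving $\|\pi_l \cap \pi_v\| = d(p,q)$. This establishes $\gamma(l,p,q) = d(p,q)$.

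For the flipped case (2), the point is that now $s_l$ is the bottom-left and $t_l$ is the top-left corner in the sense that $(s_l)_x \le (t_l)_x$ but $(s_l)_y \ge (t_l)_y$, so M-paths for $l$ are monotone decreasing in $y$, while M-paths for the regular pair $v$ are monotone increasing in $y$. Thus $\pi_l$ and $\pi_v$ can only share segments along which one of the coordinates is constant — and because their $y$-monotonicities are opposite, no vertical segment of positive length can lie on both (a vertical segment of $\pi_v$ is traversed upward, a vertical segment of $\pi_l$ downward, but more to the point, two M-paths of opposite $y$-direction through a common vertical segment would force a contradiction in the global monotonicity). Hence $\pi_l \cap \pi_v$ consists only of horizontal segments, and its total length is at most $d_x(p,q)$; symmetrically, by the same argument applied with the roles of $x$ and $y$ exchanged — here I use that a flipped pair is "regular after reflecting the $y$-axis," so horizontal overlaps are now the obstructed ones and vertical overlaps are free, bounding the shared length by $d_y(p,q)$ — wait, these two bounds are not simultaneously the right one, so I must be more careful: the correct statement is that one can share either a full horizontal stretch of length $d_x(p,q)$ or a full vertical stretch of length $d_y(p,q)$, but not both, yielding the bound $\max\{d_x(p,q), d_y(p,q)\}$. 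For the lower bound I exhibit the two obvious configurations: route $\pi_v$ as "horizontal-then-vertical" ($p \to (q_x, p_y) \to q$) and route $\pi_l$ so that it also contains the horizontal segment from $(p_x,p_y)$ to $(q_x,p_y)$ — possible since $\pi_l$ can descend from $t_l$ to height $p_y$ while still left of $p_x$, run right, then continue down and right to $s_l$; this shares length $d_x(p,q)$. Symmetrically, routing $\pi_v$ as "vertical-then-horizontal" and matching $\pi_l$ along the vertical segment $x = q_x$ shares length $d_y(p,q)$. Taking the better of the two gives $\gamma(l,p,q) \ge \max\{d_x(p,q), d_y(p,q)\}$, completing the proof.

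The main obstacle is the upper bound in case (2): one must argue cleanly that the opposite $y$-monotonicity of regular-$v$ and flipped-$l$ M-paths prevents simultaneously sharing a horizontal stretch and a vertical stretch, so that the shared set — which is necessarily a subgraph of both paths and hence a union of axis-aligned segments — can accumulate length in only "one direction's worth," namely $\max\{d_x, d_y\}$ rather than $d_x + d_y$. I would make this rigorous by observing that $\pi_l \cap \pi_v$ is a (possibly disconnected) subgraph of the path $\pi_v$, hence a union of subpaths of $\pi_v$; projecting onto the $x$- and $y$-axes and using that any vertical segment common to both paths would have to be traversed in opposite $y$-directions (contradicting that $\pi_l$ is globally $y$-monotone decreasing while any portion of $\pi_v$ reached from it along a shared segment must respect $\pi_v$'s $y$-monotone increasing order) forces all shared segments to be horizontal, capping the total at $d_x(p,q)$; the symmetric reflection argument caps it at $d_y(p,q)$ in the other regime, and a short case check on whether the shared portion contains any horizontal versus any vertical segment shows these two regimes are exhaustive, so the bound is $\max\{d_x(p,q), d_y(p,q)\}$. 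Figure~\ref{fig:share} illustrates both optimal configurations.
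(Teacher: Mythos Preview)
Your treatment of case~(1) is correct and matches the paper's reasoning (the paper gives no formal proof, only the figure caption stating that for regular $l$ any $\pi_v$ can be absorbed into some $\pi_l$).

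For case~(2), however, your upper-bound argument has a real gap. You repeatedly assert that opposite $y$-monotonicity ``forces all shared segments to be horizontal,'' via the idea that a common vertical segment would be traversed in opposite directions. This is not a contradiction: $\|\pi_l \cap \pi_v\|$ is a \emph{point-set} quantity, and nothing prevents a vertical segment from lying on both paths. Concretely, take $p=(0,0)$, $q=(2,2)$, $l=((0,2),(2,0))$, $\pi_v = (0,0)\to(1,0)\to(1,2)\to(2,2)$ and $\pi_l=(0,2)\to(1,2)\to(1,0)\to(2,0)$; then $\pi_l\cap\pi_v$ is exactly the vertical segment from $(1,0)$ to $(1,2)$. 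So the claim ``all shared segments are horizontal'' is false, and the symmetric claim is equally false; neither yields a bound by itself.

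The statement you actually need --- and that you correctly name in one sentence before reverting to the wrong sub-claim --- is that $\pi_l\cap\pi_v$ cannot contain \emph{both} a horizontal segment and a vertical segment of positive length. This is exactly what the paper's figure caption asserts (``any $\pi_l$ cannot contain both horizontal and vertical segments of any $\pi_v$''), and it is what must be proved. One clean way: suppose $H\subseteq\pi_l\cap\pi_v$ is horizontal and $V\subseteq\pi_l\cap\pi_v$ is vertical, both of positive length. Order them along $\pi_v$ (monotone in $x$ and $y$) and independently along $\pi_l$ (monotone in $x$, antitone in $y$); comparing the resulting inequalities on the $x$- and $y$-ranges of $H$ and $V$ gives a contradiction in every case. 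Once this is established, the intersection is purely horizontal (length $\le d_x(p,q)$) or purely vertical (length $\le d_y(p,q)$), whence $\gamma(l,p,q)\le\max\{d_x,d_y\}$.

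A minor point: for a flipped pair the convention $(s_l)_x\le(t_l)_x$ and $(s_l)_y\ge(t_l)_y$ puts $s_l$ at the \emph{upper}-left and $t_l$ at the \emph{lower}-right, not ``bottom-left'' and ``top-left'' as you wrote; this slip propagates into your description of the lower-bound construction (``descend from $t_l$'' etc.), though the intended construction is recoverable and correct.
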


\begin{figure}[tb]
\begin{tabular}{cc}
\begin{minipage}{0.45\hsize}
    \centering
    \includegraphics[width=0.8\hsize]{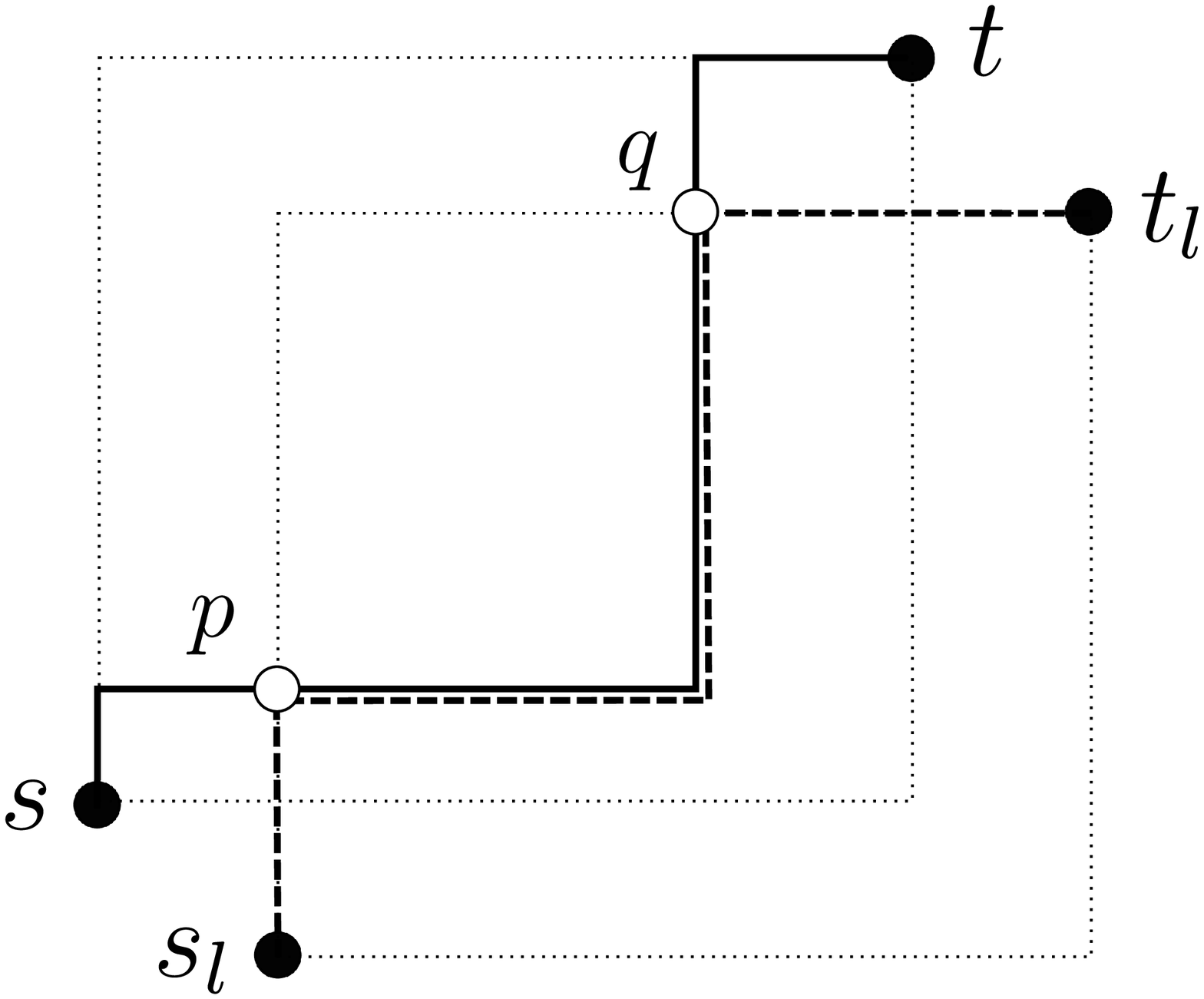}
    \subcaption{}
    \label{fig:share1}
\end{minipage}&
\begin{minipage}{0.45\hsize}
    \centering
    \includegraphics[width=0.8\hsize]{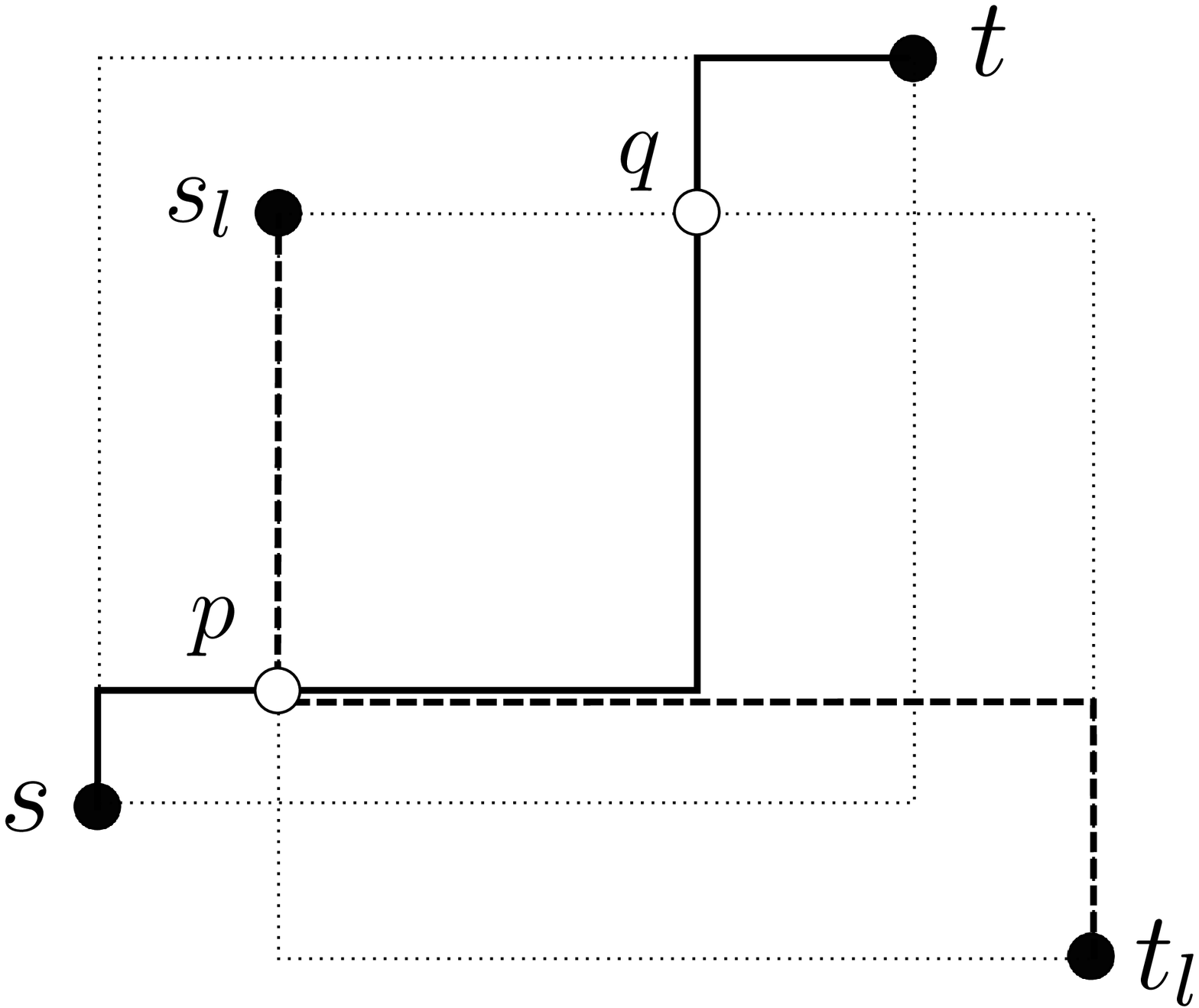}
    \subcaption{}
    \label{fig:share2}
\end{minipage}
\end{tabular}
\caption{(a) If $l = (s_l, t_l)$ is a regular pair, for any $\pi_v \in \Pi_\GMMN(p, q)$, some $\pi_l \in \Pi_\GMMN(l)$ completely includes $\pi_v$. (b) If $l= (s_l, t_l)$ is a flipped pair, while any $\pi_l \in \Pi_\GMMN(l)$ cannot contain both horizontal and vertical segments of any $\pi_v \in \Pi_\GMMN(p, q)$, one can choose $\pi_v \in \Pi_\GMMN(p, q)$ so that the whole of either horizontal or vertical segments of $\pi_v$ can be included in some $\pi_l \in \Pi_\GMMN(l)$.}\label{fig:share}
\end{figure}

\subsection{Reduction to the Longest Path Problem in DAGs}\label{sec:reduction}
In this section, we reduce GMMN[Star] to the longest path problem in DAGs. 
Let $\GMMN$ be a GMMN[Star] instance and $r = (s, t) \in \GMMN$ ($s \le t$) be the center of $\IG[\GMMN]$,
and we construct an auxiliary DAG $G$ from the subgrid $\mathcal{H}(\GMMN, r)$ as follows (see Figure~\ref{fig:dag_overview}).

\begin{figure}[tb]
\begin{tabular}{cc}
\begin{minipage}{0.45\hsize}
    \centering
    \includegraphics[width=0.8\hsize]{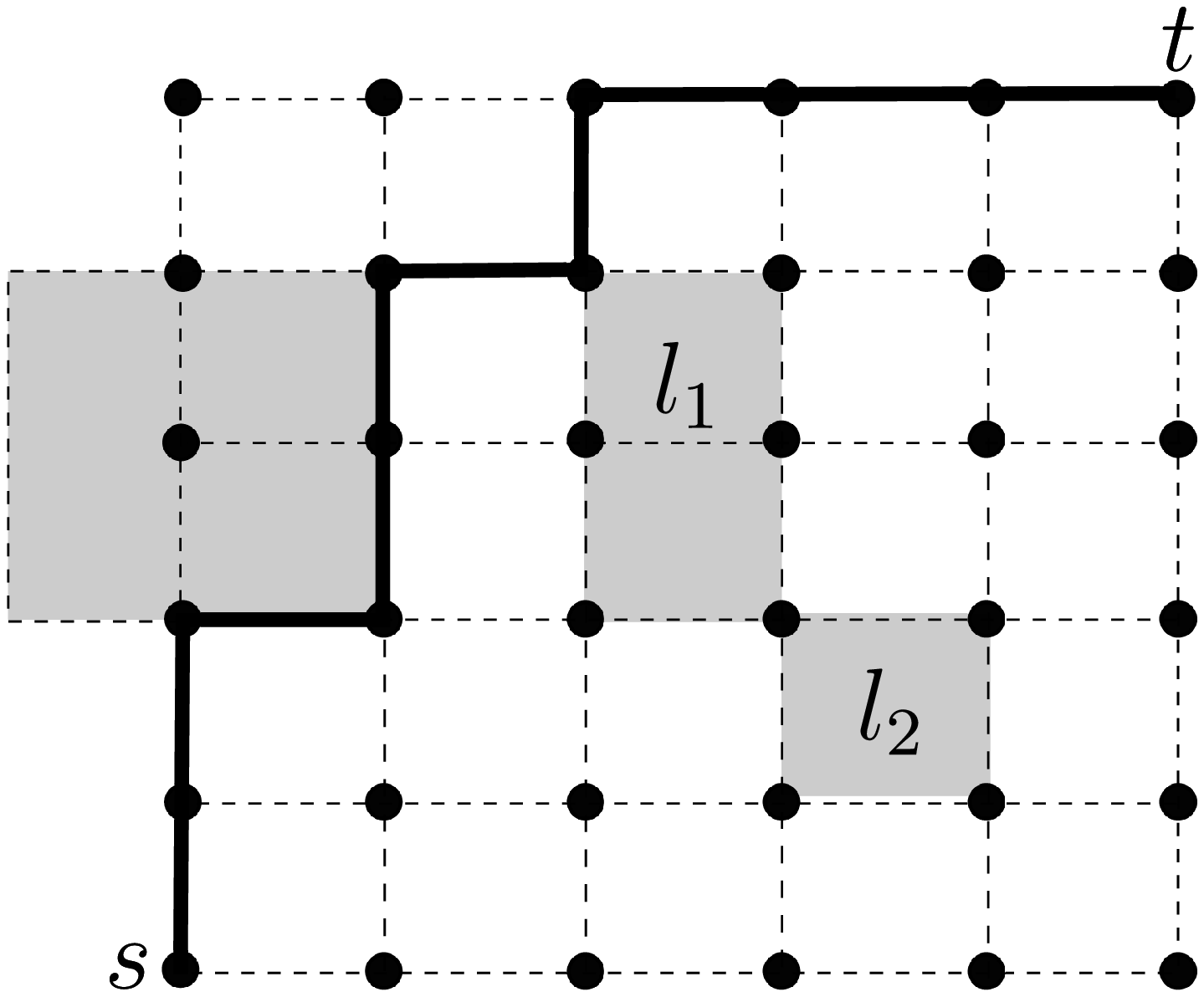}
    \subcaption{}
\end{minipage}& 
\begin{minipage}{0.45\hsize}
    \centering
    \includegraphics[width=0.8\hsize]{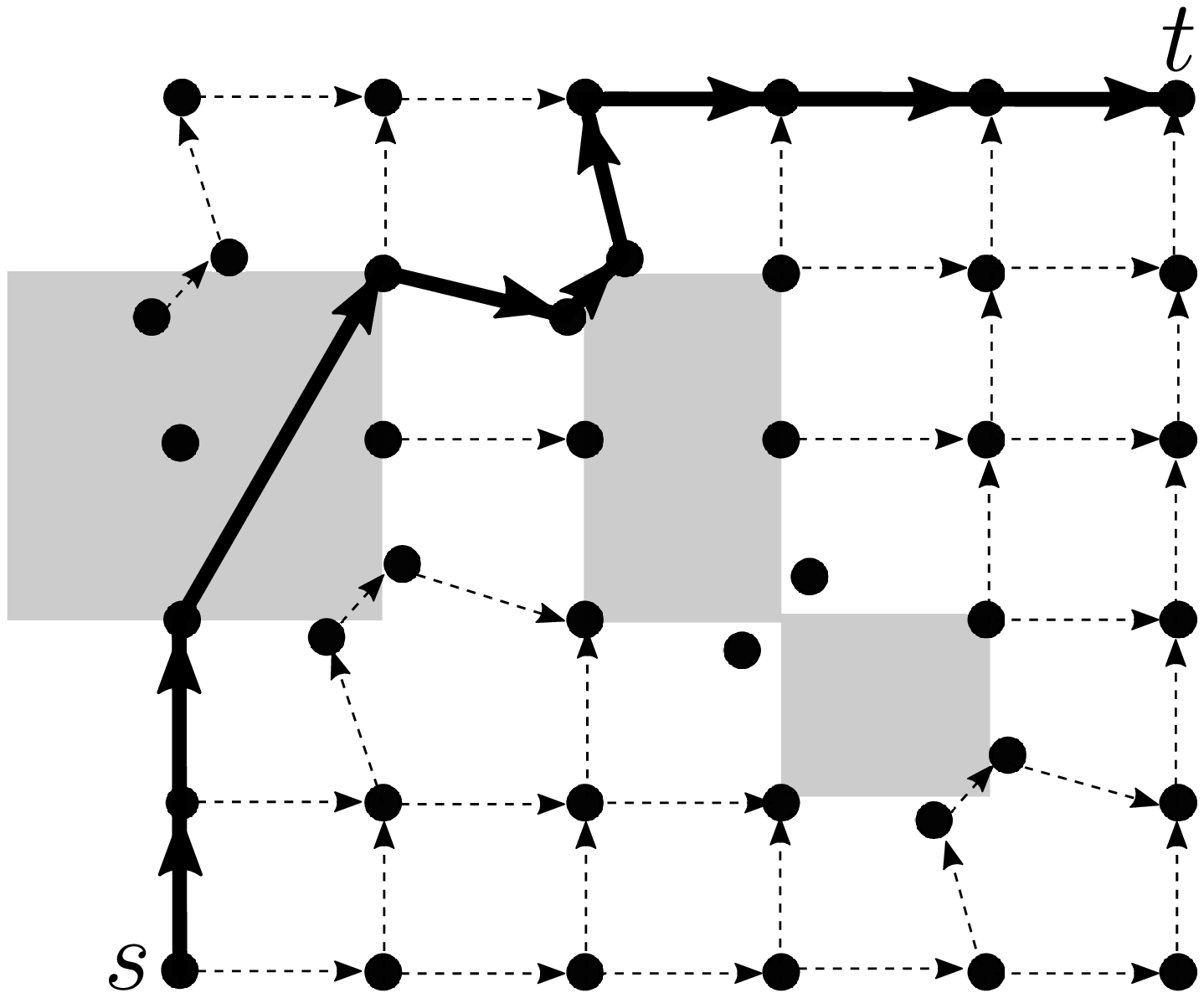}
    \subcaption{}
\end{minipage}\\
\begin{minipage}{0.4\hsize}
    \includegraphics[width=\hsize]{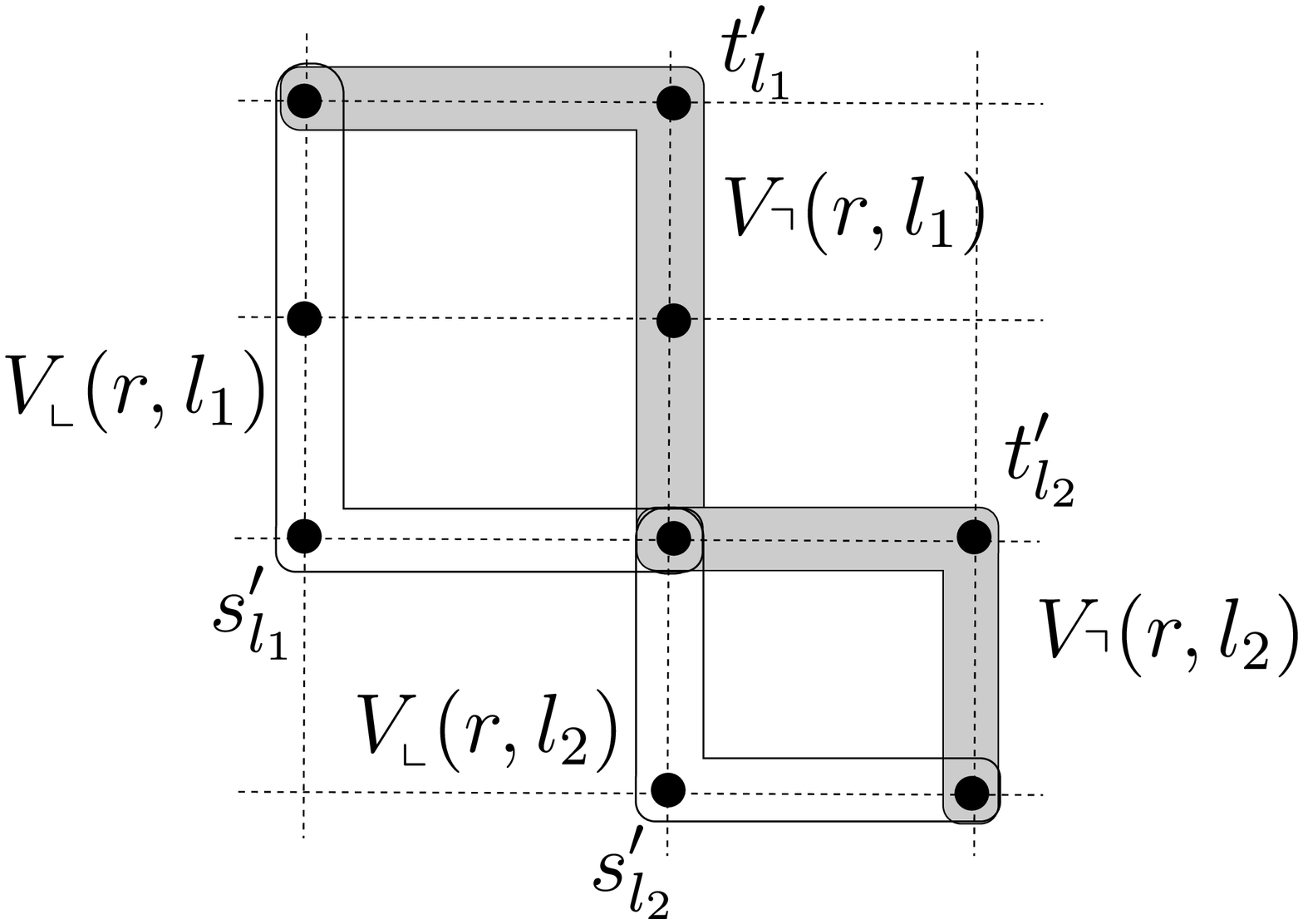}
    \subcaption{}
\end{minipage}&
\begin{minipage}{0.5\hsize}
    \centering
    \includegraphics[width=\hsize]{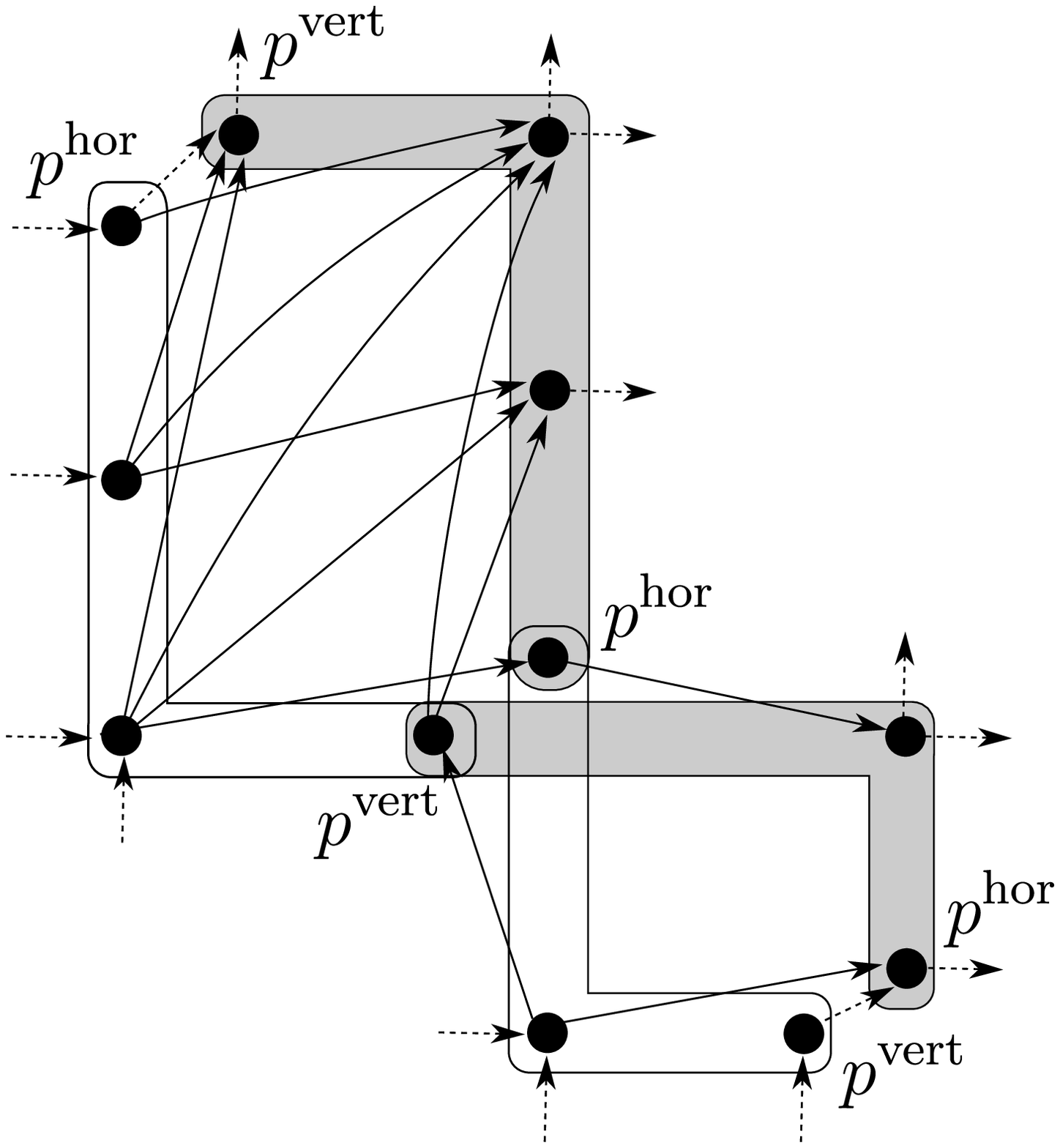}
    \subcaption{}
\end{minipage}
\end{tabular}
\caption{(a) An M-path for $r$ in the subgrid $\cH(\GMMN, r)$.
(b) The corresponding directed $s$--$t$ path in the auxiliary DAG $G$, where the dashed arcs are of length $0$.
(c) The boundary vertex sets for leaf pairs.
(d) The corresponding parts in $G$, where the length of each interior arc $(p, q)$ is $\gamma(l, p, q)$ for $p \in V_\llcorner(r, l)$ and $q \in V_\urcorner(r, l)$. }
\label{fig:dag_overview}
\end{figure}

First, for each edge $e = \{p, q\} \in E(\mathcal{H}(\GMMN, r))$ with $p \leq q$ (and $p \neq q$), we replace $e$ with an arc $(p, q)$ of length $0$.
For each leaf pair $l \in \GMMN - r$, let $s'_l$ and $t'_l$ denote the lower-left and upper-right corners of $B(r) \cap B(l)$, respectively, so that $(s'_l, t'_l)$ is a regular pair with $B(s'_l, t'_l) = B(r) \cap B(l)$.
If $(s'_l, t'_l)$ is degenerate, then we change the length of each arc $(p, q)$ with $p, q \in V(\mathcal{H}(\GMMN, r) \cap B(l))$ from $0$ to $\|pq\|$, which clearly reflects the (maximum) sharable length in $B(l)$.
Otherwise, the bounding box $B(s'_l, t'_l) \subseteq B(l)$ has a nonempty interior, and we define four subsets of $V(\mathcal{H}(\GMMN, r) \cap B(l))$ as follows:
\begin{align}
 V_{\llcorner}(r, l) &= \{p \in V(\mathcal{H}(\GMMN, r) \cap B(l)) \mid p_x = (s'_l)_x~\text{or}~p_y = (s'_l)_y\}, \\
 V_{\urcorner}(r, l) &= \{q \in V(\mathcal{H}(\GMMN, r) \cap B(l)) \mid q_x = (t'_l)_x~\text{or}~q_y = (t'_l)_y\},\\
 \Vdd(r, l) &= V_{\llcorner}(r, l) \cap V_{\urcorner}(r, l),\\
 V_{\blacksquare}(r, l) &= V(\mathcal{H}(\GMMN, r) \cap B(l)) \setminus (V_{\llcorner}(r, l) \cup V_{\urcorner}(r, l))\\
 &= \{z \in V(\mathcal{H}(\GMMN, r) \cap B(l)) \mid (s'_l)_x < z_x < (t'_l)_x~\text{and}~(s'_l)_y < z_y < (t'_l)_y\}.
\end{align}

As $r$ is regular, any M-path $\pi_r \in \Pi_\GMMN(r)$ intersecting $B(l)$ enters it at some $p \in V_{\llcorner}(r, l)$ and leaves it at some $q \in V_{\urcorner}(r, l)$, and the maximum sharable length $\gamma(l, p, q)$ in $B(l)$ is determined by Lemma~\ref{lem:sharable_segments}.
We remove all the interior vertices in $V_{\blacksquare}(r, l)$ (with all the incident arcs) and all the boundary arcs $(p, q)$ with $p, q \in V_{\llcorner}(r, l) \cup V_{\urcorner}(r, l)$.
Instead, for each pair $(p, q)$ of $p \in V_{\llcorner}(r, l)$ and $q \in V_{\urcorner}(r, l)$ with $p \leq q$ and $p \neq q$, we add an interior arc $(p, q)$ of length $\gamma(l, p, q)$.
Let $E_\mathrm{int}(l)$ denote the set of such interior arcs for each nondegenerate pair $l \in \GMMN - r$.

Finally, we care about the corner vertices in $\Vdd(r, l)$, which can be used for cheating if $l$ is flipped as follows.
Suppose that $p \in \Vdd(r, l)$ is the upper-left corner of $B(l)$, and consider the situation when the in-out pair $(p', q')$ of $\pi_r \in \Pi_\GMMN(r)$ for $l$ satisfies $p'_x = p_x < q'_x$ and $p'_y < p_y = q'_y$.
Then, $(p', q')$ is not degenerate, and by Lemma~\ref{lem:sharable_segments}, the maximum sharable length in $B(l)$ is $\gamma(l, p', q') = \max\left\{d_x(p', q'),\, d_y(p', q')\right\}$ as it is represented by an interior arc $(p', q')$, but one can take another directed $p'$--$q'$ path that consists of two arcs $(p', p)$ and $(p, q')$ in the current graph, whose length is $d_y(p', p) + d_x(p, q') = d_y(p', q') + d_x(p', q') >\gamma(l, p', q')$.
To avoid such cheating, for each $p \in \Vdd(r, l)$, we divide it into two distinct copies $p^\mathrm{hor}$ and $p^\mathrm{vert}$ (which are often identified with its original $p$ unless we need to distinguish them), and replace the endpoint $p$ of each incident arc $e$ with $p^\mathrm{hor}$ if $e$ is horizontal and with $p^\mathrm{vert}$ if vertical (see Figure~\ref{fig:dag_overview} (d)).
In addition, when $p$ is not shared by any other leaf pair,\footnote{Note that $p$ can be shared as corners of two different leaf pairs due to our definition of the intersection graph, and then leaving one bounding box means entering the other straightforwardly.}
we add an arc $(p^\mathrm{hor}, p^\mathrm{vert})$ of length $0$ if $p$ is the upper-left corner of $B(s'_l, t'_l)$ and an arc $(p^\mathrm{vert}, p^\mathrm{hor})$ of length $0$ if the lower-right, which represents the situation when $\pi_r \in \Pi_\GMMN(r)$ intersects $B(l)$ only at $p$.

Let $G$ be the constructed directed graph, and denote by $\ell(e)$ the length of each arc $e \in E(G)$.
The following two lemmas complete our reduction (see Figure~\ref{fig:dag_overview} again).

\begin{lemma}\label{lem:star_dag}
The directed graph $G$ is acyclic.
\end{lemma}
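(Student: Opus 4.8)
The plan is to exhibit a function on the vertices of $G$ that strictly increases along every arc; the existence of such a potential immediately rules out directed cycles. The natural candidate comes from the Hanan grid itself: every vertex of $G$ is (a copy of) a grid point $z \in V(\mathcal H(\GMMN, r))$, and I would use the coordinate sum $\phi(z) = z_x + z_y$ as the potential. By construction of $G$ from $\mathcal H(\GMMN, r)$, every arc $(p,q)$ — whether it is an original grid arc, a boundary-to-boundary interior arc in $E_{\mathrm{int}}(l)$, or one of the length-$0$ corner arcs $(p^{\mathrm{hor}}, p^{\mathrm{vert}})$ or $(p^{\mathrm{vert}}, p^{\mathrm{hor}})$ — goes from a point $p$ to a point $q$ with $p \le q$ and $p \ne q$. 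Hence $p_x \le q_x$ and $p_y \le q_y$ with at least one inequality strict, so $\phi(p) < \phi(q)$.

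The one subtlety is the splitting of corner vertices $p \in \Vdd(r, l)$ into copies $p^{\mathrm{hor}}$ and $p^{\mathrm{vert}}$: these two copies sit at the \emph{same} grid point, so $\phi$ does not distinguish them, and the added arcs $(p^{\mathrm{hor}}, p^{\mathrm{vert}})$ or $(p^{\mathrm{vert}}, p^{\mathrm{hor}})$ have $\phi$-difference zero. To handle this I would first argue that any directed cycle, if one existed, would have to be confined to a single grid point (since $\phi$ is nondecreasing along every arc and must return to its starting value around a cycle, every arc on the cycle has $\phi$-difference $0$, which by the previous paragraph forces every arc to be of the form $p^{\mathrm{hor}} \leftrightarrow p^{\mathrm{vert}}$ at one fixed point $p$). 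Then I would observe that at each such corner $p$ we add \emph{at most one} of the two arcs between $p^{\mathrm{hor}}$ and $p^{\mathrm{vert}}$ — and only when $p$ is not shared by another leaf pair — so there is no $2$-cycle; and since a corner is split into exactly two copies, a longer cycle among copies at a single point is impossible as well. Therefore no directed cycle exists.

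I expect the corner-splitting bookkeeping to be the only place requiring care: one must check the clause ``when $p$ is not shared by any other leaf pair'' does not secretly reintroduce a short cycle — e.g., if $p$ is the shared corner of two leaf pairs $l, l'$, then no $p^{\mathrm{hor}} \leftrightarrow p^{\mathrm{vert}}$ arc is added for either, and the arcs incident to $p^{\mathrm{hor}}$ and $p^{\mathrm{vert}}$ coming from the two bounding boxes are all genuine grid-direction arcs, which $\phi$ separates. Everything else is a one-line monotonicity check. I would write the argument as: define $\phi$, verify $\phi$ is strictly increasing along every arc except the corner-copy arcs where it is constant, conclude a cycle must live at a single split corner, and rule that out by the ``at most one arc, exactly two copies'' observation.
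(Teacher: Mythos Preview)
Your proposal is correct and follows essentially the same approach as the paper: the paper's proof simply observes that almost all arcs $(p,q)$ satisfy $p \le q$ with $p \ne q$, the only exceptions being the corner-copy arcs $(p^{\mathrm{hor}}, p^{\mathrm{vert}})$ or $(p^{\mathrm{vert}}, p^{\mathrm{hor}})$, and that at most one direction exists for each such $p$; hence no directed cycle. Your potential $\phi(z) = z_x + z_y$ is just the explicit scalar witness for this monotonicity, and your corner-splitting bookkeeping matches the paper's ``at most one direction'' observation.
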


\begin{proof} 
Almost all arcs are of form $(p, q)$ with $p \leq q$ and $p \neq q$.
The only exception is of form $(p^\mathrm{vert}, p^\mathrm{hor})$ or $(p^\mathrm{hor}, p^\mathrm{vert})$ for some $p \in \Vdd(r, l)$ with some $l \in \GMMN - r$, and at most one direction exists for each $p$ by definition.
Thus, $G$ contains no directed cycle.
\end{proof}

\begin{lemma}\label{lem:star_the_longest}
Any longest $s$--$t$ path $\pi^*_G$ in $G$ with respect to $\ell$ satisfies
\begin{align}
    \sum_{e \in E(\pi^*_G)} \ell(e) = \max_{\pi_r \in \Pi_\GMMN(r)}\left(\sum_{l \in \GMMN-r} \max_{\pi_l \in \Pi_\GMMN(l)}\|\pi_l \cap \pi_r\| \right).
\end{align}
\end{lemma}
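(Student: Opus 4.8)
The plan is to prove the two inequalities ``$\le$'' and ``$\ge$'' by exhibiting a length‑preserving correspondence between M‑paths in $\Pi_\GMMN(r)$ and directed $s$--$t$ paths in $G$. Since $G$ is acyclic by Lemma~\ref{lem:star_dag}, a longest $s$--$t$ path $\pi^*_G$ exists and has finite length. The first thing I would record is a structural description of an arbitrary directed $s$--$t$ path $\pi_G$ in $G$: outside all leaf bounding boxes it uses only the length‑$0$ arcs inherited from $\mathcal{H}(\GMMN,r)$, and whenever it meets a leaf box $B(l)$ it crosses it using either exactly one interior arc from $E_{\mathrm{int}}(l)$ or exactly one corner‑gadget arc at a vertex of $\Vdd(r,l)$. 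This holds because the construction deletes every interior vertex in $V_\blacksquare(r,l)$ and every boundary arc with both endpoints in $V_\llcorner(r,l)\cup V_\urcorner(r,l)$; hence after $\pi_G$ enters $B(l)$ at a vertex of $V_\llcorner(r,l)$ it is forced either onto an interior arc to a vertex of $V_\urcorner(r,l)$ or, if the entry vertex lies in $\Vdd(r,l)$, onto the corresponding gadget arc, and monotonicity of all remaining arcs forbids re‑entering $B(l)$.

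For ``$\le$'', take the longest $s$--$t$ path $\pi^*_G$ and turn it into an M‑path $\pi_r\in\Pi_\GMMN(r)$ together with an M‑path $\pi_l\in\Pi_\GMMN(l)$ for every leaf $l$: outside the leaf boxes let $\pi_r$ follow the Hanan‑grid arcs of $\pi^*_G$; for each interior arc $(p,q)\in E_{\mathrm{int}}(l)$ used by $\pi^*_G$ choose $\pi_v\in\Pi_\GMMN(p,q)$ and $\pi_l\in\Pi_\GMMN(l)$ attaining the maximum $\gamma(l,p,q)$ in~\eqref{eq:sharable_length} and splice $\pi_v$ in place of that arc; the corner‑gadget arcs are contracted and contribute nothing. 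Because every non‑gadget arc $(a,b)$ satisfies $a\le b$, the resulting object is a monotone $s$--$t$ staircase on $\mathcal{H}(\GMMN,r)$, so it has $L_1$‑length $d(s,t)$ and is a genuine M‑path in $\Pi_\GMMN(r)$; moreover $\pi_r\cap\mathcal{H}(\GMMN,l)=\pi_v$ for each relevant $l$, so $\|\pi_l\cap\pi_r\|=\|\pi_l\cap\pi_v\|=\gamma(l,p,q)$. Summing over leaves, $\sum_{e\in E(\pi^*_G)}\ell(e)=\sum_{l}\|\pi_l\cap\pi_r\|\le\sum_{l}\max_{\pi'_l\in\Pi_\GMMN(l)}\|\pi'_l\cap\pi_r\|$, which is bounded above by the right‑hand side of the lemma.

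For ``$\ge$'', fix an M‑path $\pi_r\in\Pi_\GMMN(r)$ attaining the outer maximum. For each leaf $l$ met by $\pi_r$, regularity of $r$ forces $\pi_r$ to enter $B(l)$ at some $p\in V_\llcorner(r,l)$ and leave it at some $q\in V_\urcorner(r,l)$, and $\pi_r\cap\mathcal{H}(\GMMN,l)\in\Pi_\GMMN(p,q)$. Build a directed $s$--$t$ path $\pi_G$ in $G$ by following $\pi_r$ outside the leaf boxes and replacing, inside each $B(l)$, the sub‑M‑path $\pi_r\cap\mathcal{H}(\GMMN,l)$ by the interior arc $(p,q)$ (or the appropriate corner‑gadget arc when $p=q$); monotonicity of $\pi_r$ makes $\pi_G$ a simple directed $s$--$t$ path. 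Since $\pi_r\cap\mathcal{H}(\GMMN,l)\in\Pi_\GMMN(p,q)$, we get $\gamma(l,p,q)\ge\max_{\pi_l\in\Pi_\GMMN(l)}\|\pi_l\cap(\pi_r\cap\mathcal{H}(\GMMN,l))\|=\max_{\pi_l\in\Pi_\GMMN(l)}\|\pi_l\cap\pi_r\|$, hence $\sum_{e\in E(\pi_G)}\ell(e)=\sum_{l}\gamma(l,p,q)$ is at least the right‑hand side, and so is the length of the longest $s$--$t$ path. Combining the two inequalities gives the claimed equality.

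The step I expect to be the main obstacle is the careful treatment of the corner vertices $\Vdd(r,l)$ and their split copies $p^{\mathrm{hor}},p^{\mathrm{vert}}$: in the ``$\le$'' direction one must verify that no directed $s$--$t$ path can ``cheat'' at a flipped leaf $l$ by routing through both $(p',p)$ and $(p,q')$ so as to collect $d_x(p',q')+d_y(p',q')$ instead of $\gamma(l,p',q')=\max\{d_x(p',q'),d_y(p',q')\}$ — this is precisely what the split into $p^{\mathrm{hor}},p^{\mathrm{vert}}$ together with the single directed gadget arc rules out — and one must track the footnote situation in which a single corner is shared by two leaf boxes, so that leaving one box through $\Vdd$ coincides with entering the other. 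The remaining ingredients — the monotone‑gluing argument, the use of the definition~\eqref{eq:sharable_length} of $\gamma$ together with Lemma~\ref{lem:sharable_segments}, and the reconciliation of the degenerate‑$(s'_l,t'_l)$ case (where the relevant arc lengths were switched from $0$ to $\|pq\|$) with the above bookkeeping — are routine.
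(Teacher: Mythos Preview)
Your proposal is correct and follows essentially the same two-direction argument as the paper's own proof: from a directed $s$--$t$ path in $G$ build an M-path $\pi_r$ (replacing each interior arc by an optimal $\pi_v\in\Pi_\GMMN(p,q)$) to get the ``$\le$'' direction, and from an optimal $\pi_r$ build a directed path in $G$ (replacing each crossing of $B(l)$ by the corresponding interior arc) to get ``$\ge$''. The paper's version is terser---it simply asserts ``by the definitions of $\gamma$ and $\ell$'' where you spell out the corner-gadget and degenerate-leaf bookkeeping---but the underlying correspondence is the same.
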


\begin{proof}
Fix a directed $s$--$t$ path $\pi_G$ in $G$.
By the definition of $G$ and Lemma \ref{lem:star_dag}, for each nondegenerate pair $l \in \GMMN-r$, the path $\pi_G$ uses at most one interior arc in $E_\mathrm{int}(l)$, and any other arc has a trivially corresponding edge in $\mathcal{H}(\GMMN, r)$ (including edges in a degenerate pair).
For each $l$ with $E(\pi_G) \cap E_\mathrm{int}(l) \neq \emptyset$, let $e_l = (p, q)$ be the unique arc in $E(\pi_G) \cap E_\mathrm{int}(l)$.
By the definitions of $\ell$ and $\gamma$, we have
\begin{align}\label{eq:pi_e_l}
\ell(e_l) = \gamma(l, p, q) = \max\left\{\|\pi_l \cap \pi_{e_l}\| \mid \pi_l \in \Pi_\GMMN(l),~\pi_{e_l} \in \Pi_\GMMN(p, q)\right\},
\end{align}
and hence one can construct an M-path $\pi_r \in \Pi_\GMMN(r)$ by replacing each $e_l$ with some M-path $\pi_{e_l} \in \Pi_\GMMN(p, q)$ attaining \eqref{eq:pi_e_l} such that
\begin{align}\label{eq:pathlength}
\sum_{e \in E(\pi_G)} \ell(e) = \sum_{l \in \GMMN-r} \max_{\pi_l \in \Pi_\GMMN(l)}\|\pi_l \cap \pi_r\|.
\end{align}

To the contrary, for any M-path $\pi_r \in \Pi_\GMMN(r)$, by the definitions of $\gamma$ and $\ell$, one can construct a directed $s$--$t$ path $\pi_G$ in $G$ of length at least the right-hand side of \eqref{eq:pathlength}, and we are done.
\end{proof}

\subsection{Computational Time Analysis with Simplified DAGs}\label{sec:star:runtime_analysis}
A longest path in a DAG $G$ is computed in $O(|V(G)| + |E(G)|)$ time by dynamic programming.
Although the subgrid $\cH(\GMMN, r)$ has $O(n^2)$ vertices and edges, the auxiliary DAG $G$ constructed in Section~\ref{sec:reduction} may have much more arcs due to $E_\mathrm{int}(l)$, whose size is $\Theta(|V_{\llcorner}(r, l)| \cdot |V_{\urcorner}(r, l)|)$ and can be $\Omega(n^2)$. 
This, however, can be always reduced to linear by modifying the boundary vertices and the incident arcs appropriately in order to avoid creating diagonal arcs in $B(l)$.
In this section, we simplify $G$ to $G'$ with $O(n^2)$ vertices and edges, which completes the proof of Theorem~\ref{thm:star}.

\begin{figure}[tb]
\begin{tabular}{cc}
\begin{minipage}{0.4\hsize}
    \centering
    \includegraphics[width=0.6\hsize]{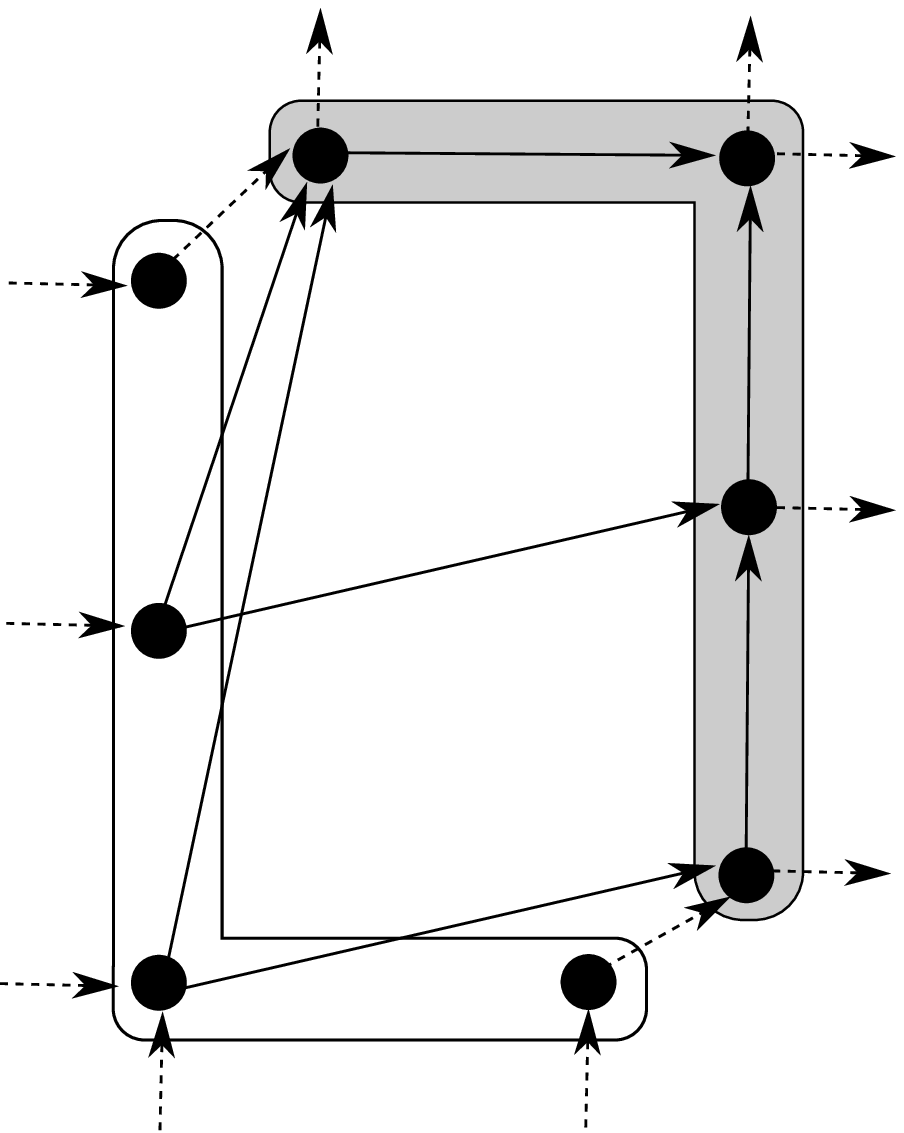}
    \subcaption{}
\end{minipage}&
\begin{minipage}{0.5\hsize}
    \centering
    \includegraphics[width=0.6\hsize]{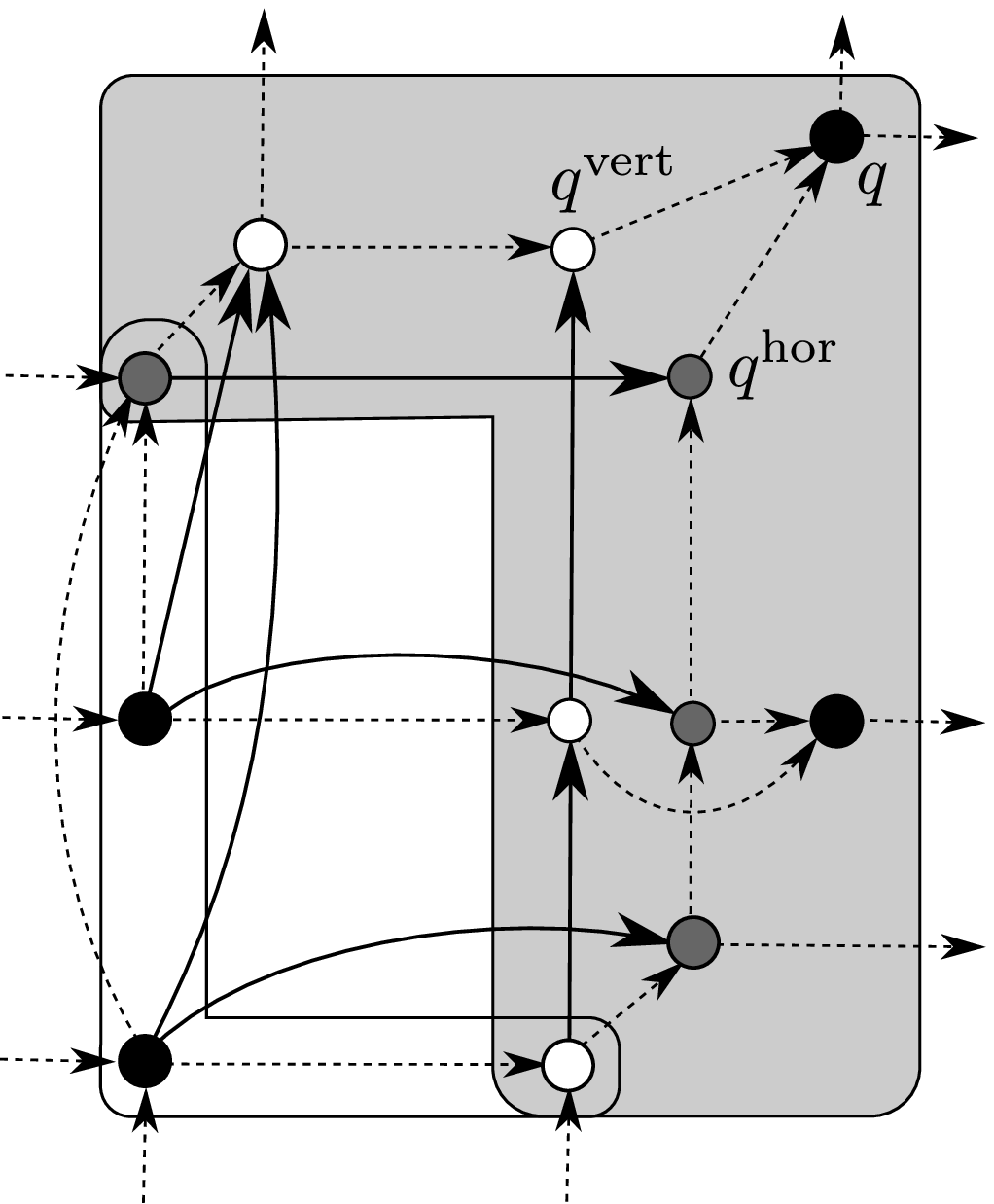}
    \subcaption{}
\end{minipage}
\end{tabular}
\caption{(a) Simplification for a regular pair. (b) Simplification for a flipped pair, where the gray and white vertices distinguish sharing horizontal and vertical segments in $\cH(\GMMN, r) \cap B(l)$, respectively, and the dashed arcs are of length $0$.}
\label{fig:simplified}
\end{figure}

Fix a nondegenerate leaf pair $l \in \GMMN - r$, and we modify the relevant part as follows (see Figure~\ref{fig:simplified}).
We first remove (precisely, avoid creating) the arcs $(p, q) \in E_\mathrm{int}(l)$ for $p \in V_{\llcorner}(r, l)$ and $q \in V_{\urcorner}(r, l)$ with either $p_x < q_x$ and $p_y < q_y$ (diagonal) or $p \in \Vdd(r, l)$.

If $l$ is a regular pair, then we keep the boundary vertices as they are.
Instead of the removed arcs, we add an boundary arc $(q_1, q_2)$ of length $\|q_1q_2\|$ for each $q_1, q_2 \in V_{\urcorner}(r, l)$ with $\{q_1, q_2\} \in E(\mathcal{H}(\GMMN, r))$ and $q_1 \leq q_2$.
Then, for any removed arc $e = (p, q) \in E_\mathrm{int}(l)$, there exists a $p$--$q$ path in $G'$, whose length is always equal to $\ell(e) = \gamma(l, p, q) = d(p, q)$ (cf.~Lemma~\ref{lem:sharable_segments}).

If $l$ is a flipped pair, then we need to care which directional (horizontal or vertical) segments are shared in $B(l)$.
For this purpose, we add two copies $q^\mathrm{hor}$ and $q^\mathrm{vert}$ of each boundary vertex $q \in V_{\urcorner}(r, l) \setminus \Vdd(r, l)$ with two arcs $(q^\mathrm{hor}, q)$ and $(q^\mathrm{vert}, q)$ of length $0$ (recall that, for each $p \in \Vdd(r, l)$, we have already added $p^\mathrm{hor}$ and $p^\mathrm{vert}$, and removed $p$ itself in $G$).
We also replace each remaining axis-aligned arc $(p, q) \in E_\mathrm{int}(l)$ with two arcs $(p, q^\mathrm{hor})$ of length $d_x(p, q)$ and $(p, q^\mathrm{vert})$ of length $d_y(p, q)$.\footnote{We may have $p \in V_\llcorner \setminus \Vdd$ and $q \in \Vdd$, and then the original arc in $G$ is already $(p, q^\mathrm{hor})$ or $(p, q^\mathrm{vert})$ of length $d(p, q) = \|pq\|$. Such an arc is replaced with two arcs $(p, q^\mathrm{hor})$ of length $d_x(p, q)$ and $(p, q^\mathrm{vert})$ of length $d_y(p, q)$.}
Instead of the removed diagonal arcs, we add two boundary arcs $(q_1^\mathrm{hor}, q_2^\mathrm{hor})$ of length $d_x(q_1, q_2)$ and $(q_1^\mathrm{vert}, q_2^\mathrm{vert})$ of length $d_y(q_1, q_2)$ for each $q_1, q_2 \in V_{\urcorner}(r, l)$ with $\{q_1, q_2\} \in E(\mathcal{H}(\GMMN, r))$ and $q_1 \leq q_2$.
Then, for any removed arc $e = (p, q) \in E_\mathrm{int}(l)$, there exist two $p$--$q$ path in $G'$, whose lengths are equal to $d_x(p, q)$ and $d_y(p, q)$.
As $\ell(e) = \gamma(l, p, q) = \max\left\{d_x(p, q),\, d_y(p, q)\right\}$ (cf.~Lemma~\ref{lem:sharable_segments}), the longest paths are preserved by this simplification.

As with Lemma~\ref{lem:star_dag}, we can easily confirm that $G'$ is acyclic.
Thus, we have obtained a simplified auxiliary DAG $G'$, 
and the following lemma completes the proof of Theorem~\ref{thm:star}.

\begin{lemma}
$|V(G')| = O(n^2)$ and $|E(G')| = O(n^2)$.
\end{lemma}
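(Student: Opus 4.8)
The plan is to bound the size of $G'$ by charging every vertex and arc of $G'$ either to a vertex/edge of the original Hanan subgrid $\cH(\GMMN, r)$, which has $O(n^2)$ vertices and edges, or directly to the $n$ leaf pairs. First I would recall that $V(\cH(\GMMN, r)) \subseteq V(\cH(\GMMN))$ and the latter is an at most $2n \times 2n$ grid, so $|V(\cH(\GMMN, r))| = O(n^2)$ and $|E(\cH(\GMMN, r))| = O(n^2)$ as well. Since the construction of $G'$ in Section~\ref{sec:star:runtime_analysis} never creates a diagonal (interior) arc, every arc of $G'$ is either axis-aligned with a corresponding edge of $\cH(\GMMN, r)$, or one of the $O(1)$ bookkeeping arcs per relevant boundary vertex (the length-$0$ arcs between the $\mathrm{hor}$/$\mathrm{vert}$ copies, and the $(q^\mathrm{hor}, q)$, $(q^\mathrm{vert}, q)$ arcs for flipped leaves).

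Next I would handle the vertex blow-up coming from the copies. Interior vertices in $V_\blacksquare(r, l)$ are deleted, so they cost nothing. A boundary vertex $q$ is split into at most the two copies $q^\mathrm{hor}, q^\mathrm{vert}$ (plus possibly the original $q$ retained as a merge point for flipped leaves), i.e. a constant blow-up per $(q, l)$. The only subtlety is that a single grid vertex can lie on the boundary of several leaf bounding boxes; however, a vertex $z \in V(\cH(\GMMN, r))$ lies in $B(l)$ only when its coordinates fall in the coordinate-interval of $l$, and one can bound the total number of (vertex, leaf) incidences along each grid line, or simply observe that each boundary vertex of $B(l)$ corresponds to a grid vertex and the number of boundary vertices of all $n$ bounding boxes inside an $O(n) \times O(n)$ grid is $O(n^2)$ in total, since each of the $O(n^2)$ grid vertices is split $O(1)$ times per incident box and we may crudely bound the number of incident boxes — the cleanest accounting is: $|V(G')| \le \sum_{l} O(|V(\cH(\GMMN, r) \cap B(l))|) + |V(\cH(\GMMN, r))| = O(n) \cdot O(n^2)$, which is too weak, so the right move is to charge each copy $q^\mathrm{hor}, q^\mathrm{vert}$ for leaf $l$ to the grid vertex $q$ and note $q$ is a boundary corner/edge vertex of $B(l)$, hence lies on one of the four sides of $B(l)$; summing the perimeter lengths $O(\sum_l (\text{width}_l + \text{height}_l))$ and using that all coordinates are among the $O(n)$ Hanan coordinates gives $O(n^2)$.

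For the arcs, I would argue analogously: the axis-aligned arcs of $G'$ inside any $B(l)$ lie along grid lines of $\cH(\GMMN, r)$, and replacing an original arc by its one or two $\mathrm{hor}$/$\mathrm{vert}$ split copies is a constant-factor blow-up, so these are $O(|E(\cH(\GMMN, r))|) = O(n^2)$ in total; the newly added boundary arcs $(q_1, q_2)$ (resp. $(q_1^\mathrm{hor}, q_2^\mathrm{hor})$, $(q_1^\mathrm{vert}, q_2^\mathrm{vert})$) are added only for $\{q_1, q_2\} \in E(\cH(\GMMN, r))$, so they too number $O(n^2)$; and the length-$0$ copy-linking arcs are $O(1)$ per split vertex, hence $O(n^2)$. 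Finally, outside all bounding boxes $G'$ coincides with (a subdivision of) $\cH(\GMMN, r)$ with edges oriented, contributing $O(n^2)$. Summing all contributions yields $|V(G')| = O(n^2)$ and $|E(G')| = O(n^2)$.

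The main obstacle I anticipate is the careful counting for vertices and arcs that belong to several leaf bounding boxes simultaneously: a naive "$n$ leaves $\times$ $O(n^2)$ grid" bound gives only $O(n^3)$, so the argument must exploit that each relevant vertex/arc is charged to a \emph{distinct} grid vertex/edge (up to the $O(1)$ copy multiplicity) rather than re-counted once per leaf. The key structural fact that makes this work is precisely that the simplification forbids diagonal arcs, so every arc of $G'$ sits on the Hanan grid and inherits its $O(n^2)$ bound, and each original grid vertex is split into at most a constant number of copies regardless of how many leaf boundaries pass through it — because the split is into exactly the $\mathrm{hor}$ and $\mathrm{vert}$ types, not per-leaf types.
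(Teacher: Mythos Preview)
Your overall strategy matches the paper's: bound $|V(G')|$ by observing that each grid vertex of $\cH(\GMMN, r)$ is split into only $O(1)$ copies (the paper writes this as $|V(G')| \le 3|V(G)| \le 6|V(\cH(\GMMN, r))|$), and bound $|E(G')|$ by separating arcs outside all $B(l)$, boundary arcs, and the arcs created per leaf. Your final paragraph states the key fact correctly: the splitting is into $\mathrm{hor}/\mathrm{vert}$ types, not per-leaf types, so the vertex blow-up is genuinely $O(1)$ per grid vertex. After the false starts, this is the paper's argument.

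There is, however, one real gap in your arc count. You assert that ``the axis-aligned arcs of $G'$ inside any $B(l)$ lie along grid lines of $\cH(\GMMN, r)$, and replacing an original arc by its one or two $\mathrm{hor}/\mathrm{vert}$ split copies is a constant-factor blow-up, so these are $O(|E(\cH(\GMMN, r))|)$.'' This is not right for the surviving interior arcs in $E_{\mathrm{int}}(l)$: an axis-aligned arc $(p,q)$ with $p \in V_\llcorner(r,l)\setminus\Vdd(r,l)$ and $q \in V_\urcorner(r,l)$ is a single long arc spanning the entire width or height of $B(l)\cap B(r)$; it does not correspond to a single edge of $\cH(\GMMN, r)$, and your parenthetical list of ``bookkeeping arcs per boundary vertex'' explicitly names only the length-$0$ arcs, not these. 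So as written you have not bounded this class of arcs.

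The fix is exactly the perimeter argument you already introduced for vertices but did not re-invoke here: for each leaf $l$ with $\cH(\GMMN,r)\cap B(l)$ an $a_l \times b_l$ grid, the number of surviving interior arcs and added boundary arcs is $O(a_l + b_l)$, and $\sum_l (a_l + b_l) = O(n^2)$. The paper obtains this sum via the disjointness of the leaf boxes inside $B(r)$ (since $\IG[\GMMN]$ is a star, no two leaves share a Hanan-grid edge), writing $\sum_{l} 6|V_\urcorner(r,l)| \le 6|V(\cH(\GMMN, r))| = O(n^2)$. Your alternative ``$n$ leaves times $O(n)$ boundary vertices each'' also works; you just need to actually apply it to the interior arcs rather than the incorrect grid-edge correspondence.
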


\begin{proof}
For the vertex set, by definition, we see $|V(G')| \leq 3|V(G)| \leq 6|V(\cH(\GMMN, r))| = O(n^2)$.
For the arc set, since all the arcs outside of $\bigcup_{l \in \GMMN - r} B(l)$ directly come from the subgrid $\cH(\GMMN, r)$, it suffices to show that the number of axis-aligned interior arcs and additional boundary arcs is bounded by $O(n^2)$ in total.
By definition, if $\mathcal{H}(\GMMN, r) \cap \mathcal{H}(\GMMN, l)$ is an $a \times b$ grid, then the number of such arcs is at most $3(a + b)$ in the regular case and at most $6(a + b)$ in the flipped case.
Thus, the total number is at most
\begin{align}
\sum_{l \in \GMMN-r} 6|V_{\urcorner}(l)| \leq 6|V(\cH(\GMMN, r))| = O(n^2),
\end{align}
and we are done.
\end{proof}

\section{An $O(n^5)$-Time Algorithm for GMMN[Tree]}\label{sec:tree}
In this section, we present an $O(n^5)$-time algorithm for GMMN[Tree], which is the main target in this paper and stated as follows.

\begin{problem}[{GMMN[Tree]}]
\begin{description}
\setlength{\itemsep}{0mm}
\item[]
\item[Input:] A set $\GMMN \subseteq \mathbb{R}^2 \times \mathbb{R}^2$ of $n$ pairs whose intersection graph $\IG[\GMMN]$ is a tree.
\item[Goal:] Find an optimal network $N = (\pi_v)_{v \in \GMMN} \in \Opt(\GMMN)$.
\end{description}
\end{problem}

For a GMMN[Tree] instance $\GMMN$, we choose an arbitrary pair $r \in \GMMN$ as the root of the tree $\IG[\GMMN]$; in particular, when $\IG[\GMMN]$ is a star, we regard the center as the root.
The basic idea of our algorithm is dynamic programming on the tree $\IG[\GMMN]$ from the leaves toward $r$.
Each subproblem reduces to the longest path problem in DAGs like the star case, which is summarized as follows.

Fix a pair $v = (s_v, t_v) \in \GMMN$.
If $v \neq r$, then there exists a unique parent $u = \mathrm{Par}(v)$ in the tree $\IG[\GMMN]$ rooted at $r$,
and there are $O(n^2)$ possible in-out pairs $(p_v, q_v)$ of $\pi_u \in \Pi_\GMMN(u)$ for $v$.
We virtually define $p_v = q_v = \epsilon$ for the case when we do not care the shared length in $B(u)$, e.g., $v = r$ or $\pi_u$ is disjoint from $B(v)$.
Let $\GMMN_v$ denote the vertex set of the subtree of $\IG[\GMMN]$ rooted at $v$ (including $v$ itself).
For every possible in-out pair $(p_v, q_v)$, as a subproblem, we compute the maximum total length $\mathrm{dp}(v, p_v, q_v)$ of sharable segments in $B(\GMMN_v) = \bigcup_{w \in \GMMN_v}B(w)$, i.e.,
\begin{align}
    \rmdp(v, \epsilon, \epsilon) &= \max\left\{ \sum_{w \in \GMMN_v - v} \| \pi_w \cap \pi_{\mathrm{Par}(w)} \| \Biggm| (\pi_{w})_{w \in \GMMN} \in \Feas(\GMMN) \right\},\\
    \mathrm{dp}(v, p_v, q_v) &= \max\left\{ \sum_{w \in \GMMN_v} \| \pi_w \cap \pi_{\mathrm{Par}(w)} \| \Biggm| (\pi_{w})_{w \in \GMMN} \in \Feas(\GMMN),~\pi_u[v] \in \Pi_\GMMN(p_v, q_v) \right\}. 
\end{align}

By definition, the goal is to compute $\mathrm{dp}(r, \epsilon, \epsilon)$.
If $v$ is a leaf in $\IG[\GMMN]$, then $\GMMN_v = \{v\}$.
In this case, $\mathrm{dp}(v, p_v, q_v)$ is the maximum length of segments shared by two M-paths $\pi_v \in \Pi_\GMMN(v)$ and $\pi_u \in \Pi_\GMMN(u)$ with $\pi_u[v] \in \Pi_\GMMN(p_v, q_v)$, which is easily determined (cf.~Lemma~\ref{lem:sharable_segments}).
Otherwise, using the computed values $\mathrm{dp}(w, p_w, q_w)$ for all children $w$ of $v$ and all possible in-out pairs $(p_w, q_w)$, we reduce the task to the computation of a longest $s_v$--$t_v$ path in an auxiliary DAG, as with finding an optimal M-path for the center pair in the star case.

\subsection{Constructing Auxiliary DAGs for Subproblems}\label{sec:DAG_tree}
Let $v = (s_v, t_v) \in \GMMN$, which is assumed to be regular without loss of generality.
If $v = r$, then let $p_v = q_v = \epsilon$; otherwise, let $u = \mathrm{Par}(v)$ be its parent, and fix a possible in-out pair $u' = (p_v, q_v)$ of $\pi_u \in \Pi_\GMMN(u)$ for $v$ (including the case $p_v = q_v = \epsilon$).
Let $C_v \subseteq \GMMN_v$ be the set of all children of $v$.
By replacing $r$ and $\GMMN - r$ in Section~\ref{sec:reduction} with $v$ and $C_v + u'$ (or $C_v$ if $p_v = q_v = \epsilon$), respectively, we construct the same auxiliary directed graph, denoted by $G[v, p_v, q_v]$.
We then change the length of each interior arc $(p_w, q_w) \in E_\mathrm{int}(w)$ for each child $w \in C_v$ from $\gamma(w, p_w, q_w)$ to $\rmdp(w, p_w, q_w) - \rmdp(w, \epsilon, \epsilon)$, so that it represents the difference of the total sharable length in $B(\GMMN_w) = \bigcup_{w \in \GMMN_w}B(w')$ between the cases when an M-path for $v$ intersects $B(w)$ (enters at $p_w$ and leaves at $q_w$) and when an M-path for $v$ is ignored. 
As with Lemma \ref{lem:star_dag}, the graph $G[v, p_v, q_v]$ is acyclic.
The following lemma completes the reduction of computing $\mathrm{dp}(v, p_v, q_v)$ to finding a longest $s_v$--$t_v$ path in $G[v, p_v, q_v]$. 

\begin{lemma}\label{lem:tree_the_longest}
Let $\pi_G^*$ be a longest $s_v$--$t_v$ path in $G[v, p_v, q_v]$ with respect to $\ell$. We then have
\begin{align}
    \mathrm{dp}(v, p_v, q_v) = \sum_{e \in E(\pi_G^*)} \ell(e) + \sum_{w \in C_v} \mathrm{dp}(w, \epsilon, \epsilon). \label{eq:opt_dag_longest}
\end{align}
\end{lemma}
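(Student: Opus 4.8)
The plan is to establish \eqref{eq:opt_dag_longest} by the same two-way argument as in Lemma~\ref{lem:star_the_longest}, now accounting for the extra contributions coming from the subtrees hanging below each child of $v$. The key bookkeeping identity is the decomposition of the sum $\sum_{w \in \GMMN_v} \|\pi_w \cap \pi_{\parent(w)}\|$ over the subtree: it splits into (i) the terms $\|\pi_w \cap \pi_v\|$ for the children $w \in C_v$ of $v$ (plus the term $\|\pi_v \cap \pi_u\|$ if $v \neq r$ and $(p_v, q_v) \neq (\epsilon, \epsilon)$), and (ii) for each child $w$, the remaining sum $\sum_{w' \in \GMMN_w - w} \|\pi_{w'} \cap \pi_{\parent(w')}\|$, which is exactly what $\rmdp(w, p_w, q_w)$ or $\rmdp(w, \epsilon, \epsilon)$ maximizes once the in-out pair of $\pi_v$ for $w$ is fixed. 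Because $\IG[\GMMN]$ is a tree, the bounding boxes $B(\GMMN_w)$ for distinct children $w$ of $v$ overlap only inside $B(v)$ (they share no edge of the Hanan grid outside $B(v)$, and inside $B(v)$ sharing is mediated solely through $\pi_v$), so these per-child maximizations are independent of one another once $\pi_v$ is fixed; this is the tree analogue of the observation opening Section~\ref{sec:star}.

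First I would prove the inequality $\mathrm{dp}(v, p_v, q_v) \geq \sum_{e \in E(\pi_G^*)} \ell(e) + \sum_{w \in C_v} \rmdp(w, \epsilon, \epsilon)$. Given the longest $s_v$--$t_v$ path $\pi_G^*$ in $G[v, p_v, q_v]$, I build a feasible solution realizing the right-hand side: exactly as in the proof of Lemma~\ref{lem:star_the_longest}, $\pi_G^*$ uses at most one interior arc $e_w \in E_{\mathrm{int}}(w)$ per child $w$, and replacing each such arc by an M-path in $\Pi_\GMMN(p_w, q_w)$ (and each non-interior arc by its corresponding Hanan-grid edge) yields an M-path $\pi_v \in \Pi_\GMMN(v)$. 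For each child $w$ with an interior arc $e_w = (p_w, q_w)$ on $\pi_G^*$, I invoke the definition of $\rmdp(w, p_w, q_w)$ to choose M-paths for all of $\GMMN_w$ attaining it; for each child $w$ with no interior arc on $\pi_G^*$, I choose M-paths for $\GMMN_w$ attaining $\rmdp(w, \epsilon, \epsilon)$. Since $\ell(e_w) = \rmdp(w, p_w, q_w) - \rmdp(w, \epsilon, \epsilon)$ by construction of $G[v, p_v, q_v]$, summing over children gives total subtree contribution $\sum_{w} \ell(e_w) + \sum_{w \in C_v} \rmdp(w, \epsilon, \epsilon)$; adding the contributions of the non-interior arcs of $\pi_G^*$ (which equal the shared lengths of $\pi_v$ with degenerate children and with $u'$ when $(p_v,q_v)\neq(\epsilon,\epsilon)$) recovers exactly $\sum_{e \in E(\pi_G^*)} \ell(e) + \sum_{w \in C_v} \rmdp(w, \epsilon, \epsilon)$, and this is a valid lower bound for $\mathrm{dp}(v, p_v, q_v)$.

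For the reverse inequality, I take an optimal feasible solution $(\pi_{w})_{w \in \GMMN}$ witnessing $\mathrm{dp}(v, p_v, q_v)$ (with $\pi_u[v] \in \Pi_\GMMN(p_v, q_v)$ when required). Its M-path $\pi_v$ traces a directed $s_v$--$t_v$ path in the underlying subgrid; I convert this into a directed $s_v$--$t_v$ path $\pi_G$ in $G[v, p_v, q_v]$, routing through the interior arc $e_w = (p_w, q_w)$ whenever $\pi_v$ enters $B(w)$ at $p_w$ and leaves at $q_w$, and through a degenerate/boundary arc otherwise. By the tree-independence noted above, for each child $w$ the partial sum $\sum_{w' \in \GMMN_w} \|\pi_{w'} \cap \pi_{\parent(w')}\|$ is at most $\rmdp(w, p_w, q_w)$ (or $\rmdp(w, \epsilon, \epsilon)$ if $\pi_v$ misses $B(w)$), and $\|\pi_v \cap \pi_w\| = \|\pi_w \cap \pi_{\parent(w)}\| \leq \gamma(w, p_w, q_w)$, so the objective is bounded by $\sum_w \ell(e_w) + \sum_{w \in C_v} \rmdp(w, \epsilon, \epsilon) + (\text{degenerate and } u' \text{ shared lengths}) = \ell(\pi_G) + \sum_{w \in C_v} \rmdp(w, \epsilon, \epsilon) \leq \sum_{e \in E(\pi_G^*)} \ell(e) + \sum_{w \in C_v} \rmdp(w, \epsilon, \epsilon)$, using maximality of $\pi_G^*$. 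Combining the two inequalities gives \eqref{eq:opt_dag_longest}.

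The main obstacle I expect is verifying the tree-independence claim rigorously — namely that, having fixed $\pi_v$, the choices of M-paths inside the different subtrees $\GMMN_w$ cannot interfere with one another or with $\pi_v$ outside what is already accounted for by the in-out pairs. This requires checking that an edge of $\cH(\GMMN)$ lying in $B(\GMMN_w) \cap B(\GMMN_{w'})$ for two distinct children $w \neq w'$ must lie in $B(v)$, which follows from $\IG[\GMMN]$ being a tree (any path between $w$ and $w'$ in the tree passes through $v$, so $B(w)$ and $B(w')$ share no grid edge, and more generally $B(\GMMN_w)$ and $B(\GMMN_{w'})$ meet only within $B(v)$) — but the corner-sharing subtlety flagged in the footnote of Section~\ref{sec:specialization} (two bounding boxes meeting at a single point) needs care, and the handling of the $\Vdd$ copies $p^\mathrm{hor}, p^\mathrm{vert}$ must be threaded through the conversion between $\pi_v$ and $\pi_G$ exactly as in the star case so that no "cheating" path inflates $\ell(\pi_G)$.
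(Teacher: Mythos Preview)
Your proposal is correct and follows essentially the same two-direction argument as the paper's proof: build a feasible solution from a longest path (using $\ell(e_w)=\rmdp(w,p_w,q_w)-\rmdp(w,\epsilon,\epsilon)$), and conversely convert $\pi_v$ from an optimal solution into a directed path whose length upper-bounds the objective. The ``tree-independence'' obstacle you flag is not a genuine difficulty---the paper handles it implicitly via induction and the fact that distinct children $w,w'$ of $v$ are nonadjacent in the tree $\IG[\GMMN]$, so $E(\cH(\GMMN,w))\cap E(\cH(\GMMN,w'))=\emptyset$ and the subpaths $\pi_v[w]$ (hence the per-child optimizations) are automatically independent once the in-out pairs are fixed.
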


\begin{proof}
If $v$ is a leaf in $\IG[\GMMN]$, then $C_v = \emptyset$, and hence it immediately follows from Lemma \ref{lem:star_the_longest}.

Suppose that $v$ is not a leaf in $\IG[\GMMN]$, and let $\pi_G$ be a directed $s_v$--$t_v$ path in $G[v, p_v, q_v]$.
We show that there exists a feasible solution $(\pi_w)_{w \in \GMMN} \in \Feas(\GMMN)$ with $\pi_u[v] \in \Pi_\GMMN(p_v, q_v)$ and
\begin{align}
    \sum_{w \in \GMMN_v} \| \pi_{w} \cap \pi_{\mathrm{Par}(w)} \| = \sum_{e \in E(\pi_G)} \ell(e) + \sum_{w \in C_v} \mathrm{dp}(w, \epsilon, \epsilon). \label{eq:pi_G_dp}
\end{align}
By definition, for each $w \in C_v + u'$, the path $\pi_G$ uses at most one arc in $E_\mathrm{int}(w)$.
For each $w \in C_v$ with $E(\pi_G) \cap E_\mathrm{int}(w) \neq \emptyset$, let $e_w = (p_w, q_w)$ be the unique arc in $E(\pi_G) \cap E_\mathrm{int}(w)$, and then
  $\ell(e_w) = \mathrm{dp}(w, p_w, q_w) - \mathrm{dp}(w, \epsilon, \epsilon)$.
Hence, by defining $p_w = q_w = \epsilon$ for each $w \in C_v$ with $E(\pi_G) \cap E_\mathrm{int}(w) = \emptyset$, the right-hand side of \eqref{eq:pi_G_dp} is rewritten as
\begin{align}
  \sum_{w \in C_v} \mathrm{dp}(w, p_w, q_w) + \gamma(u'),
\end{align}
where $\gamma(u') = \gamma(u', p_{u'}, q_{u'})$ if there exists $(p_{u'}, q_{u'}) \in E(\pi_G) \cap E_\mathrm{int}(u')$ and $\gamma(u') = 0$ otherwise.

By the definition of $\mathrm{dp}$, for each $w \in C_v$, there exists an M-path $\tilde\pi_{v, w} \in \Pi_\GMMN(p_w, q_w)$ appearing as $\pi_v[w] = \pi_v \cap \cH(\GMMN, w)$ in some feasible solution $N = (\pi_w)_{w \in \GMMN} \in \Feas(\GMMN)$ such that
\begin{align}
    \sum_{w' \in \GMMN_w} \| \pi_{w'} \cap \pi_{\mathrm{Par}(w')} \| = \mathrm{dp}(w, p_w, q_w). 
\end{align}
If $\gamma(u') = 0$, then $N$ (with replacing $\pi_u$ so that $\|\pi_v \cap \pi_u\| = 0$ if necessary) is a desired network.
Otherwise, there exists a unique arc $(p_{u'}, q_{u'}) \in E(\pi_G) \cap E_\mathrm{int}(u')$.
By choosing $\tilde\pi_{v, u'} \in \Pi_\GMMN(p_{u'}, q_{u'})$ appropriately (cf.~Lemma~\ref{lem:sharable_segments}), one can replace $\pi_v$ as well as $\pi_u$ so that $\|\pi_v \cap \pi_u\| = \gamma(u', p_{u'}, q_{u'})$ and $\pi_u[v] \in \Pi_\GMMN(p_v, q_v)$, and we are done.

To the contrary, we show that, for any feasible solution $N = (\pi_w)_{w \in \GMMN} \in \Feas(\GMMN)$ with $\pi_u[v] \in \Pi_\GMMN(p_v, q_v)$, there exists a directed $s_v$--$t_v$ path $\pi_G$ in $G[v, p_v, q_v]$ of length at least
\begin{align}
  \sum_{w \in \GMMN_v} \| \pi_w \cap \pi_{\mathrm{Par}(w)} \| - \sum_{w \in C_v} \mathrm{dp}(w, \epsilon, \epsilon)
  &= \sum_{w \in C_v} \left(\sum_{w' \in \GMMN_w} \| \pi_{w'} \cap \pi_{\mathrm{Par}(w')} \| - \mathrm{dp}(w, \epsilon, \epsilon)\right) + \|\pi_v \cap \pi_u\|.\label{eq:N_dp}
\end{align}
The proof is done by induction from the leaves to the root in $\IG[\GMMN]$.
For each $w \in C_v + u'$, suppose that $\pi_v[w] \in \Pi_\GMMN(p_w, q_w)$, where we virtually define $p_w = q_w = \epsilon$ if $\pi_v$ is disjoint from $B(w)$.
Then, by taking $\pi_G$ so that $(p_w, q_w) \in E(\pi_G)$ for each $w \in C_v + u'$ unless $p_w = q_w = \epsilon$,
we obtain the following relation from the induction hypothesis (when $v$ is not a leaf) and the definitions of $\ell$ and $\mathrm{dp}$:
\begin{align}
\sum_{e \in E(\pi_G)} \ell(e) = \sum_{w \in C_v} \left(\mathrm{dp}(w, p_w, q_w) - \mathrm{dp}(w, \epsilon, \epsilon)\right) + \gamma(u') \ge \mathrm{(R.H.S.~of~\eqref{eq:N_dp})},
\end{align}
where $\gamma(u') = 0$ if $p_{u'} = q_{u'} = \epsilon$ and $\gamma(u') = \gamma(u', p_{u'}, q_{u'})$ otherwise.
Thus we are done.
\end{proof}

\subsection{Computational Time Analysis}\label{sec:time_tree}
This section completes the proof of Theorem~\ref{thm:tree}.
For a pair $v \in \GMMN$, suppose that $\cH(\GMMN, v)$ is an $a_v \times b_v$ grid graph.
For each possible in-out pair $(p_v, q_v)$, to compute $\mathrm{dp}(v, p_v, q_v)$, we find a longest path in the DAG $G[v, p_v, q_v]$ constructed in Section~\ref{sec:DAG_tree}, which has $O(a_v b_v) = O(n^2)$ vertices and $O(\delta_v (a_v + b_v)^2) = O(\delta_v n^2)$ edges, where $\delta_v$ is the degree of $v$ in $\IG[\GMMN]$.
Hence, for solving the longest path problem once for each $v \in \GMMN$, it takes $\sum_{v \in \GMMN} O(\delta_v n^2) = O(n^3)$ time in total (recall that $\IG[\GMMN]$ is a tree).
For each $v \in \GMMN - r$, there are respectively at most $a_v + b_v = O(n)$ candidates for $p_v$ and for $q_v$, and hence $O(n^2)$ possible in-out pairs.
Thus, the total computational time is bounded by $O(n^5)$, and we are done.
\section{An $O(n^3)$-Time Algorithm for GMMN[Tree]}\label{sec:speedup}
In this section, we improve the DP algorithm for GMMN[Tree] given in Section~\ref{sec:tree} so that it can be implemented in $O(n^3)$ time.

\subsection{Overview}
Let $\GMMN$ be a GMMN[Tree] instance with $|\GMMN| \ge 3$, and we choose a root $r \in \GMMN$ of the tree $\IG[\GMMN]$ such that $r$ is not a leaf (i.e., $r$ has at least two neighbors).
In Section~\ref{sec:tree}, for each $v \in \GMMN - r$ and each possible in-out pair $(p_v, q_v)$ of $\pi_u\in \Pi_\GMMN(u)$ for $v$, we compute $\rmdp(v, p_v, q_v)$ one-by-one by finding a longest $s_v$--$t_v$ path in the auxiliary DAG $G[v, p_v, q_v]$.
In this section, using an extra DP, we improve this part so that we compute $\rmdp(v, p_v, q_v)$ for many possible in-out pairs $(p_v, q_v)$ at once.

As with Section~\ref{sec:tree}, we assume that $v$ is regular, and let $u=(s_u, t_u)$ be the parent of $v$.
We also assume that neither $u$ nor $v$ is degenerate (otherwise, we can easily fill up the table $\rmdp(v, \cdot, \cdot)$ in $O(n^2)$ time by definition).
Since $u$ must have a neighbor other than $v$ by the choice of the root $r$, we have $B(u) \not\subseteq B(v)$.
Hence, for any M-path $\pi_u \in \Pi_\GMMN(u)$, its in-out pair $(p_v, q_v)$ satisfies one of the following conditions:
\begin{enumerate}[label=(\alph*),labelindent=\parindent,leftmargin=*]
    \item either $p_v = s_u \in B(v)$ or $q_v = t_u \in B(v)$, and then it is completely fixed;
    \item $p_v \neq s_u$, $q_v \neq t_u$, and they are on two adjacent boundaries of $B(v)$;
    \item $p_v \neq s_u$, $q_v \neq t_u$, and they are on two opposite boundaries of $B(v)$.
\end{enumerate}
For each case among (a)--(c), we design an extra DP to compute $\rmdp(v, p_v, q_v)$ for all such in-out pairs $(p_v, q_v)$ in $O(n^2)$ time.
Then, no matter how $B(u)$ intersects $B(v)$, one can classify all the possible in-out pairs into a constant number of such cases, and fill up the table $\rmdp(v, \cdot, \cdot)$ in $O(n^2)$ time in total by applying the designed DPs separately.\footnote{If $p_v$ or $q_v$ (or both) can move on two boundaries of $B(v)$, then we separately handle all possible cases, e.g., when $(q_v)_x = (t_v)_x$ (i.e., $q_v$ moves on the right boundary of $B(v)$) and when $(q_v)_y = (t_v)_y$ (i.e., $q_v$ moves on the upper boundary of $B(v)$).}
This implies that the overall computational time is bounded by $O(n^3)$.

No matter which of the three cases (a)--(c) we consider, we first compute the value $\rmdp(v, \epsilon, \epsilon)$ by computing a longest $s_v$--$t_v$ path in the auxiliary DAG $G[v, \epsilon, \epsilon]$.
In addition, by doing it in two ways from $s_v$ and from $t_v$, we obtain a longest $s_v$--$z$ path and a longest $z$--$t_v$ path for every (reachable) $z \in V(G[v, \epsilon, \epsilon])$ as byproducts.
We denote the lengths of the $s_v$--$z$ path and the $z$--$t_v$ path by $\lpath(s_v, z)$ and $\lpath(z, t_v)$, respectively.
Note that this computation for all $v \in P$ requires $O(n^3)$ time in total (cf.~Section~\ref{sec:time_tree}).
We also compute the value $\kappa_v = \sum_{w \in C_v} \rmdp(w, \epsilon, \epsilon)$, which is the baseline of the total sharable length in the subtree rooted at $v$ (cf.~Lemma~\ref{lem:tree_the_longest}), where recall that $C_v$ denotes the set of all children of $v$.

We then show that computing the values $\rmdp(v, p_v, q_v)$ for all possible in-out pairs $(p_v, q_v)$ in each case takes $O(n^2)$ time in total.
Suppose that $\cH(\GMMN, v) \cap \cH(\GMMN, u)$ is an $a \times b$ grid graph, where $a$ and $b$ are associated with the $y$- and $x$-coordinates, respectively.
Depending on the cases (a)--(c) and whether the parent $u$ is regular or flipped (hence, we consider six cases), we define auxiliary DP values (e.g., denoted by $\auxdp(v, i, j)$ for $i \in [a]$ and $j \in [b]$), and demonstrate how to compute and use them.

\subsection{When the Parent is Regular}\label{sec:speedup:regular}
In this section, we consider the case that the parent $u$ is a regular pair. 

\subsubsection{Case (a): One Endpoint is Fixed in the Subgrid}
By symmetry, we consider the situation when $p_v = s_u \in B(v)$ and $(q_v)_x = (t_v)_x$ for all possible in-out pairs $(p_v, q_v)$ of $\pi_u \in \Pi_\GMMN(u)$ for $v$. 
We then have $(t_v)_x < (t_u)_x$ and $(t_u)_y \leq (t_v)_y$, and let $p_{i,j}$ be the $(i, j)$ vertex on the $a \times b$ grid $\mathcal{H}(\GMMN, v) \cap \mathcal{H}(\GMMN, u)$ for each $i \in [a]$ and $j \in [b]$, where we define $p_{1,1} = s_u$ (see Figure~\ref{fig:sppedup:b:regular}).
In this case, we need to compute $\rmdp(v, p_{1,1}, p_{i,b})$ for each $i \in [a]$.

\begin{figure}[tb]
    \centering
    \includegraphics[width=0.45\hsize]{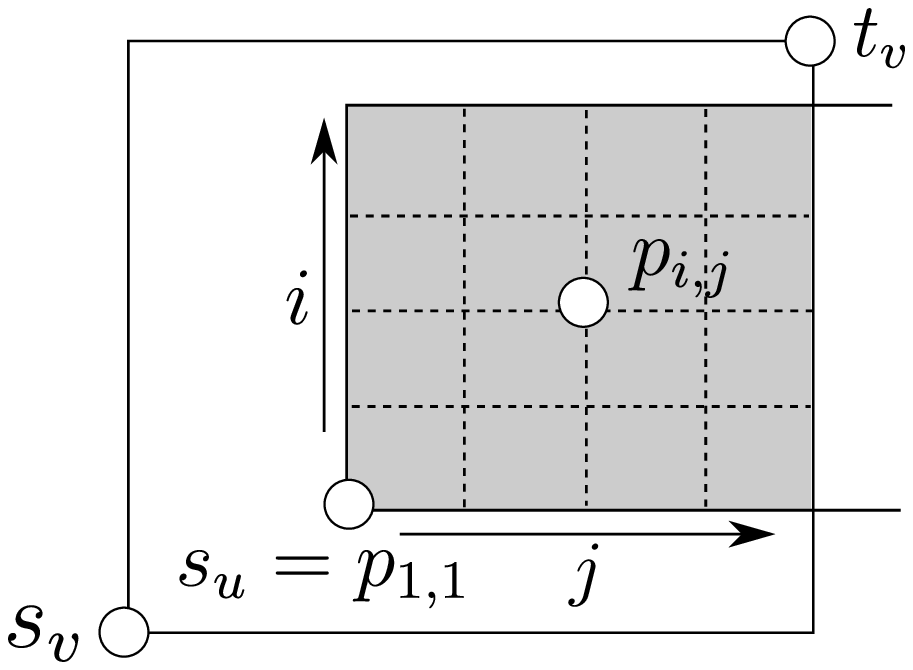}
    \caption{The case (a) when the parent $u$ is regular.}
    \label{fig:sppedup:b:regular}
\end{figure}

For each $i \in [a]$ and $j \in [b]$, we define $\auxdp(v, i, j)$ as the length of a longest $s_v$--$p_{i,j}$ path in $G[v, p_{1,1}, p_{i,j}]$, where we slightly extend the definition of the auxiliary DAG $G[v, p_v, q_v]$ in Section~\ref{sec:DAG_tree} so that $(p_v, q_v)$ is not necessarily an in-out pair of $\pi_u \in \Pi_\GMMN(u)$ for $v$ but that of its subpath.
Then, by Lemma~\ref{lem:tree_the_longest}, we have
\begin{align}
    \rmdp(v, p_{1,1}, p_{i, b}) = \max\left\{\max_{j \in [b]} \left(\omega(v, i, j) + \lambda(p_{i,j}, t_v)\right) + \kappa_v,\, \rmdp(v, \epsilon, \epsilon)\right\}\label{eq:speedup:dp:regular:b}
\end{align}
for each $i \in [a]$, because any $s_v$--$t_v$ path in $G[v, p_{1,1}, p_{i,b}]$ either leaves $B(p_{1,1}, p_{i,b})$ at some $p_{i,j}$ $(j \in [b])$ or is disjoint from $B(p_{1,1}, p_{i,b})$.
Thus, after filling up the table $\omega(v, \cdot, \cdot)$, we can compute the values $\rmdp(v, p_{1,1}, p_{i,b})$ for all $i \in [a]$ in $O(a \times b) = O(n^2)$ time in total.
In what follows, we see how to compute $\omega(v, i, j)$.

For the base case when $i = j = 1$, from the definitions of $G[v, \cdot, \cdot]$ and $\lambda(s_v, \cdot)$, we see
\begin{align}
   \omega(v, 1, 1) &= \lambda(s_v, p_{1,1}). \label{eq:speedup:recurrence:regular:a1}
\end{align}

Next, when $i > 1$ and $j = 1$, we can compute it by a recursive formula
\begin{align+}
   \omega(v, i, 1) &= \max\left\{\omega(v, i-1, 1) + \|p_{i-1, 1}p_{i, 1}\|,\, \lambda(s_v, p_{i,1})\right\},
\end{align+}
which is confirmed as follows.
Fix a longest $s_v$--$p_{i,j}$ path in $G[v, p_{1,1}, p_{i,1}]$ attaining $\omega(v, i, 1)$, and let $\pi \in \Pi_\GMMN(s_v, p_{i,j})$ be a corresponding M-path.
If $\pi$ intersects $p_{i-1,1}$, then the $s_v$--$p_{i-1,1}$ prefix corresponds to a longest $s_v$--$p_{i-1,1}$ path in $G[v, p_{1,1}, p_{i-1,1}]$ of length $\omega(v, i-1, 1)$ and the last segment $p_{i-1,1}p_{i,1}$ contributes to the length in $G[v, p_{1,1}, p_{i,1}]$ in addition.
Otherwise, $\pi$ is disjoint from $p_{i-1,1}$, and it then corresponds to a longest $s_v$--$p_{i,1}$ path in $G[v, \epsilon, \epsilon]$ of length $\lambda(s_v, p_{i,1})$.
The case when $i = 1$ and $j > 1$ is similarly computed by
\begin{align+}
   \omega(v, 1, j) &= \max\left\{\omega(v, 1, j-1) + \|p_{1, j-1}p_{1, j}\|,\, \lambda(s_v, p_{1,j})\right\}.
\end{align+}

Finally, when $i > 1$ and $j > 1$, we can compute it by a recursive formula
\begin{align}
   \omega(v, i, j) &= \max\left\{\omega(v, i-1, j) + \|p_{i-1, j}p_{i, j}\|,\, \omega(v, i, j-1) + \|p_{i,j-1}p_{i,j}\|\right\}, \label{eq:speedup:recurrence:regular:a4}
\end{align}
because for any $s_v$--$p_{i,j}$ path in $G[v, p_{1,1}, p_{i,j}]$, a corresponding M-path in $\Pi_\GMMN(s_v, p_{i,j})$ intersects either $p_{i-1,j}$ or $p_{i,j-1}$, and the last segment $p_{i-1,j}p_{i, j}$ or $p_{i,j-1}p_{i, j}$, respectively, contributes to the length in $G[v, p_{1,1}, p_{i,j}]$.

Since we only look up a constant number of values in \eqref{eq:speedup:recurrence:regular:a1}--\eqref{eq:speedup:recurrence:regular:a4}, each value $\auxdp(v, i, j)$ can be computed in constant time.
As the table $\auxdp(v, \cdot, \cdot)$ is of size $a\times b = O(n^2)$, the total computational time is $O(n^2)$.
Thus we are done.

\subsubsection{Case (b): In-Out Pairs Move on Adjacent Boundaries}
By symmetry, we consider the situation when $(p_v)_y = (s_v)_y$ and $(q_v)_x = (t_v)_x$ for all possible in-out pairs $(p_v, q_v)$ of $\pi_u \in \Pi_\GMMN(u)$ for $v$.
We then have $(s_v)_x \leq (s_u)_x \leq (t_v)_x < (t_u)_x$ and $(s_u)_y < (s_v)_y \leq (t_u)_y \leq (t_v)_y$, and let $p_{i,j}$ be the $(i, j)$ vertex on the $a \times b$ grid $\mathcal{H}(\GMMN, v) \cap \mathcal{H}(\GMMN, u)$ for each $i \in [a]$ and $j \in [b]$, where we define $p_{1,1}$ as the lower-right corner (see Figure~\ref{fig:sppedup:c:regular}).
In this case, we need to compute $\rmdp(v, p_{1,j}, p_{i,1})$ for each pair of $i \in [a]$ and $j \in [b]$.

\begin{figure}[tb]
    \centering
    \includegraphics[width=0.45\hsize]{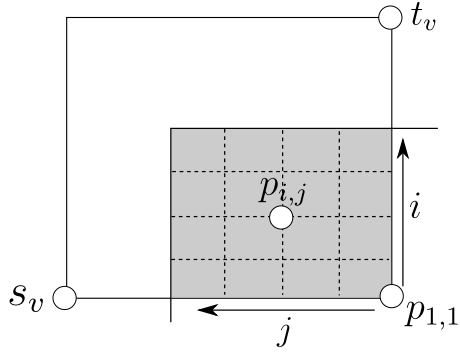}
    \caption{The case (b) when the parent $u$ is regular.}
    \label{fig:sppedup:c:regular}
\end{figure}

For each $i \in [a]$ and $j \in [b]$, we define $\auxdp(v, i, j)$ as the maximum length of an $s_v$--$t_v$ path in $G[v, p_{1,j}, p_{i,1}]$ that intersects $B(p_{1,j}, p_{i,1})$.
Then, by Lemma~\ref{lem:tree_the_longest}, we have
\begin{align}
    \rmdp(v, p_{1,j}, p_{i,1}) &= \max\left\{\auxdp(v, i, j) + \kappa_v,\, \rmdp(v, \epsilon, \epsilon)\right\}.\label{eq:speedup:dp:regular:c}
\end{align}
Thus, after filling up the table $\omega(v, \cdot, \cdot)$, we can compute the values $\rmdp(v, p_{1,j}, p_{i,1})$ for all $i \in [a]$ and $j \in [b]$ in $O(a \times b) = O(n^2)$ time in total. 

In what follows, we see how to compute $\omega(v, i, j)$.
We first observe that, for any $s_v$--$t_v$ path in $G[v, p_{1,j}, p_{i,1}]$ attaining $\omega(v, i, j)$, a corresponding M-path $\pi_v \in \Pi_\GMMN(v)$ can be taken so that it intersects $p_{i,j}$ by choosing an M-path $\pi_u \in \Pi_\GMMN(u)$ appropriately (cf.~Lemma~\ref{lem:sharable_segments}).

For the base case when $i = j = 1$, from the definitions of $G[v, \cdot, \cdot]$, $\lambda(s_v, \cdot)$, and $\lambda(\cdot, t_v)$, we see
\begin{align}
   \omega(v, 1, 1) &= \lambda(s_v, p_{1,1}) + \lambda(p_{1,1}, t_v). \label{eq:speedup:recurrence:regular:c1}
\end{align}

Next, when $i > 1$ and $j = 1$, we can compute it by a recursive formula
\begin{align+}
   \omega(v, i, 1) &= \max\left\{\omega(v, i-1, 1) + \|p_{i-1,1}p_{i,1}\|,\, \lambda(s_v, p_{i,1}) + \lambda(p_{i,1}, t_v)\right\},
\end{align+}
which is confirmed as follows.
Fix an $s_v$--$t_v$ path in $G[v, p_{1,1}, p_{i,1}]$ attaining $\omega(v, i, 1)$, and let $\pi_v \in \Pi_\GMMN(v)$ be a corresponding M-path.
If $\pi_v$ intersects $p_{i-1,1}$, then it corresponds to an $s_v$--$t_v$ path in $G[v, p_{1,1}, p_{i-1,1}]$ attaining $\omega(v, i-1, 1)$ and the segment $p_{i-1,1}p_{i,1}$ contributes to the length in $G[v, p_{1,1}, p_{i,1}]$ in addition.
Otherwise, $\pi_v$ is disjoint from $p_{i-1,1}$, and hence $\pi_v$ intersects $B(p_{1,1}, p_{i,1})$ only at $p_{i,1}$.
Then, the $s_v$--$p_{i,1}$ prefix of $\pi_v$ corresponds to a longest $s_v$--$p_{i,1}$ path in $G[v, \epsilon, \epsilon]$ of length $\lambda(s_v, p_{i,1})$, and the $p_{i,1}$--$t_v$ suffix a longest $p_{i,1}$--$t_v$ path in $G[v, \epsilon, \epsilon]$ of length $\lambda(p_{i,1}, t_v)$.
The case when $i = 1$ and $j > 1$ is similarly computed by
\begin{align+}
   \omega(v, 1, j) &= \max\left\{\omega(v, 1, j-1) + \|p_{1,j}p_{1,j-1}\|,\, \lambda(s_v, p_{1,j}) + \lambda(p_{1,j}, t_v)\right\}.
\end{align+}

Finally, when $i > 1$ and $j > 1$, we can compute it by a recursive formula
\begin{align}
   \omega(v, i, j) &= \max\left\{\omega(v, i-1, j) + \|p_{i-1,j}p_{i,j}\|,\, \omega(v, i, j-1) + \|p_{i,j}p_{i,j-1}\|,\, \lambda(s_v, p_{i,j}) + \lambda(p_{i,j}, t_v)\right\}, \label{eq:speedup:recurrence:regular:c4}
\end{align}
because, for any M-path $\pi_v \in \Pi_\GMMN(v)$ intersecting $p_{i,j}$, it either intersects at least one of $p_{i-1,j}$ and $p_{i,j-1}$ or intersects $B(p_{1,j}, p_{i,1})$ only at $p_{i,j}$, and each case can be analyzed as with the previous paragraph.

Since we only look up a constant number of values in \eqref{eq:speedup:recurrence:regular:c1}--\eqref{eq:speedup:recurrence:regular:c4}, each value $\auxdp(v, i, j)$ can be computed in constant time.
As the table $\auxdp(v, \cdot, \cdot)$ is of size $a\times b = O(n^2)$, the total computational time is $O(n^2)$.
Thus we are done.

\subsubsection{Case (c): In-Out Pairs Move on Opposite Boundaries} 
By symmetry, we consider the situation when $(p_v)_y = (s_v)_y$ and $(q_v)_y = (t_v)_y$ for all possible in-out pairs $(p_v, q_v)$ of $\pi_u \in \Pi_\GMMN(u)$ for $v$.
We then have $(s_v)_x \leq (s_u)_x < (t_u)_x \leq (t_v)_x$ and $(s_u)_y < (s_v)_y < (t_v)_y < (t_u)_y$, and let $p_{i,j}$ be the $(i, j)$ vertex on the $a \times b$ grid $\mathcal{H}(\GMMN, v) \cap \mathcal{H}(\GMMN, u)$ for each $i \in [a]$ and $j \in [b]$, where we define $p_{1,1}$ as the lower-right corner (see Figure~\ref{fig:sppedup:d:regular}). 
In this case, we need to compute $\rmdp(v, p_{1,j}, p_{a,k})$ for each $j, k \in [b]$ with $j \ge k$, which we directly compute as follows.

\begin{figure}[tb]
    \centering
    \includegraphics[width=0.45\hsize]{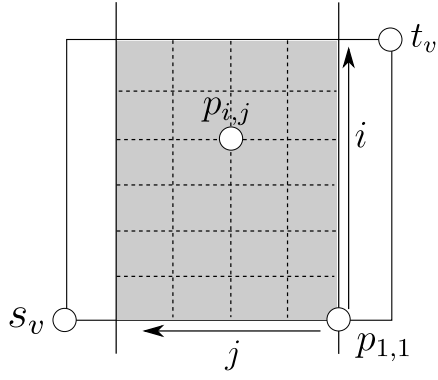}
    \caption{The case (c) when the parent $u$ is regular.}
    \label{fig:sppedup:d:regular}
\end{figure}

First, when $j = k = 1$, we have
\begin{align}
    \rmdp(v, p_{1,1}, p_{a,1}) &= \max_{1 \leq h \leq i \leq a} \left(\lambda(s_v, p_{h,1}) + \lambda(p_{i,1}, t_v) + \|p_{h,1}p_{i,1}\|\right), \label{eq:speedup:dp:regular:d1}
\end{align}
because any M-path $\pi_v \in \Pi_\GMMN(v)$ intersects the segment $p_{1,1}p_{a,1}$ at some point, and it is partitioned into three parts: the $s_v$--$p_{h,1}$ prefix, the segment $p_{h,1}p_{i,1}$, and the $p_{i,1}$--$t_v$ suffix for some $h, i \in [a]$ with $h \leq i$.
The computation of $\rmdp(v, p_{1,1}, p_{a,1})$ requires $O(a^2) = O(n^2)$ time.

Next, for any $1 \leq k \leq j \leq b$, we have
\begin{align}
    \rmdp(v, p_{1,j}, p_{a,k}) &= \rmdp(v, p_{1,1}, p_{a,1}) + \|p_{1,j} p_{1,k}\|, \label{eq:speedup:dp:regular:d2}
\end{align}
which is confirmed as follows.
Fix a network $(\pi_w)_{w \in \GMMN} \in \Feas(\GMMN)$ attaining $\rmdp(v, p_{1,j}, p_{a,k})$.
Then, without changing the total shared length, we can modify the M-paths $\pi_v \in \Pi_\GMMN(v)$ and $\pi_u \in \Pi_\GMMN(u)$ with $\pi_u[v] \in \Pi_\GMMN(p_{1,j}, p_{a,k})$ so that it also attains $\rmdp(v, p_{1,j}, p_{a,j}) = \rmdp(v, p_{1,1}, p_{a,1})$ and $\pi_v$ shares all of its horizontal segments in $B(p_{1,j}, p_{a,k})$ with $\pi_u$ in addition, whose total length is $d_x(p_{1,j}, p_{1,k}) = \|p_{1,j} p_{1,k}\|$ (cf.~Lemma~\ref{lem:sharable_segments} and Figure~\ref{fig:share}).

We can compute $\rmdp(v, p_{1,j}, p_{a,k})$ in constant time by \eqref{eq:speedup:dp:regular:d2} for each $j, k \in [b]$ with $j \geq k$.
As the table $\rmdp(v, \cdot, \cdot)$ is of size $O(b^2) = O(n^2)$, the total computational time is $O(n^2)$.
Thus we are done.

\subsection{When the Parent is Flipped}\label{sec:speedup:flipped}
In this section, we consider the case that the parent $u$ is a flipped pair.

\subsubsection{Case (a): One Endpoint is Fixed in the Subgrid}
By symmetry, we consider the situation when $p_v = s_u \in B(v)$ and $(q_v)_x = (t_v)_x$ for all possible in-out pairs $(p_v, q_v)$ of $\pi_u \in \Pi_\GMMN(u)$ for $v$. 
We then have $(t_v)_x < (t_u)_x$ and $(s_v)_y \leq (t_u)_y$, and let $p_{i,j}$ be the $(i, j)$ vertex on the $a \times b$ grid $\mathcal{H}(\GMMN, v) \cap \mathcal{H}(\GMMN, u)$ for each $i \in [a]$ and $j \in [b]$, where we define $p_{1,1}$ as the upper-right corner so that $p_{1,b} = s_u$ (see Figure~\ref{fig:sppedup:b:flipped}). 
In this case, we need to compute $\rmdp(v, p_{1,b}, p_{i,1})$ for each $i \in [a]$.

\begin{figure}[tb]
    \centering
    \includegraphics[width=0.45\hsize]{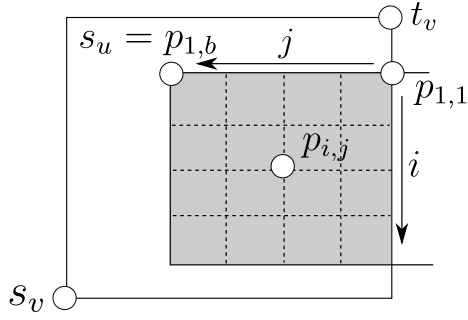}
    \caption{The case (a) when the parent $u$ is flipped.}
    \label{fig:sppedup:b:flipped}
\end{figure}

For each $i \in [a]$ and $j \in [b]$, we define $\auxdp(v, i, j)$ as the length of a longest $p_{i,j}$--$t_v$ path in $G[v, p_{i,j}, p_{1,1}]$.
Then, as with the regular case, by Lemma~\ref{lem:tree_the_longest}, we have
\begin{align}
    &\rmdp(v, p_{1,b}, p_{i,1})\nonumber\\
    &= \max\left\{\max_{j \in [b]} \left(\lambda(s_v, p_{i,j}) + \auxdp(v, i, j)\right) + \kappa_v,\, \max_{h \in [i]} \left(\lambda(s_v, p_{h,b}) + \auxdp(v, h, b)\right) + \kappa_v,\, \rmdp(v, \epsilon, \epsilon)\right\} \label{eq:speedup:dp:flipped:b}
\end{align}
for each $i \in [a]$, because any $s_v$--$t_v$ path in $G[v, p_{i,j}, p_{1,1}]$ either enters $B(p_{1,b}, p_{i,1})$ at some $p_{i,j}$ $(j \in [b])$, enters $B(p_{1,b}, p_{i,1})$ at some $p_{h,b}$ $(h \in [i])$, or is disjoint from $B(p_{1,b}, p_{i,1})$.
Thus, after filling up the table $\omega(v, \cdot, \cdot)$, we can compute the values $\rmdp(v, p_{1,b}, p_{i,1})$ for all $i \in [a]$ in $O(a \times (a + b)) = O(n^2)$ time in total.
In what follows, we see how to compute $\omega(v, i, j)$.

First, when $j = 1$, from the definitions of $G[v, \cdot, \cdot]$ and $\lambda(\cdot, t_v)$, we see
\begin{align}
   \omega(v, i, 1) &= \lambda(p_{1,1}, t_v) + \|p_{i,1}p_{1,1}\|. \label{eq:speedup:recurrence:flipped:b1}
\end{align}
Similarly, when $i = 1$ and $j > 1$, we have
\begin{align}
   \omega(v, 1, j) &= \max_{k \in [j]} \left(\lambda(p_{1,k}, t_v) + \|p_{1,j}p_{1,k}\|\right), \label{eq:speedup:recurrence:flipped:b2}
\end{align}
because any M-path in $\Pi_\GMMN(p_{1,j}, t_v)$ leaves $B(p_{1,j}, p_{1,1})$ at some point $p_{1,k}$ $(k \in [j])$ and then it shares the first segment $p_{1,j}p_{1,k}$ with $p_u \in \Pi_\GMMN(u)$ (with $p_u[v] \in \Pi_\GMMN(s_u, p_{1,1})$).
Computing $\auxdp(v, 1, j)$ requires $O(j)$ time by \eqref{eq:speedup:recurrence:flipped:b2}, and hence it takes $O(b^2) = O(n^2)$ time in total for all $j \in [b]$.

Finally, when $i > 1$ and $j > 1$, we can compute it by a recursive formula
\begin{align}
   \omega(v, i, j) &= \max\left\{\lambda(p_{1,j}, t_v) + \|p_{i,j}p_{1,j}\|,\, \omega(v, i-1, j), \, \omega(v, i, j-1)\right\}, \label{eq:speedup:recurrence:flipped:b3}
\end{align}
which is confirmed as follows.
Fix a longest $p_{i,j}$--$t_v$ path in $G[v, p_{i,j}, p_{1,1}]$ attaining $\omega(v, i, j)$, and let $\pi \in \Pi_\GMMN(p_{i,j}, t_v)$ be a corresponding M-path.
If $\pi$ leaves $B(p_{i,j}, p_{1,1})$ at $p_{1,j}$, then the $p_{1,j}$--$t_v$ suffix corresponds to a longest $p_{1,j}$--$t_v$ path in $G[v, \epsilon, \epsilon]$ of length $\lambda(p_{1,j}, t_v)$ and the first segment $p_{i,j}p_{1,j}$ contributes to the length in $G[v, p_{i,j}, p_{1,1}]$ in addition.
Otherwise, $\pi$ leaves $B(p_{i,j}, p_{1,1})$ at some $p_{1,k}$ $(k \in [j-1])$.
Recall that, since $u$ is flipped, $\pi$ can share either horizontal or vertical segments with $\pi_u \in \Pi_\GMMN(u)$ (cf.~Lemma~\ref{lem:sharable_segments}).
If $\pi$ shares horizontal segments with $\pi_u$, then we can assume that the $p_{i,j}$--$p_{1,k}$ prefix of $\pi$ consists of two segments $p_{i,j}p_{1,j}$ and $p_{1,j}p_{1,k}$ by modifying $\pi_u$ (with $\pi_u[v] \in \Pi_\GMMN(p_{1,b}, p_{i,1})$) so that it traverses $p_{1,j}p_{1,k}$;
we then have $\omega(v, i, j) = \omega(v, 1, j) = \omega(v, i-1, j)$.
Otherwise, $\pi$ shares vertical segments with $\pi_u$, and we can assume that the $p_{i,j}$--$p_{1,k}$ prefix of $\pi$ consists of two segments $p_{i,j}p_{i,k}$ and $p_{i,k}p_{1,k}$ by modifying $\pi_u$ so that it traverses $p_{i,k}p_{1,k}$;
we then have $\omega(v, i, j) = \omega(v, i, k) = \omega(v, i, j-1)$.

Since we only look up a constant number of values in \eqref{eq:speedup:recurrence:flipped:b3} as well as \eqref{eq:speedup:recurrence:flipped:b1}, each value $\auxdp(v, i, j)$ for $i > 1$ can be computed in constant time.
As the table $\auxdp(v, \cdot, \cdot)$ is of size $a\times b = O(n^2)$, the total computational time is bounded by $O(n^2)$.
Thus we are done.

\subsubsection{Case (b): In-Out Pairs Move on Adjacent Boundaries}
By symmetry, we consider the situation when $(p_v)_y = (t_v)_y$ and $(q_v)_x = (t_v)_x$ for all possible in-out pairs $(p_v, q_v)$ of $\pi_u \in \Pi_\GMMN(u)$ for $v$.
We then have $(s_v)_x \leq (s_u)_x \leq (t_v)_x < (t_u)_x$ and $(s_v)_y \leq (t_u)_y \leq (t_v)_y < (s_u)_y$, and let $p_{i,j}$ be the $(i, j)$ vertex on the $a \times b$ grid $\mathcal{H}(\GMMN, v) \cap \mathcal{H}(\GMMN, u)$ for each $i \in [a]$ and $j \in [b]$, where we define $p_{1,1}=t_v$ (see Figure~\ref{fig:sppedup:a:flipped}). 
In this case, we need to compute $\rmdp(v, p_{1,j}, p_{i,1})$ for each pair of $i \in [a]$ and $j \in [b]$.

\begin{figure}[tb]
    \centering
    \includegraphics[width=0.45\hsize]{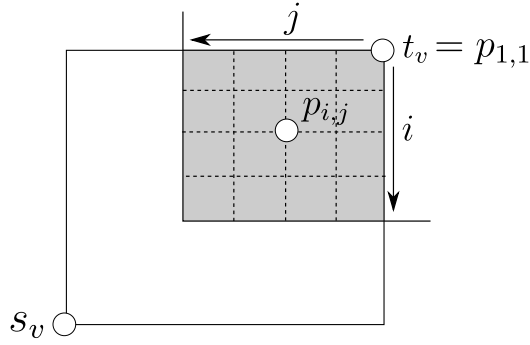}
    \caption{The case (b) when the parent $u$ is flipped.}
    \label{fig:sppedup:a:flipped}
\end{figure}

For each $i \in [a]$ and $j \in [b]$, we define $\auxdp(v, i, j)$ as the length of a longest $s_v$--$t_v$ path in $G[v, p_{1,j}, p_{i,1}]$.
Then, by Lemma~\ref{lem:tree_the_longest}, we have
\begin{align}
    \rmdp(v, p_{1,j}, p_{i,1}) &= \auxdp(v, i, j) + \kappa_v. \label{eq:speedup:dp:flipped:a}
\end{align}
Thus, after filling up the table $\omega(v, \cdot, \cdot)$, we can compute the values $\rmdp(v, p_{1,j}, p_{i,1})$ for all $i \in [a]$ and $j \in [b]$ in $O(a \times b) = O(n^2)$ time in total. 
In what follows, we see how to compute $\omega(v, i, j)$.

For the base case when $i = j = 1$, from the definitions of $G[v, \cdot, \cdot]$ and $\lambda(s_v, \cdot)$, we see
\begin{align}
   \omega(v, 1, 1) &= \lambda(s_v, p_{1,1}). \label{eq:speedup:recurrence:flipped:a1}
\end{align}

Next, when $i > 1$ and $j = 1$, we can compute it by a recursive formula
\begin{align+}
   \omega(v, i, 1) &= \max\left\{\lambda(s_v, p_{i,1}) + \|p_{i,1}p_{1,1}\|,\, \omega(v, i-1, 1)\right\},
\end{align+}
which is confirmed as follows.
Fix a longest $s_v$--$t_v$ path in $G[v, p_{i,1}, p_{1,1}]$ attaining $\omega(v, i, 1)$, and let $\pi_v \in \Pi_\GMMN(v)$ be a corresponding M-path.
If $\pi_v$ intersects $p_{i,1}$, then it corresponds to a longest $s_v$--$p_{i,1}$ path in $G[v, \epsilon, \epsilon]$ of length $\lambda(s_v, p_{i,1})$ and the last segment $p_{i,1}p_{1,1}$ contributes to the length in $G[v, p_{1,1}, p_{i,1}]$ in addition.
Otherwise, $\pi_v$ is disjoint from $p_{i,1}$, and then it corresponds to a longest $s_v$--$t_v$ path in $G[v, p_{i-1,1}, p_{1,1}]$ of length $\omega(v, i-1, 1)$.
The case when $i = 1$ and $j > 1$ is similarly computed by
\begin{align+}
   \omega(v, 1, j) &= \max\left\{\lambda(s_v, p_{1,j}) + \|p_{1,j}p_{1,1}\|,\, \omega(v, 1, j-1)\right\}.
\end{align+}

Finally, when $i > 1$ and $j > 1$, we can compute it by a recursive formula
\begin{align}
   \omega(v, i, j) &= \max\left\{\lambda(s_v, p_{i,j}) + \gamma(v, p_{i,j}, p_{1,1}),\, \omega(v, i-1, j), \, \omega(v, i, j-1)\right\}, \label{eq:speedup:recurrence:flipped:a4}
\end{align}
where $\gamma(v, p_{i,j}, p_{1,1}) = \max\left\{d_x(p_{i,j}, p_{1,1}),\, d_y(p_{i,j}, p_{1,1})\right\}$ is similarly defined (cf.~\eqref{eq:sharable_length} and Lemma~\ref{lem:sharable_segments}).
This is because, for any M-path $\pi_v \in \Pi_\GMMN(v)$, it intersects $p_{i,j}$, enters $B(p_{i-1,j}, p_{1,1})$ at some $p_{h,j}$ $(h \in [i-1])$, or enters $B(p_{i,j-1}, p_{1,1})$ at some $p_{i,k}$ $(k \in [j-1])$, and each case is analyzed as with the previous paragraph.

Since we only look up a constant number of values in \eqref{eq:speedup:recurrence:flipped:a1}--\eqref{eq:speedup:recurrence:flipped:a4}, each value $\auxdp(v, i, j)$ can be computed in constant time.
As the table $\auxdp(v, \cdot, \cdot)$ is of size $a\times b = O(n^2)$, the total computational time is $O(n^2)$.
Thus we are done.

\subsubsection{Case (c): In-Out Pairs Move on Opposite Boundaries}
By symmetry, we consider the situation when $(p_v)_y = (t_v)_y$ and $(q_v)_y = (s_v)_y$ for all possible in-out pairs $(p_v, q_v)$ of $\pi_u \in \Pi_\GMMN(u)$ for $v$.
We then have $(s_v)_x \leq (s_u)_x < (t_u)_x \leq (t_v)_x$ and $(t_u)_y < (s_v)_y < (t_v)_y < (s_u)_y$, and let $p_{i,j}$ be the $(i, j)$ vertex on the $a \times b$ grid $\mathcal{H}(\GMMN, v) \cap \mathcal{H}(\GMMN, u)$ for each $i \in [a]$ and $j \in [b]$, where we define $p_{1,1}$ as the upper-right corner (see Figure~\ref{fig:sppedup:d:flipped}). 
In this case, we need to compute $\rmdp(v, p_{1,j}, p_{a,k})$ for each $j, k \in [b]$ with $j \ge k$.
Recall that, since $u$ is flipped, any M-paths $\pi_v \in \Pi_{\GMMN}(v)$ and $\pi_u \in \Pi_{\GMMN}(u)$ can share either horizontal or vertical segments.

\begin{figure}[tb]
    \centering
    \includegraphics[width=0.45\hsize]{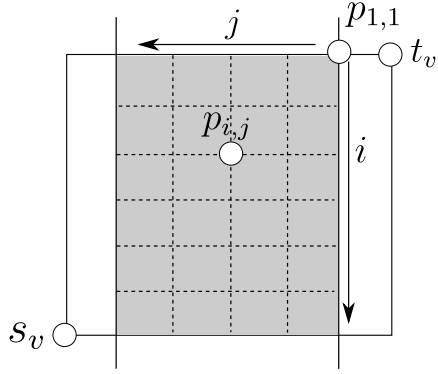}
    \caption{The case (c) when the parent $u$ is flipped.}
    \label{fig:sppedup:d:flipped}
\end{figure}

First, when $j = k$, as with the regular case (cf.~\eqref{eq:speedup:dp:regular:d1}), we have
\begin{align}
    \rmdp(v, p_{1,j}, p_{a,j}) &= \rmdp(v, p_{1,1}, p_{a,1}) = \max_{1 \leq h \leq i \leq a} \left(\lambda(s_v, p_{h,1}) + \lambda(p_{i,1}, t_v) + \|p_{h,1}p_{i,1}\|\right) \label{eq:speedup:dp:flipped:d1}
\end{align}
because in this case $\pi_u[v]$ consists of a single vertical segment $p_{1,j}p_{a,j}$.

When $j > k$, we consider two cases of sharing horizontal and vertical segments separately, and then take the maximum.
In the vertical sharing case, the desired value is exactly $\rmdp(v, p_{1,1}, p_{a,1})$, because any horizontal segment in $B(p_{1,j}, p_{a,k})$ has no meaning.
In the horizontal sharing case, the desired value is $\rmdp(v, \epsilon, \epsilon) + \|p_{1,k}p_{1,j}\|$, because for any longest $s_v$--$t_v$ path in $G[v, \epsilon, \epsilon]$, we can take a corresponding M-path $\pi_v \in \Pi_\GMMN(v)$ so that it goes through $B(p_{1,j}, p_{a,k})$ horizontally and then it can share the horizontal segment in addition with $\pi_u \in \Pi_\GMMN(u)$ (with $\pi_u[v] \in \Pi_\GMMN(p_{1,j}, p_{a,k})$).
Thus, we have
\begin{align}
    \rmdp(v, p_{1,j}, p_{a,k}) &= \max\left\{\rmdp(v, p_{1,1}, p_{a,1}),\, \rmdp(v, \epsilon, \epsilon) + \|p_{1,k}p_{1,j}\|\right\}. \label{eq:speedup:dp:flipped:d2}
\end{align}

The computation of $\rmdp(v, p_{1,1}, p_{a,1})$ requires $O(a^2) = O(n^2)$ time by \eqref{eq:speedup:dp:flipped:d1}.
After computing it, by \eqref{eq:speedup:dp:flipped:d1} and \eqref{eq:speedup:dp:flipped:d2}, we can compute $\rmdp(v, p_{1,j}, p_{a,k})$ in constant time for each $j, k \in [b]$ with $j \geq k$.
As the table $\rmdp(v, \cdot, \cdot)$ is of size $O(b^2) = O(n^2)$, the total computational time is $O(n^2)$.
Thus we are done.
\section{Reduction of GMMN[Cycle] to GMMN[Tree]}\label{sec:cycle}
In this section, we show that GMMN[Cycle] can be reduced to $O(n)$ GMMN[Tree] instances.
More generally, we describe a reduction for triangle-free pseudotree instances.
The target problem is formally stated as follows, where we emphasize again that the triangle-freeness is crucial in our approach (cf.~Section~\ref{sec:specialization}).

\begin{problem}[{GMMN[Pseudotree]}]
\begin{description}
\setlength{\itemsep}{0mm}
\item[]
\item[Input:] A set $\GMMN \subseteq \mathbb{R}^2 \times \mathbb{R}^2$ of $n$ pairs whose intersection graph $\IG[\GMMN]$ is a triangle-free pseudotree.
\item[Goal:] Find an optimal network $N = (\pi_v)_{v \in \GMMN} \in \Opt(\GMMN)$.
\end{description}
\end{problem}

Let $\GMMN$ be a GMMN[Pseudotree] instance.
If $\IG[\GMMN]$ is a tree, we do nothing for reduction.
Suppose that $\IG[\GMMN]$ has a (unique) cycle of length at least four.
Let $C \subseteq P$ be the subset of pairs consisting of the cycle.
If $C$ has a degenerate pair, we can cut the cycle by appropriately splitting the degenerate pair into two degenerate pairs which are not adjacent in the intersection graph.
Therefore, we can assume that any pair in $C$ is not degenerate.

We choose an arbitrary pair $v = (s_v, t_v) \in C$.
Without loss of generality, we assume that $v$ is regular and $s_v$ is the lower-left corner of $B(v)$.
Suppose that $\cH(\GMMN, v)$ is an $a \times b$ grid graph, where $a$ and $b$ are associated with the $y$- and $x$-coordinates, respectively.
Note that $a, b \ge 2$ since $v$ is nondegenerate.
Let $p_{i,j}$ denote the $(i,j)$ vertex for $i \in [a]$ and $j \in [b]$, where $p_{1,1} = s_v$.
For each $i \in [a]$ and $j \in [b]$, we define
\begin{align}
  \mathcal{E}^\mathrm{hor}(p_{i,j}) &= \begin{cases}
    \{(p_{i,j-1}, p_{i,j}, p_{i,j+1}), (p_{i-1,j}, p_{i,j}, p_{i,j+1})\} & (1 < i,\ 1 < j < b), \\
    \{(p_{i,j-1}, p_{i,j}, p_{i,j+1})\} & (i = 1,\ 1 < j < b), \\
    \{(p_{i-1,j}, p_{i,j}, p_{i,j+1})\} & (1 < i,\ j = 1), \\
    \{(p_{i,j}, p_{i,j}, p_{i,j+1})\} & (i = j = 1), \\
    \emptyset & (\text{otherwise, i.e., } j = b),
  \end{cases}
  \\
  \mathcal{E}^\mathrm{vert}(p_{i,j}) &= \begin{cases}
    \{(p_{i-1,j}, p_{i,j}, p_{i+1,j}), (p_{i,j-1}, p_{i,j}, p_{i+1,j})\} & (1 < i < a,\ 1 < j), \\
    \{(p_{i-1,j}, p_{i,j}, p_{i+1,j})\} & (1 < i < a,\ j = 1), \\
    \{(p_{i,j-1}, p_{i,j}, p_{i+1,j})\} & (i = 1,\ 1 < j), \\
    \{(p_{i,j}, p_{i,j}, p_{i+1,j})\} & (i = j = 1), \\
    \emptyset & (\text{otherwise, i.e., } i = a).
  \end{cases}
\end{align}
Namely, each element of $\mathcal{E}^\mathrm{hor}(p_{i,j})$ is a triple representing a way for an M-path $\pi_v \in \Pi_\GMMN(v)$ to go through an edge $\{p_{i,j}, p_{i, j+1}\}$ of $\cH(\GMMN, v)$.
Similarly, each element of $\mathcal{E}^\mathrm{vert}(p_{i,j})$ indicates a manner for $\pi_v$ to go through $\{p_{i,j}, p_{i+1,j}\}$.

Let $u_1$ and $u_2$ be the neighbors of $v$ in $C$.
Then $B(u_1)$ and $B(u_2)$ can be separated by an axis-aligned line, without their boundaries (recall that they can share corner vertices).
By symmetry, we assume that the line is vertical and $B(u_1)$ is the left side.
Take $\alpha \in [a]$ and $\beta \in [b]$ such that $(p_{\alpha, \beta})_x$ is the $x$-coordinate of the right boundary of $B(u_1) \cap B(v)$ and $(p_{\alpha, \beta})_y$ is the minimum of the $y$-coordinates of the upper boundaries of $B(u_1) \cap B(v)$ and $B(u_2) \cap B(v)$.
If $\alpha = a$ and $\beta = b$, i.e., $p_{\alpha,\beta} = t_v$, the lower-right corner of $B(u_1)$ and the upper-left corner of $B(u_2)$ are $t_v$.
In this case, we flip both the $x$- and $y$-axes so that $p_{\alpha,\beta} = s_v$.
Hence, we can assume that $p_{\alpha,\beta} \ne t_v$.

Define
\begin{align}
  X^\mathrm{hor} &= \{p_{i, \beta} \mid i \in [\alpha]\}, \\
  X^\mathrm{vert}  &= \{p_{\alpha, j} \mid j \in [\beta]\}.
\end{align}
Then any M-path $\pi_v$ is consistent with exactly one way in $\mathcal{E}^\mathrm{hor}(q)$ for some $q \in X^\mathrm{hor}$ or in $\mathcal{E}^\mathrm{vert}(q)$ for some $q \in X^\mathrm{vert}$.
We try every possibility and then adopt an optimal one.

Assume that $\pi_v$ is consistent with $(q^-, q, q^+) \in \mathcal{E}^\mathrm{hor}(q)$ for some $q \in X^\mathrm{hor}$ or with $(q^-, q, q^+) \in \mathcal{E}^\mathrm{vert}(q)$ for some $q \in X^\mathrm{vert}$, i.e., $\pi_v$ goes through $\{q^-, q\}$ and $\{q, q^+\}$.
Then the minimum length of a network under this assumption is the same as $\tilde{N} \in \Opt(\tilde{\GMMN})$, where $\tilde{\GMMN} = (\GMMN - v) \cup \{v_1, v_2, v_3, v_4\}$ with $v_1 = (s_v, q^-)$, $v_2 = (q^-, q)$, $v_3 = (q, q^+)$, and $v_4 = (q^+, t_v)$ (see Figure~\ref{fig:cycle1}).
It is shown that $\IG[\tilde{\GMMN}]$ has no cycles as follows.

\begin{figure}[tb]
\begin{tabular}{cc}
\begin{minipage}{0.48\hsize}
    \centering
    \includegraphics[width=\hsize]{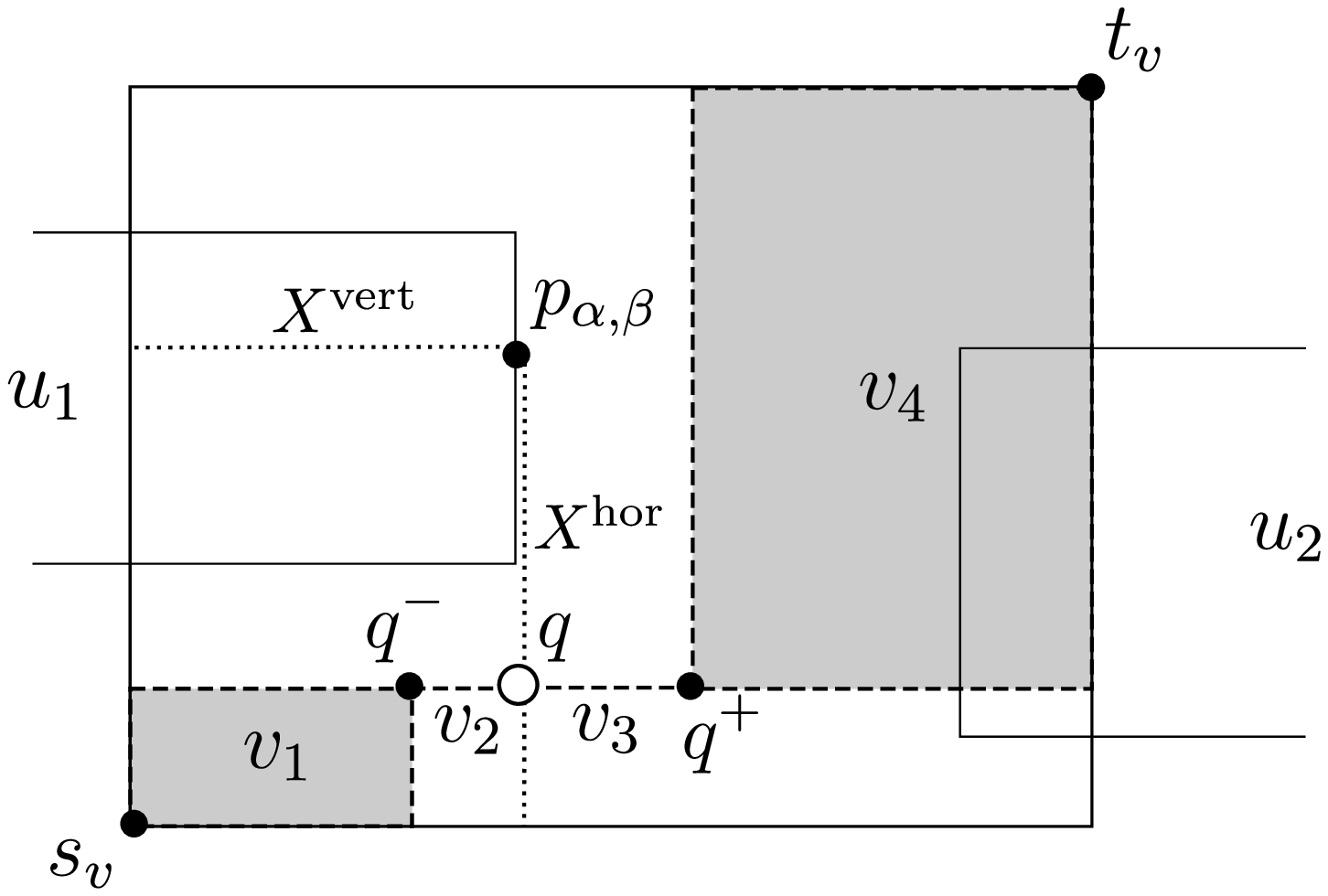}
    \subcaption{}
\end{minipage}&
\begin{minipage}{0.48\hsize}
    \centering
    \includegraphics[width=\hsize]{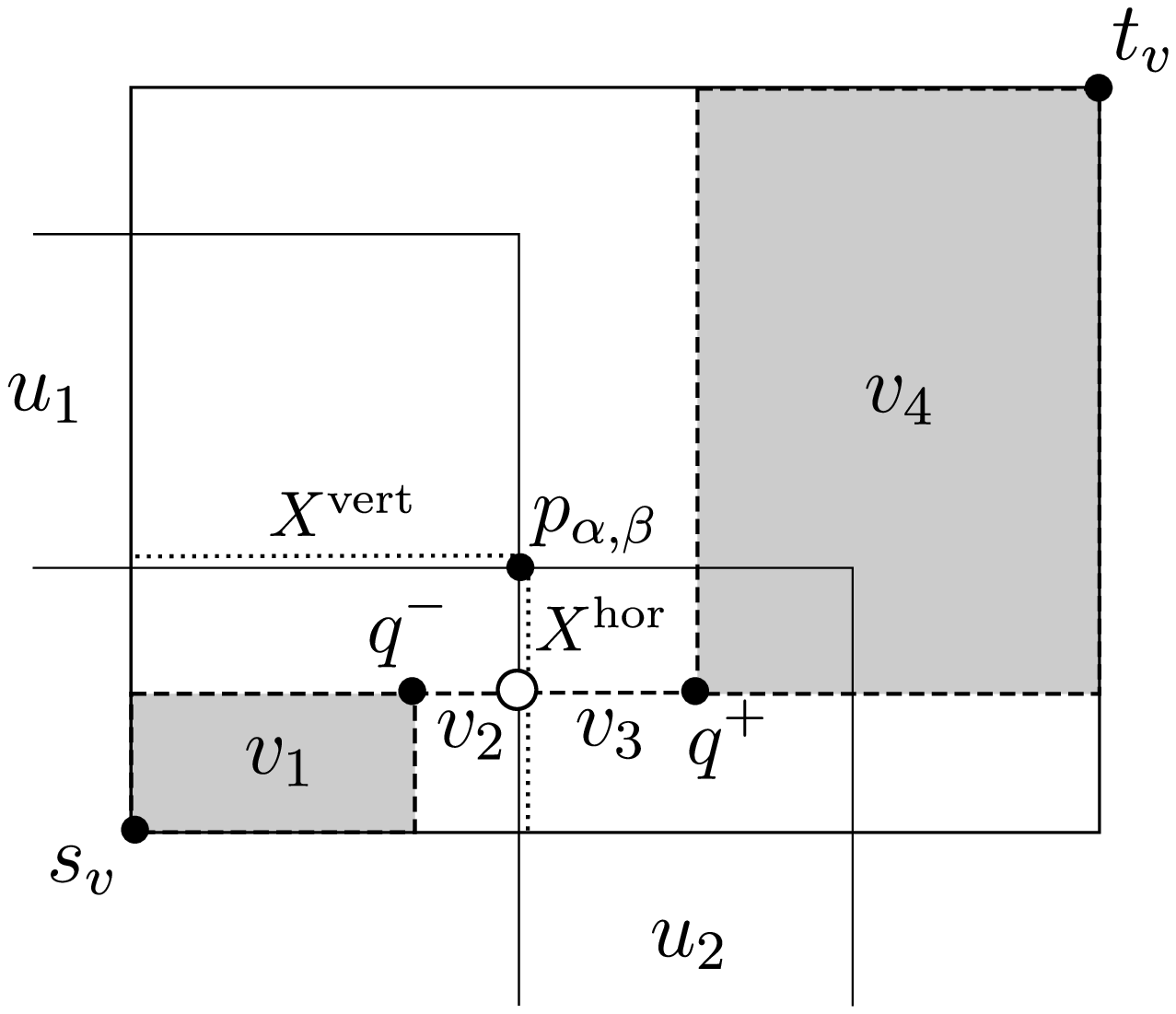}
    \subcaption{}
\end{minipage}
\end{tabular}
\caption{Construction of $\tilde{\GMMN} = (\GMMN - v) \cup \{v_1, v_2, v_3, v_4\}$, where $q \in X^\mathrm{hor} \cup X^\mathrm{vert}$ is on dotted lines.}
\label{fig:cycle1}
\end{figure}

\begin{claim}\label{lem:reduction_forest}
  $\IG[\tilde{\GMMN}]$ is a forest.
\end{claim}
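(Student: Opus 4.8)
The plan is to show that replacing $v$ by the four pairs $v_1,v_2,v_3,v_4$ cuts the unique cycle $C$ of $\IG[\GMMN]$ without creating any new cycle, so that $\IG[\tilde{\GMMN}]$ becomes a forest. First I would observe that the bounding boxes satisfy $B(v_1)\cup B(v_2)\cup B(v_3)\cup B(v_4)\subseteq B(v)$, since $s_v$, $q^-$, $q$, $q^+$, $t_v$ all lie inside $B(v)$ and each $B(v_i)$ is a sub-rectangle of $B(v)$; moreover each $B(v_i)$ is ``thin'' — it is either a horizontal segment, a vertical segment, or a single point — because consecutive vertices among $s_v,q^-,q,q^+,t_v$ along the relevant grid path differ in only one coordinate (this is exactly what $\mathcal{E}^\mathrm{hor}$ and $\mathcal{E}^\mathrm{vert}$ encode: $(q^-,q,q^+)$ is an ``L-bend'' or a straight step, and $s_v$--$q^-$ and $q^+$--$t_v$ are the two monotone ``staircase'' pieces completing an M-path for $v$). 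Consequently $B(v_i)$ can only intersect $B(w)$ for $w\in \GMMN-v$ if $B(v)$ did, so every edge of $\IG[\tilde{\GMMN}]$ incident to some $v_i$ corresponds to an edge of $\IG[\GMMN]$ incident to $v$; in particular the neighbors of $\{v_1,v_2,v_3,v_4\}$ outside this set form a subset of $N_{\IG[\GMMN]}(v)$.

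Next I would analyze the subgraph induced on $\{v_1,v_2,v_3,v_4\}$: these four pairs are arranged along the M-path $s_v$--$q^-$--$q$--$q^+$--$t_v$, so consecutive ones ($v_1v_2$, $v_2v_3$, $v_3v_4$) may be adjacent (sharing an endpoint), but $v_1$ and $v_3$ have disjoint bounding boxes, as do $v_2$ and $v_4$, and $v_1$ and $v_4$ meet at most in a single point (hence are non-adjacent in our sense); thus the induced subgraph is a subgraph of the path $v_1\!-\!v_2\!-\!v_3\!-\!v_4$ and is a forest. The crucial point — and the main obstacle — is to rule out a cycle that uses some $v_i$ together with vertices of $\GMMN-v$. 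Here I would use the hypothesis that $\IG[\GMMN]$ is a pseudotree with a single cycle $C$, and that $u_1,u_2$ (the two $C$-neighbors of $v$) are separated by a vertical line with $B(u_1)$ strictly to the left. The choice of $q\in X^\mathrm{hor}\cup X^\mathrm{vert}$ guarantees that the whole network $B(v_1)\cup B(v_2)\cup B(v_3)\cup B(v_4)$ lies weakly to the left of (or on) the column through $q$ in its ``upper-left'' part and weakly below/right otherwise, so that at most one of the $v_i$ can intersect $B(u_1)$ and, separately, at most one can intersect $B(u_2)$ — and no single $v_i$ intersects both. Therefore the two ``sides'' of the old cycle $C$ (the $u_1$-side and the $u_2$-side) get attached to \emph{different} vertices among $v_1,\dots,v_4$ that are not re-joined to each other except through the path $v_1\!-\!v_2\!-\!v_3\!-\!v_4$, which is already the unique connection; so the cycle through $v$ is destroyed and no replacement cycle appears.

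To finish, I would argue contrapositively: suppose $\IG[\tilde{\GMMN}]$ contained a cycle $Z$. Contracting the set $\{v_1,v_2,v_3,v_4\}$ back to a single vertex $v$ (legitimate since their external neighbors are all old neighbors of $v$) maps $Z$ either to a cycle of $\IG[\GMMN]$ or to a closed walk in $\IG[\GMMN]-v$; the former forces $Z$ to pass through $v$ in $\IG[\GMMN]$, hence through $u_1$ and $u_2$, contradicting that no $v_i$ is adjacent to both $B(u_1)$ and $B(u_2)$ (so $Z$ cannot enter and leave the block $\{v_1,\dots,v_4\}$ on the ``$C$''-sides); the latter gives a cycle in $\IG[\GMMN]-v\subseteq \IG[\GMMN]$ not using $v$, contradicting that $C$ is the \emph{unique} cycle and $v\in C$. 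Either way we reach a contradiction, so $\IG[\tilde{\GMMN}]$ is acyclic, i.e.\ a forest. The only delicate part of writing this up carefully is the geometric separation claim — formalizing, from the definitions of $\alpha,\beta$, $X^\mathrm{hor}$, $X^\mathrm{vert}$, and the ``$(q^-,q,q^+)$'' bend, that no $B(v_i)$ meets both $B(u_1)$ and $B(u_2)$, and that the bends genuinely split $B(v)$ into pieces respecting the left/right separation of $u_1,u_2$; everything else is bookkeeping about which $B(v_i)$'s are thin segments meeting only where they share an endpoint.
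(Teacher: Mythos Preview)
Your high-level plan (neighbors of each $v_i$ lie in $\Gamma_v$; contract $\{v_1,\dots,v_4\}$ back to $v$ and invoke uniqueness of the cycle $C$) is reasonable, but there are two concrete gaps that make the argument as stated incomplete.

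First, the claim that ``each $B(v_i)$ is thin'' is false for $i\in\{1,4\}$. The pair $v_1=(s_v,q^-)$ has bounding box $B(s_v,q^-)$, and $q^-$ is an arbitrary grid vertex $p_{i-1,j}$ or $p_{i,j-1}$ with, in general, both coordinates strictly larger than those of $s_v$; so $B(v_1)$ is a genuine rectangle, and likewise for $B(v_4)$. Only $B(v_2)$ and $B(v_3)$ are single Hanan-grid edges. This matters because your later separation claim (``at most one $v_i$ meets $B(u_1)$'') implicitly relies on the $B(v_i)$ being thin, and in fact $B(v_1)$ and $B(v_4)$ can each overlap a large portion of $B(u_1)\cap B(v)$ or $B(u_2)\cap B(v)$.

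Second, and more seriously, your contraction argument overlooks the possibility that a cycle $Z$ in $\IG[\tilde{\GMMN}]$ enters and leaves the block $\{v_1,\dots,v_4\}$ through a vertex $w\in\Gamma_v\setminus\{u_1,u_2\}$. Concretely, consider a cycle of the form
\[
u_1 - v_1 - w - v_4 - u_2 - \cdots - u_1,
\]
where $w$ is some other neighbor of $v$ in $\IG[\GMMN]$ whose bounding box happens to meet both $B(v_1)$ and $B(v_4)$. Under your contraction this becomes the closed walk $u_1 - v - w - v - u_2 - \cdots - u_1$, which is not a simple cycle; the only simple cycle it contains is $C$ itself, so you do not get a contradiction from ``no $v_i$ is adjacent to both $u_1$ and $u_2$''. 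This is precisely the case~(C3) in the paper's proof, and ruling it out requires a careful geometric argument using the exact position of $q$ relative to the corners of $B(u_1)\cap B(v)$ and the triangle-freeness of $\IG[\GMMN]$ (to force $B(w)\cap B(u_1)=\emptyset$). Your sketch (``the two sides \dots\ are not re-joined except through the path $v_1\!-\!v_2\!-\!v_3\!-\!v_4$'') simply asserts this without justification; it is the heart of the claim, not bookkeeping.
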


\begin{proof}
  Let $\Gamma_v$ be the set of neighbors of $v$ in $\IG[\GMMN]$, excluding $v$ itself.
  Then $\IG[\Gamma_v]$ is edgeless as $P$ is triangle-free.
  Put $\tilde{\Gamma} = \Gamma_v \cup \tilde{V}$ with $\tilde{V} = \{v_1, v_2, v_3, v_4\}$.
  Then, for every $v_k \in \tilde{V}$, neighbors of $v_k$ in $\IG[\tilde{\GMMN}]$ are included in $\Gamma_v$ by $B(v_k) \subseteq B(v)$.
  In addition, for $k \in \{2, 3\}$, the graph $\cH(\tilde{\GMMN}, v_k)$ consists of a single edge in $E(\cH(\GMMN, v))$ or is a single vertex in $V(\cH(\GMMN, v))$.
  Therefore, there exists at most one pair $w \in \Gamma_v$ such that $B(w)$ intersects $B(v_k)$.
  This means that the degree of $v_2$ and $v_3$ in $\IG[\tilde{\Gamma}]$ is at most one, and hence they are not in any cycle in $\IG[\tilde{\GMMN}]$.
  
  Since the pairs in $\Gamma_v$ are not adjacent to each other in $\IG[\GMMN]$ and we have $q^- \le q^+$ and $q^- \ne q^+$ by definition, at most one pair in $\Gamma_v$ can be adjacent to both $v_1$ and $v_4$ in $\IG[\tilde\Gamma]$.
  Thus $\IG[\tilde\Gamma]$ is a forest.
  Then, if $\IG[\tilde\GMMN]$ has a cycle, at least one of the following holds:
  \begin{enumerate}[label={(\arabic*)},labelindent=\parindent,leftmargin=*]
    \item[(C1)] $\{u_1, v_1\}, \{v_1, u_2\} \in E(\IG[\tilde{\Gamma}])$,
    \item[(C2)] $\{u_1, v_4\}, \{v_4, u_2\} \in E(\IG[\tilde{\Gamma}])$, or
    \item[(C3)] there exists $w \in \Gamma_v \setminus \{u_1, u_2\}$ such that $\{u_1, v_1\}, \{v_1, w\}, \{w, v_4\}, \{v_4, u_2\} \in E(\IG[\tilde{\Gamma}])$.
  \end{enumerate}
  In what follows, we see that none of these is the case.
  
  Let $i, h_1, h_2, g_1, g_2 \in [a]$ and $j, \gamma \in [b]$ such that $q = p_{i,j}$, the upper- and lower-right corners of $B(u_1) \cap B(v)$ are $p_{h_1,\beta}$ and $p_{g_1,\beta}$, respectively, and the upper- and lower-left corners of $B(u_2) \cap B(v)$ are $p_{h_2,\gamma}$ and $p_{g_2,\gamma}$, respectively.
  Note that $\alpha = \min \left\{h_1, h_2\right\}$ and $\beta \leq \gamma$.
  
  We first consider the exceptional case when $q = s_v \ (= p_{1,1})$. 
  Since $\cH(P, v_1)$ consists of the single vertex $s_v$, neither~(C1) nor~(C3) holds.
  If $q^+ = p_{1,2}$, then $p_{1,1}\in X^\mathrm{hor}$ and $\beta = 1$, which implies $B(u_1) \cap B(v_4) = \emptyset$.
  Otherwise (i.e., $q^+ = p_{2,1}$), we have $p_{1,1}\in X^\mathrm{vert}$.
  This implies $\min\left\{h_1, h_2\right\} = \alpha = 1$, and hence $B(u_1) \cap B(v_4) = \emptyset$ or $B(u_2) \cap B(v_4) = \emptyset$.
  Thus~(C2) does not hold, and we are done.
 
  We next deal with the case when $q \in X^{\mathrm{hor}} - s_v$ and $(q^-, q, q^+) \in \mathcal{E}^{\mathrm{hor}}$.
  By the definition of $X^{\mathrm{hor}}$, it holds $j = \beta$.
  Consider~(C1).
  If $\beta < \gamma$, we have $B(u_2) \cap B(v_1) = \emptyset$, which negates~(C1).
  Suppose that $\beta = \gamma$.
  Then it must holds $h_1 \le g_2$ or $h_2 \le g_1$ since otherwise $B(u_1)$ intersect $B(u_2)$.
  In the former case, we have $\alpha = h_1 \le g_2$ and thus $B(v_1)$ does not intersect $B(u_2)$.
  In the latter case, say $\alpha = h_2 \le g_1$ (see Figure~\ref{fig:cycle3}), it holds $B(v_1) \cap B(u_2) = \emptyset$ if $q^- = p_{i, \beta-1}$ and $B(v_1) \cap B(u_1) = \emptyset$ if $q^- = p_{i-1, \beta}$.
  Therefore,~(C1) does not hold for any case.
  We also have $B(u_1) \cap B(v_4) = \emptyset$ by $q_x \le q^+_x$ and $q_x \ne q^+_x$, which means that~(C2) does not hold either.
  
  \begin{figure}[tb]
    \begin{tabular}{cc}
      \begin{minipage}{0.48\hsize}
        \centering
        \includegraphics[width=\hsize]{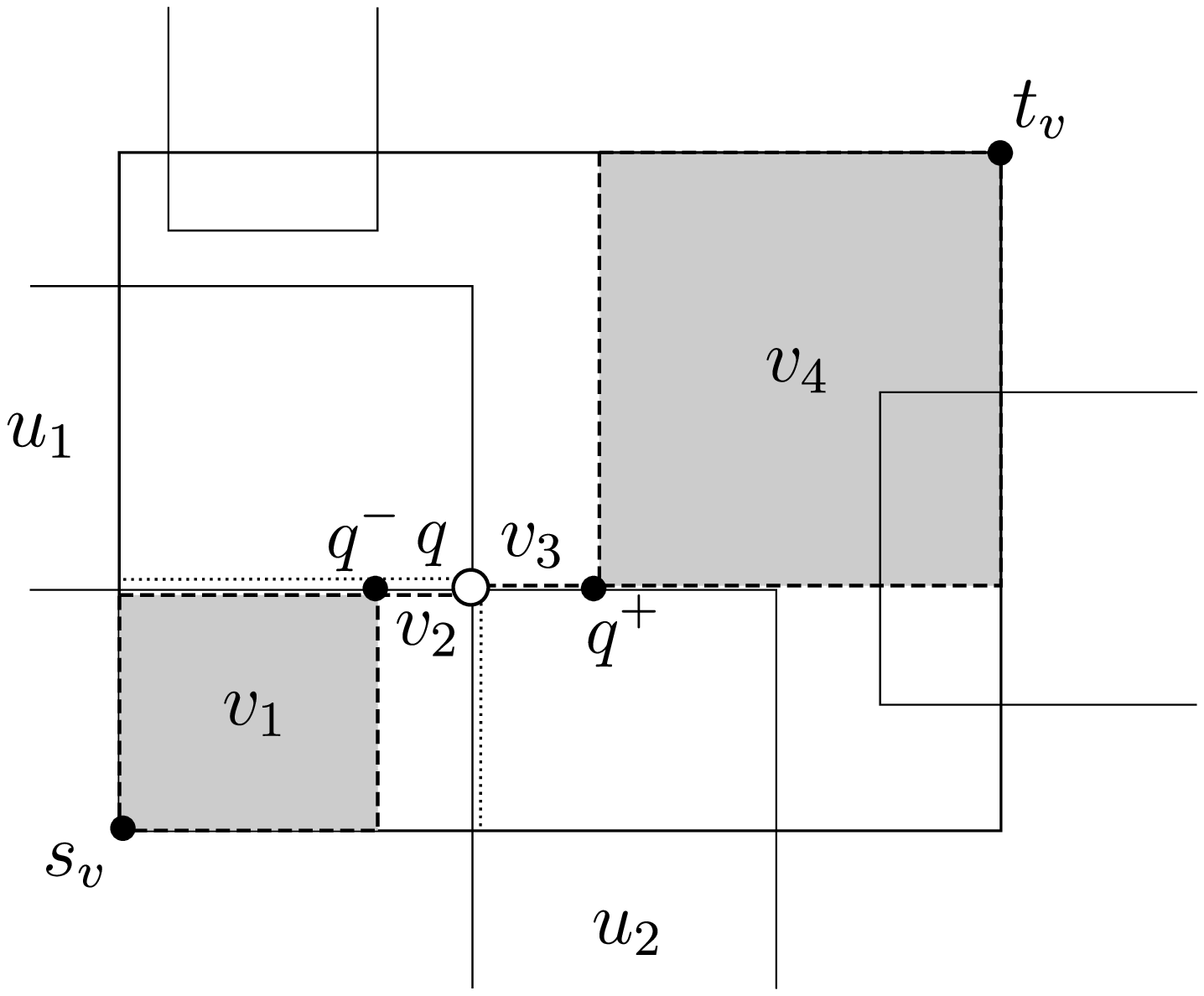}
        \subcaption{}
      \end{minipage}&
      \begin{minipage}{0.48\hsize}
        \centering
        \includegraphics[width=\hsize]{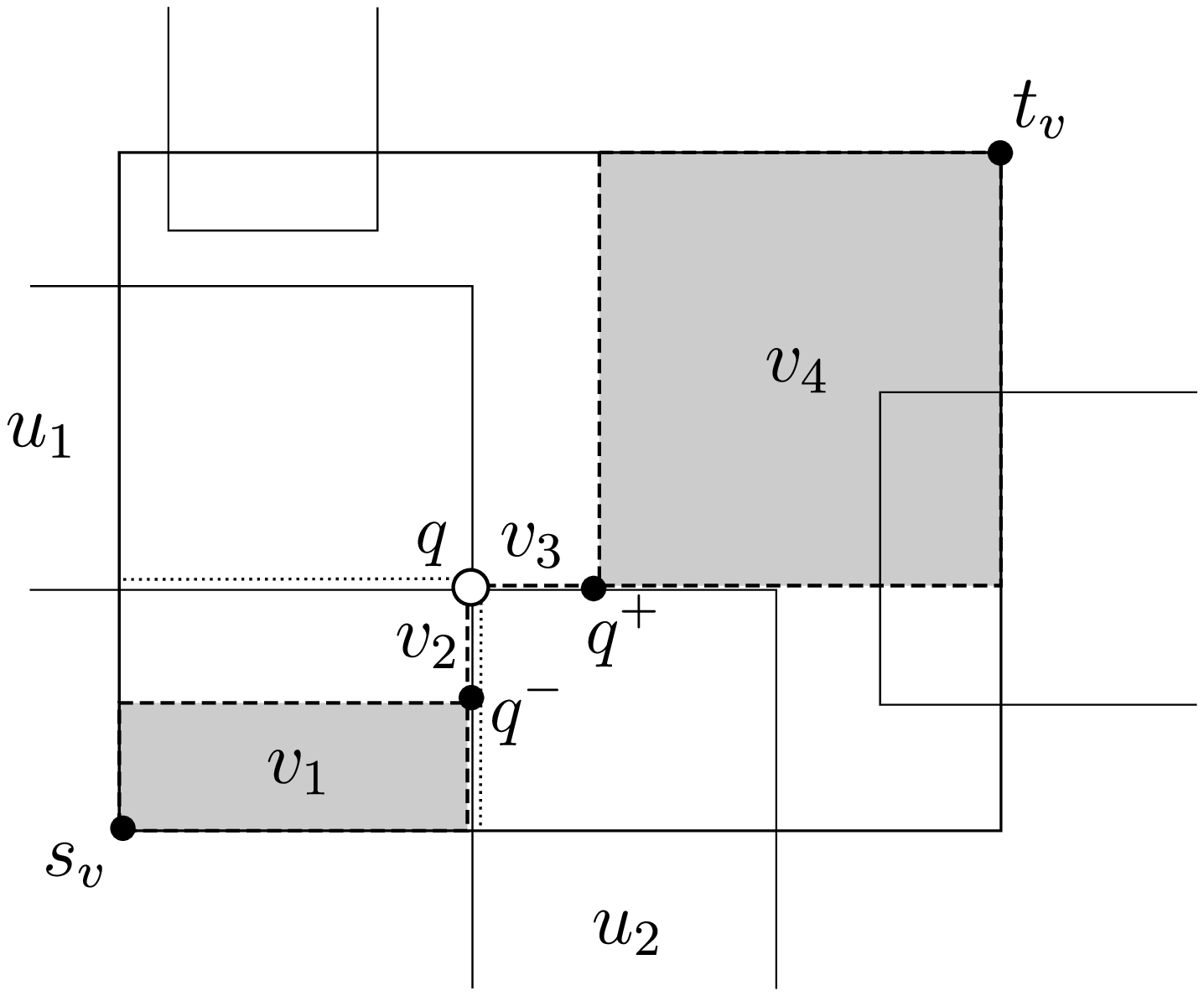}
        \subcaption{}
      \end{minipage}
    \end{tabular}
    \caption{An uneasy situation when $B(u_1)$ and $B(u_2)$ share their corners at $q$ (i.e., $\beta = \gamma = i$ and $g_1 = h_2$). (a) $\pi_v \in \Pi_\GMMN(v)$ goes through $q$. (b) $\pi_v \in \Pi_\GMMN(v)$ turns at $q$.}
    \label{fig:cycle3}
  \end{figure}

  Suppose that there exists $w \in \Gamma_v \setminus \{u_1, u_2\}$ satisfying the condition in~(C3).
  If $i < g_1$, then $B(v_1)$ does not intersect $B(u_1)$, which contradicts $\{u_1, v_1\} \in E(\IG[\tilde{\Gamma}])$.
  If $i > g_1$, then $B(w)$ must intersect $B(u_1)$, which also contradicts the assumption that $P$ is triangle-free.
  Otherwise, $i = g_1$ (see Figure~\ref{fig:cycle2}).
  If $q^- = p_{i, \beta-1}$, then $B(w)$ intersects $B(u_1)$, a contradiction again.
  If $q^- = p_{i-1, \beta}$, the bounding box $B(v_1)$ does not intersect $B(u_1)$.
  Hence~(C3) is not true for any case.
  
  \begin{figure}[tb]
    \begin{tabular}{cc}
      \begin{minipage}{0.48\hsize}
        \centering
        \includegraphics[width=\hsize]{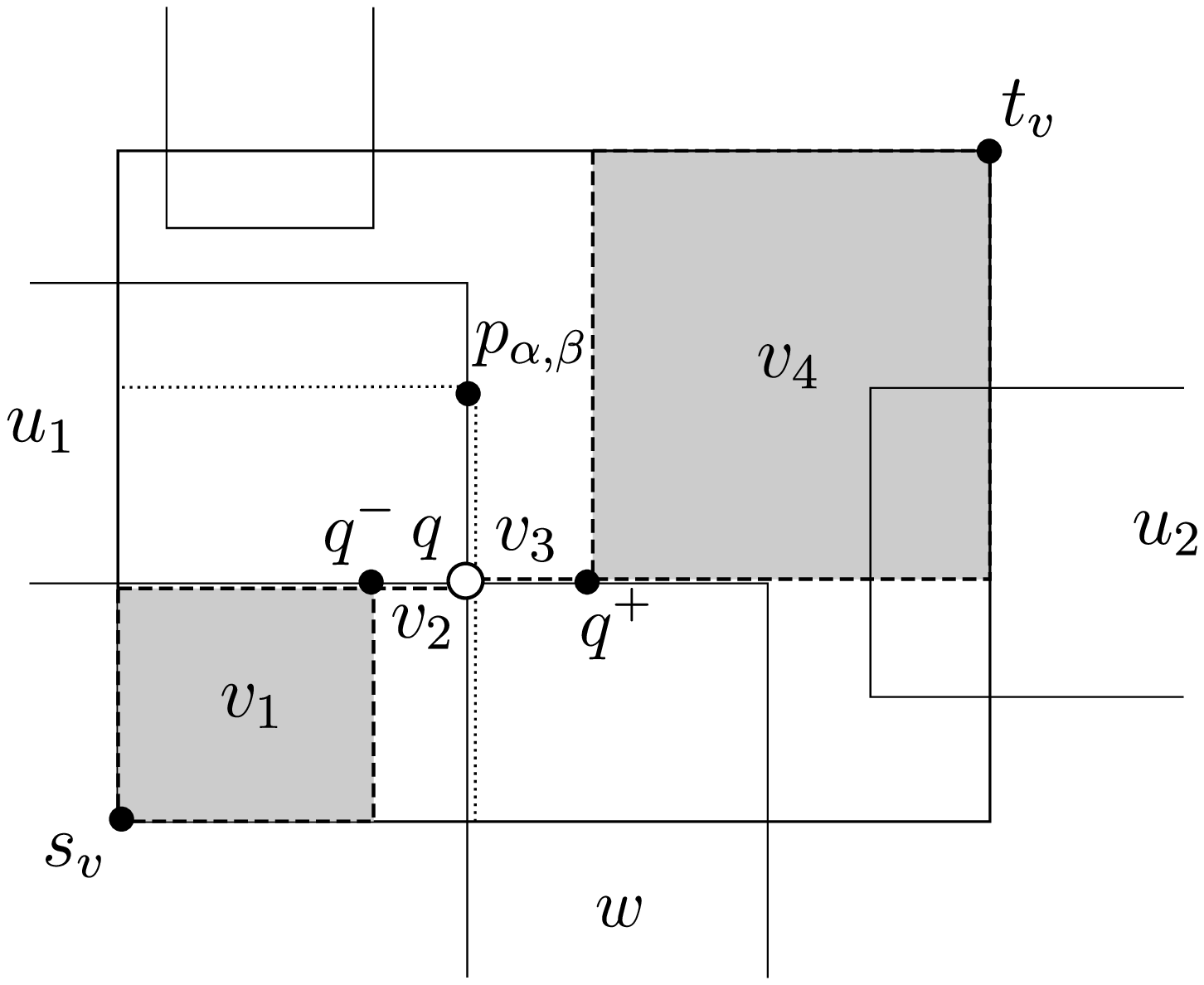}
        \subcaption{}
      \end{minipage}&
      \begin{minipage}{0.48\hsize}
        \centering
        \includegraphics[width=\hsize]{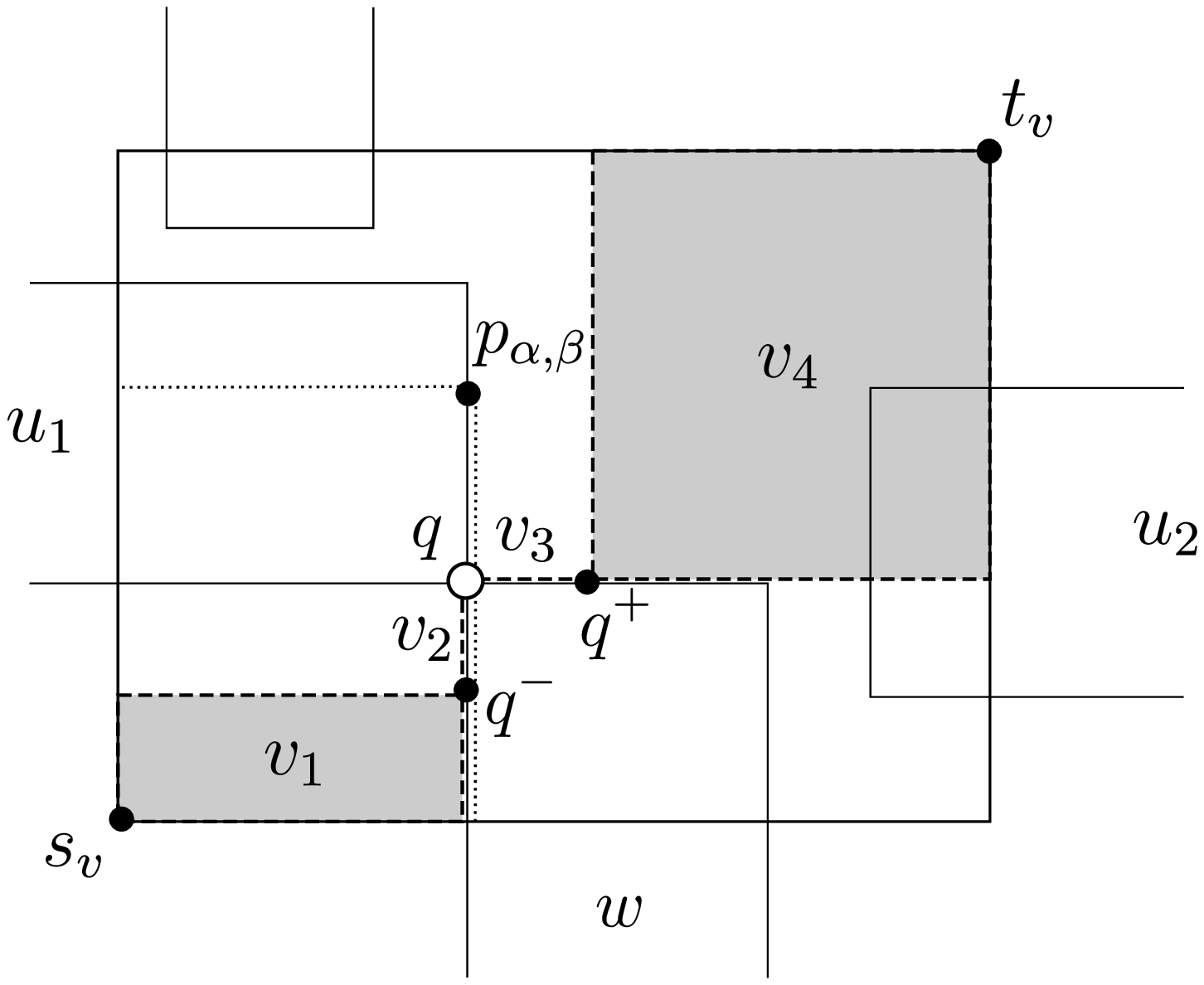}
        \subcaption{}
      \end{minipage}
    \end{tabular}
    \caption{An uneasy situation when $B(u_1)$ and $B(w)$ for some $w \in \Gamma_v \setminus \{u_1, u_2\}$ share their corners at $q$. (a) $\pi_v \in \Pi_\GMMN(v)$ goes through $q$. (b) $\pi_v \in \Pi_\GMMN(v)$ turns at $q$.}
    \label{fig:cycle2}
  \end{figure}
  
  Finally, we consider the situation when $q \in X^{\mathrm{vert}} - s_v$ and $(q^-, q, q^+) \in \mathcal{E}^{\mathrm{vert}}$.
  We have $i = \alpha$ by the definition of $X^\mathrm{vert}$.
  If $j < \beta$, then~(C1) does not hold because $B(v_1)$ does not intersect $B(u_2)$.
  Suppose that $j = \beta$, which means that $q = p_{\alpha, \beta} \in X^\mathrm{hor} \cap X^\mathrm{vert}$.
  Since the present $v_1$ is the same as that in the case of $(q^-, q, p_{\alpha,\beta+1}) \in \mathcal{E}^\mathrm{hor}$ if $\beta < b$, we have already proved that~(C1) does not hold in the above horizontal case (cf.~Figure~\ref{fig:cycle3}).
  It can be checked that the same proof is valid even if $\beta = b$.
  Thus~(C1) does not hold in any case.
  In addition, since $q_y \le q^+_y$ and $q_y \ne q^+_y$, we also have $B(v_4) \cap B(u_k) = \emptyset$ if $\alpha = h_k$ for $k = 1, 2$.
  Hence~(C2) is not the case either.
  
  Suppose that there exists $w \in \Gamma_v \setminus \{u_1, u_2\}$ that satisfies~(C3).
  As we have mentioned above, $B(v_4)$ does not intersect $B(u_2)$ if $\alpha = h_2$, which contradicts $\{v_4, u_2\} \in E(\IG[\tilde{\Gamma}])$.
  Suppose that $\alpha = h_1$.
  Let $\delta \in [b]$ such that $p_{\alpha, \delta}$ is the upper-left corner of $B(u_1) \cap B(v)$.
  If $j < \delta$, then $B(v_1)$ does not intersect $B(u_1)$, which contradicts $\{u_1, v_1\} \in E(\IG[\tilde{\Gamma}])$.
  If $j > \delta$, then $B(w)$ intersects $B(u_1)$; this contradicts the assumption that $\GMMN$ is triangle-free.
  Consider the remaining situation when $j = \delta$.
  If $q^- = p_{\alpha-1, j}$, then $B(w)$ intersects $B(u_1)$, a contradiction again.
  If $q^- = p_{\alpha, j-1}$, then $B(v_1)$ does not intersect $B(u_1)$.
  Therefore,~(C3) does not hold in any case, and we are done.
\end{proof}

From Claim~\ref{lem:reduction_forest}, we have successfully reduced the GMMN[Pseudotree] instance $P$ to $O(n)$ instances of GMMN[Tree] each of which has $n + 2$ pairs.

\begin{proposition}\label{prop:cycle}
\label{prop:cycle}
If \textup{GMMN[Tree]} can be solved in $T(n)$ time, then \textup{GMMN[Pseudotree]} can be solved in $O(n\cdot T(n+2))$ time.
\end{proposition}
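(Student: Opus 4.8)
The plan is to follow the construction developed above and to split into cases on the shape of $\IG[\GMMN]$. If $\IG[\GMMN]$ is a tree, a single call to the GMMN[Tree] algorithm finishes in $T(n) = O(T(n+2))$ time. Otherwise $\IG[\GMMN]$ has a unique cycle $C$, necessarily of length at least four since triangle-freeness rules out a $3$-cycle. If $C$ contains a degenerate pair, we use the observation above to split it into two non-adjacent degenerate pairs that cut the cycle; because a degenerate pair has a unique M-path equal to its defining segment, this split preserves both feasibility and optimal length, and the new instance — with $n+1$ pairs and a forest intersection graph — is then solved componentwise within $O(n\cdot T(n+2))$ time (the components do not share segments, so each is handled independently by the tree algorithm, and monotonicity of $T$ keeps the total within the bound). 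Hence the main case is when every pair of $C$ is nondegenerate, and there we run the enumeration described above.

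Next I would make the enumeration precise. Pick $v \in C$ with its two cycle-neighbours $u_1, u_2$ separated by an axis-aligned line, and set up $\alpha,\beta$ and the sets $X^\mathrm{hor}, X^\mathrm{vert}$ as above. As already noted, every M-path $\pi_v \in \Pi_\GMMN(v)$ is consistent with exactly one triple $(q^-,q,q^+) \in \bigcup_{q\in X^\mathrm{hor}}\mathcal{E}^\mathrm{hor}(q)\cup\bigcup_{q\in X^\mathrm{vert}}\mathcal{E}^\mathrm{vert}(q)$; since $|X^\mathrm{hor}| = \alpha \le a$, $|X^\mathrm{vert}| = \beta \le b$, and each $\mathcal{E}^\mathrm{hor}(q),\mathcal{E}^\mathrm{vert}(q)$ has at most two elements, the number of triples is $O(a+b) = O(n)$. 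For each triple we build $\tilde\GMMN = (\GMMN - v)\cup\{v_1,v_2,v_3,v_4\}$ ($n+2$ pairs, with a forest intersection graph by Claim~\ref{lem:reduction_forest}), solve GMMN on it with the GMMN[Tree] algorithm applied to each connected component, and finally return the shortest network found over all triples. The core correctness statement is that, for a fixed triple, the minimum length over $\Feas(\GMMN)$ of a network whose M-path for $v$ is consistent with $(q^-,q,q^+)$ equals the optimum of GMMN on $\tilde\GMMN$.

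To establish this equivalence: given $\tilde N = (\pi_w)_{w\in\tilde\GMMN}\in\Feas(\tilde\GMMN)$, one has $s_v \le q^- \le q \le q^+ \le t_v$ (a short check from the definition of $\mathcal{E}^\mathrm{hor},\mathcal{E}^\mathrm{vert}$ and the indexing $p_{1,1}=s_v$), so the pieces $\pi_{v_1},\pi_{v_2},\pi_{v_3},\pi_{v_4}$ concatenate along this monotone chain into an M-path $\pi_v$ for $v$ that is consistent with $(q^-,q,q^+)$ — here one uses that $v_2,v_3$ are degenerate with unique M-paths equal to the segments $q^-q$ and $qq^+$ — and replacing $\{\pi_{v_1},\pi_{v_2},\pi_{v_3},\pi_{v_4}\}$ by $\pi_v$ yields a network in $\Feas(\GMMN)$ of the same length. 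Conversely, restricting a consistent $\pi_v$ to the four subboxes $B(v_1),\dots,B(v_4)$ produces valid M-paths for $v_1,\dots,v_4$, hence a network in $\Feas(\tilde\GMMN)$ of the same length. Both $\IG[\GMMN]$ and $\IG[\tilde\GMMN]$ are triangle-free, so by the discussion in Section~\ref{sec:specialization} the length of a network is governed solely by the total length of segments shared by pairs of M-paths, and one checks that this pairwise-sharing structure is carried over verbatim under the gluing and the splitting. Granting this, the algorithm performs $O(n)$ calls to the GMMN[Tree] algorithm on $(n+2)$-pair instances, which gives the claimed $O(n\cdot T(n+2))$ running time.

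The step I expect to be the main obstacle is exactly the verbatim preservation of pairwise sharing in the previous paragraph, specifically the bookkeeping for segments and corner vertices shared between pairs that are \emph{non-adjacent} in the intersection graph (the subtlety flagged in the footnote of Section~\ref{sec:specialization}); it is precisely to keep this robust that $u_1,u_2$ are taken genuinely separated by a line and the turning triples are enumerated so carefully. Everything else — exhaustiveness and mutual exclusivity of the branches, and the forest property of each $\tilde\GMMN$ — is already handled by the discussion above and by Claim~\ref{lem:reduction_forest}.
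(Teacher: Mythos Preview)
Your proposal is correct and follows essentially the same approach as the paper: you use the same enumeration over triples $(q^-,q,q^+)$, the same replacement $\tilde\GMMN = (\GMMN - v)\cup\{v_1,v_2,v_3,v_4\}$, and you invoke Claim~\ref{lem:reduction_forest} to guarantee the forest structure. Your write-up in fact spells out the length-equivalence between constrained solutions of $\GMMN$ and solutions of $\tilde\GMMN$ more carefully than the paper, which simply asserts it; the monotone-chain concatenation argument via $s_v \le q^- \le q \le q^+ \le t_v$ is the right justification.
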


By combining Proposition~\ref{prop:cycle} with Theorem~\ref{thm:speedup}, we obtain Corollary~\ref{cor:cycle}.

\section*{Acknowledgments}
We are grateful to the anonymous reviewers of the preliminary version~\cite{MOY} for their careful reading and giving helpful comments.
Most of this work was done when the first and third authors were with Osaka University, and the two authors appreciate the support from members of Department of Information and Physical Sciences.

\bibliographystyle{abbrv}
\bibliography{refs}

\clearpage
\appendix
\section{Faster Dynamic Programming on Tree Decompositions}\label{sec:tree_decomposition}
In this section, we prove the following theorem (cf.~Table~\ref{tab:previous_result}).

\begin{theorem}\label{thm:tree-decomposition}
There exists an $O(f(\tw, \Delta)\cdot n^{2\Delta(\tw + 1) + 1})$-time algorithm for the GMMN problem, where $\tw$ and $\Delta$ denote the treewidth and the maximum degree of the intersection graph $\IG[P]$ for the input $P$, respectively, and $f$ is a computable function.
\end{theorem}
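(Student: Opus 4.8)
The plan is to run a dynamic program on a tree decomposition of $\IG[\GMMN]$, generalizing the tree‑case DP of Sections~\ref{sec:tree}--\ref{sec:speedup}. First I would compute a rooted \emph{nice} tree decomposition $(T, \{X_t\}_{t \in V(T)})$ of $\IG[\GMMN]$ of width $\tw$; since $\tw$ is treated as a parameter, any $f(\tw)\cdot n^{O(1)}$‑time algorithm for this (indeed a linear‑time one) suffices, and by standard manipulations we may assume that $T$ has $O(n)$ nodes, that every bag has at most $\tw+1$ pairs, and that each node is an introduce‑, forget‑, or join‑node. The DP processes $T$ from the leaves toward the root, exactly as the tree algorithm processes $\IG[\GMMN]$ from the leaves toward $r$.

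The state of the DP at a node $t$ is a \emph{sketch} $\sigma_t$ that records, for each pair $v \in X_t$ and each edge $\{v,w\} \in E(\IG[\GMMN])$ incident to $v$, the in‑out pair of $\pi_v$ with respect to the region $B(v) \cap B(w)$, i.e.\ the two vertices of $\cH(\GMMN)$ at which an M‑path $\pi_v \in \Pi_\GMMN(v)$ enters and leaves $B(v) \cap B(w)$ (using the $\epsilon$‑convention of Section~\ref{sec:tree} when $\pi_v$ avoids that region). For each sketch, the table stores the optimal value of the subproblem induced by the pairs already forgotten below $t$, under the boundary condition $\sigma_t$ on the M‑paths of the pairs in $X_t$; this is the multi‑pair generalization of $\rmdp(v,p_v,q_v)$. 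The value computed at the root of $T$ then yields $\|N\|$ for an optimal $N \in \Opt(\GMMN)$ via the correspondence between sharable length and network length used before.

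Three structural facts drive the correctness. \emph{Locality of coverage}: for any edge $e$ of $\cH(\GMMN)$, the set $\{v \in \GMMN \mid e \subseteq B(v)\}$ is a clique of $\IG[\GMMN]$ (if $e \subseteq B(v) \cap B(w)$ then $e \in E(\cH(\GMMN,v)) \cap E(\cH(\GMMN,w))$), hence has at most $\tw+1$ elements and is contained in a common bag; therefore whether $e$ lies in $N$, and how much overlap it induces, is decided inside a single bag. \emph{Objective decomposition}: given all the sketches along $T$, $\|N\|$ splits into contributions that can be charged to forget‑nodes, each determined by the local sketch and the $\le \tw+1$ pairs then in the bag. \emph{Simultaneous realizability}: for a clique of pairs together with a consistent choice of in‑out pairs within their pairwise intersections, one can simultaneously choose M‑paths achieving the maximal sharable lengths; this is the multi‑pair analogue of Lemma~\ref{lem:sharable_segments}, and at introduce‑nodes the auxiliary‑DAG / longest‑path reduction of Lemma~\ref{lem:tree_the_longest} is invoked locally to compute and propagate the relevant $\rmdp$‑like values. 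The transitions are then the usual ones: an introduce‑node adds a pair and restricts to sketches realizable by a single M‑path for it (checkable via the DAG of Section~\ref{sec:DAG_tree}); a forget‑node projects out the forgotten pair after adding its charged contribution; a join‑node requires the two children to agree on the common bag's sketch, so the tables are merged by matching sketches and summing values while subtracting the within‑bag overlap that would otherwise be double‑counted. A single in‑out pair is an ordered pair of $O(n)$ Hanan‑grid points, so $O(n^2)$ choices; a pair has at most $\Delta$ incident edges and a bag at most $\tw+1$ pairs, giving $O\bigl(n^{2\Delta(\tw+1)}\bigr)$ sketches per node. With $O(n)$ nodes and $f(\tw,\Delta)$‑bounded work per transition (the join causes no blow‑up because it only matches on the shared bag), the total running time is $O\bigl(f(\tw,\Delta)\cdot n^{2\Delta(\tw+1)+1}\bigr)$.

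The main obstacle is establishing the objective‑decomposition and simultaneous‑realizability claims rigorously without a triangle‑freeness assumption: one must show that in‑out pairs carry enough information both to evaluate $\|N\|$ exactly and to guarantee that locally optimal M‑paths over a clique can be glued into one globally feasible solution in $\Feas(\GMMN)$, while correctly handling shared corner points of non‑adjacent pairs (the subtlety flagged in the footnote of Section~\ref{sec:specialization}) and degenerate pairs, and while avoiding double counting of overlaps at join‑nodes. Since this argument is a technical but essentially routine extension of Schnizler's DP and is tangential to the main results of the paper, we only sketch it here.
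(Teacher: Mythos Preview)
Your DP skeleton on a nice tree decomposition is the same as the paper's, but your choice of state is genuinely weaker, and the ``main obstacle'' you flag is not a technicality---it is exactly where the argument breaks.

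The paper does \emph{not} use in--out pairs as its state. Its DP table $\twdp(t,\cdot)$ is indexed by a full tuple of M-paths $(\pi_v^*)_{v\in X_t}$, one for each pair in the bag. The reason this still gives only $O(n^{2\Delta(\tw+1)})$ states is Schnizler's observation that every $\pi_v^*$ can be chosen on the coarse grid $\cH(\GMMN',v)$ determined by the in--out pairs of $\pi_v$ in each neighboring bounding box: there are at most $(4n)^{2\Delta}$ choices for $\GMMN'$ and then only $\binom{4\Delta+4}{2\Delta+2}\le 2^{4\Delta+4}$ monotone paths on that grid. So the state is ``your sketch $\sigma_t$'' $\times$ ``an actual path on the resulting $O(\Delta)\times O(\Delta)$ grid for each $v\in X_t$''; the second factor is what you drop, and it is absorbed into $f(\tw,\Delta)$, not into the $n$-exponent.

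With only in--out pairs, the transitions you propose do not type-check. At an introduce node the paper computes $\|\pi_w^*\cap N^*\|$ with $N^*=\bigcup_{v\in X_{t'}}\pi_v^*$; this requires the actual paths, and with triangles present (any $\tw\ge 2$) the overlap is \emph{not} a function of the in--out pairs alone, because the portion of $\pi_v$ inside $B(w)$ is simultaneously constrained by $B(u)$ for any common neighbor $u$ of $v$ and $w$---so the ``simultaneous realizability'' analogue of Lemma~\ref{lem:sharable_segments} you invoke is generally false. The same issue bites at a join node: the optimal values $\twdp(t_1,\sigma_t)$ and $\twdp(t_2,\sigma_t)$ may be attained by \emph{different} tuples of M-paths compatible with the same sketch $\sigma_t$, and there is no single realizing tuple on which to glue and subtract the within-bag overlap. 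Storing the full coarse-grid path for every $v\in X_t$, as the paper does, makes all of this immediate, at the price only of the $f(\tw,\Delta)$ factor already allowed in the statement.
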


\subsection{Treewidth and Nice Tree Decompositions}
We first review the concepts of tree decompositions and treewidth of graphs, and then define ``nice'' tree decompositions, which are useful to design a DP algorithm (cf.~\cite[Section~7.3]{parameterized-algorithms}).

\begin{definition}\label{def:tree-decomposition}
A \emph{tree decomposition} of an undirected graph $G$ is a pair $\mathcal{T} = (T, (X_t)_{t \in V(T)})$ of a tree $T$ and a tuple of subsets of $V(G)$ indexed by $V(T)$ such that the following three conditions hold:
\begin{enumerate}[label=(T\arabic{enumi}),labelindent=\parindent,leftmargin=*]
    \item $\bigcup_{t \in V(T)} X_t = V(G)$. \label{def:tree-decomposition:prop1}
    \item For every $\{u, v\} \in E(G)$, there exists $t \in V(T)$ such that $X_t$ contains both $u$ and $v$.\label{def:tree-decomposition:prop2}
    \item For every $u \in V(G)$, the set $T_u=\{t \in V(T) \mid u \in X_t\}$ is connected in $T$. \label{def:tree-decomposition:prop3}
\end{enumerate}
We call each $t \in V(T)$ a \emph{node} and each $X_t$ a \emph{bag}.

The \emph{width} of a tree decomposition is the maximum size of its bag minus one. 
The \emph{treewidth} of a graph $G$, which is denoted by $\tw(G)$ (or simply by $\tw$), is the minimum width of a tree decomposition of $G$.
\end{definition}

We choose an arbitrary node of a tree decomposition as a root, and define a nice tree decomposition as follows.

\begin{definition}\label{def:nice-tree-decomposition}
A rooted tree decomposition $\mathcal{T}=(T, (X_t)_{t\in V(T)})$ is said to be \emph{nice} if the following conditions are satisfied:
\begin{itemize}
    \item $X_r = \emptyset$ for the root node $r$,
    \item $X_l = \emptyset$ for every leaf node $l \in V(T)$, and 
    \item every non-leaf node of $T$ is one of the following three types:
    \begin{itemize}
        \item {\bf Introduce node:} a node $t$ having exactly one child $t'$ such that $X_t = X_{t'} \cup\{v\}$ for some vertex $v \notin X_{t'}$; we say that $v$ is \emph{introduced} at $t$.
        \item {\bf Forget node:} a node $t$ having exactly one child $t'$ such that $X_t = X_{t'} \setminus \{w\}$ for some vertex $w \in X_{t'}$; we say that $w$ is \emph{forgotten} at $t$.
        \item {\bf Join node:} a node $t$ having two children $t_1$ and $t_2$ such that $X_t = X_{t_1} = X_{t_2}$.
    \end{itemize}
\end{itemize}
\end{definition}

By the condition~\ref{def:tree-decomposition:prop3}, every vertex of $V(G)$ is forgotten only once, but may be introduced several times.
Given any tree decomposition, one can efficiently transform it as nice without increasing the width.

\begin{lemma}[{cf.~\cite[Lemma~7.4]{parameterized-algorithms}}]\label{lem:tree-decomposition:nice}
Any graph $G$ admits a nice tree decomposition of width at most $\tw(G)$. Moreover, given a tree decomposition $\mathcal{T}=(T, (X_t)_{t \in V(T)})$ of $G$ of width $k$, one can compute a nice tree decomposition of $G$ of width at most $k$ that has $O(k\cdot|V(G)|)$ nodes in $O(k^2 \cdot \max\left\{|V(T)|, |V(G)|\right\})$ time.
\end{lemma}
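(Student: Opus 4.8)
The plan is to transform the given tree decomposition $\mathcal T=(T,(X_t)_{t\in V(T)})$ of width $k$ into the normal form of Definition~\ref{def:nice-tree-decomposition} by a short pipeline of local surgeries, each of which visibly preserves the width and the axioms \ref{def:tree-decomposition:prop1}--\ref{def:tree-decomposition:prop3}, while bookkeeping the number of created nodes and the running time; the pure existence claim then follows by feeding the pipeline an optimum-width decomposition of $G$. \emph{Step~1 (shrinking):} as long as some edge $\{t,t'\}$ of $T$ satisfies $X_t\subseteq X_{t'}$, contract it and keep the bag $X_{t'}$. When no such edge remains, root $T$ arbitrarily; for each non-root node $t$ with parent $p$ pick $v_t\in X_t\setminus X_p$, and note that by \ref{def:tree-decomposition:prop3} the node $t$ is the unique topmost node whose bag contains $v_t$ (an ancestor containing $v_t$ would force $v_t\in X_p$). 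Hence $t\mapsto v_t$ is injective and $T$ now has at most $|V(G)|+1$ nodes; with bag comparisons done in $O(k)$ time after sorting the bags, this step costs $O(k\,|V(T)|)$ time.

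\emph{Step~2 (binarize):} replace every node with $d\ge 3$ children by a full binary tree of $O(d)$ copies of it, all carrying the same bag; this keeps the axioms and the width and adds only $O(\sum_t d_t)=O(|V(G)|)$ nodes, after which every node has at most two children. \emph{Step~3 (make it nice):} process the tree top-down; turn every node with two children into a join node, inserting copies of it as needed so that both of its children carry its bag, and then replace every remaining parent--child edge that is not already a single introduce or forget step by a path that first deletes the vertices of $X_{\text{child}}\setminus X_{\text{parent}}$ one at a time through forget nodes and then adds the vertices of $X_{\text{parent}}\setminus X_{\text{child}}$ one at a time through introduce nodes. Finally, stack forget nodes above the root until the new root has an empty bag, and extend each leaf downward by a chain of introduce nodes so that the new leaf below it has an empty bag. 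A routine case analysis (using that Step~1 left no single-child node with equal bags) shows every non-leaf node is then exactly one of the three allowed types. Each spliced path has at most $2(k+1)$ nodes, with all intermediate bags contained in $X_{\text{parent}}\cup X_{\text{child}}$ and hence of size at most $k+1$, so the width is unchanged; the final decomposition has $O(k\,|V(G)|)$ nodes, and writing down the $O(k)$ new bags of size $O(k)$ per surviving node costs $O(k^2\,|V(G)|)$ time, for a grand total of $O\bigl(k^2\cdot\max\{|V(T)|,|V(G)|\}\bigr)$.

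The step I expect to need the most care is checking that Step~3 does not destroy axiom \ref{def:tree-decomposition:prop3}. Along any spliced path we only ever forget a vertex lying outside $X_{\text{parent}}$ and only ever introduce a vertex lying outside $X_{\text{child}}$, so no vertex is forgotten and then reintroduced on the path; moreover, if some $w\in X_{\text{parent}}\setminus X_{\text{child}}$ also occurred somewhere in the subtree rooted at the child, then \ref{def:tree-decomposition:prop3} for the original decomposition would already force $w\in X_{\text{child}}$, a contradiction. Hence the set of nodes containing any fixed vertex remains connected after each surgery. Everything else — axioms \ref{def:tree-decomposition:prop1} and \ref{def:tree-decomposition:prop2}, the width bound, and the node and time counts — is immediate from the descriptions above.
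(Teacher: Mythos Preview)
The paper does not supply its own proof of this lemma; it simply cites \cite[Lemma~7.4]{parameterized-algorithms} and uses the result as a black box. Your write-up is essentially the standard construction from that reference (shrink redundant bags, binarize high-degree nodes, then splice introduce/forget chains and pad the root and leaves to empty bags), and it is correct.

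One small imprecision worth tightening: you justify the width bound on the spliced paths by saying the intermediate bags are contained in $X_{\text{parent}}\cup X_{\text{child}}$ ``and hence of size at most $k+1$''. Containment in the union alone does not give that bound; what actually gives it is that your specific order (first forget $X_{\text{child}}\setminus X_{\text{parent}}$, then introduce $X_{\text{parent}}\setminus X_{\text{child}}$) forces every intermediate bag to lie inside either $X_{\text{child}}$ or $X_{\text{parent}}$ individually. That is already implicit in your construction, so this is a wording fix rather than a gap.
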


\subsection{Algorithm Outline}\label{sec:outline_tree_decomposition}
We first sketch the idea of Schnizler's algorithm for GMMN[Tree], and then extend it to our DP algorithm on nice tree decompositions.

Let $\GMMN$ be a GMMN[Tree] instance.
Suppose that we fix an arbitrary M-path $\pi_v^* \in \Pi_\GMMN(v)$ for some $v \in \GMMN$ and consider only feasible networks $N = (\pi_w)_{w \in \GMMN} \in \Feas(\GMMN)$ with $\pi_v = \pi_v^*$.
Then, the instance $\GMMN$ is intuitively divided into two independent parts $\GMMN_v - v$ and $\GMMN \setminus \GMMN_v$, where recall that $\GMMN_v$ denotes the vertex set of the subtree of $\IG[\GMMN]$ rooted at $v$.
In particular, if $\|N\|$ is minimized (subject to $\pi_v = \pi_v^*$), then the restriction $N[\GMMN_v] = (\pi_w)_{w \in \GMMN_v}$ also attains the minimum length subject to $\pi_v = \pi_v^*$ (which is true for the other side $(\GMMN \setminus \GMMN_v) \cup \{v\}$).

In addition, once we fix the in-out pairs $(s'_u, t'_u)$ of $\pi_u \in \Pi_\GMMN(u)$ for all neighbors $u \in \Gamma_v$, we can restrict the candidates for such $M$-paths $\pi_v^* \in \Pi_\GMMN(v)$ on the corresponding coarse grid $\mathcal{H}(\GMMN', v)$, where $\GMMN' = \{ (s'_u, t'_u) \mid u \in \Gamma_v \} \cup \{v\}$.
The number of candidates for $\GMMN'$ is at most $((4n)^2)^{\delta_v} \leq (16n)^{2\Delta}$ and the number of candidates for $\pi_v^* \in \Pi_{\GMMN'}(v)$ for each possible $\GMMN'$ is at most $\binom{4\delta_v + 4}{2\delta_v + 2} \leq 2^{4\Delta + 4}$, where recall that $\delta_v$ denotes the degree of $v$ in $\IG[\GMMN]$ and $\Delta$ is the maximum degree of $\IG[\GMMN]$.
Based on these observations, one can design a DP algorithm from the leaves to the root on $\IG[\GMMN]$ that computes minimum-length partial networks $N[\GMMN_v] = (\pi_w)_{w \in \GMMN_v}$ subject to $\pi_v = \pi_v^*$ for $O((cn)^{2\Delta})$ possible M-paths $\pi_v^*$ for each $v \in \GMMN$, where $c$ is some constant.

Let us turn to our DP algorithm.
Let $\GMMN$ be a GMMN instance and $\mathcal{T} = (T, (X_t)_{t \in V(T)})$ be a nice tree decomposition of the intersection graph $\IG[\GMMN]$ of width $\tw$.
As with the DP for GMMN[Tree] sketched above, we construct partial solutions from the leaves to the root of $\mathcal{T}$.
From Lemma~\ref{lem:tree-decomposition:nice}, we can assume that $T$ has $O(\tw \cdot n)$ nodes.

For $t \in V(T)$, let $\GMMN_t$ be the union of all the bags appearing in the subtree of $T$ rooted at $t$, including $X_t$.
Then, the following lemma analogously holds, which implies that among all the feasible solutions $N=(\pi_w)_{w \in \GMMN} \in \Feas(\GMMN)$ satisfying $N[X_t] = (\pi^*_w)_{w \in X_t} $ for some fixed $(\pi^*_w)_{w \in X_t}$, all the minimum-length solutions have exactly the same length in $\GMMN_t$.

\begin{lemma}
Let $N_1 \in \Feas(\GMMN)$ and $N_2\in \Feas(\GMMN)$ be two feasible solutions for $\GMMN$ such that $N_1[X_t] = N_2[X_t]$. 
If $\left\|N_1[\GMMN_t]\right\| < \left\|N_2[\GMMN_t]\right\|$, then $N_2$ is suboptimal, i.e., $N_2 \not\in \Opt(\GMMN)$. 
\end{lemma}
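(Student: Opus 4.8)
The plan is to prove the lemma by a ``cut-and-paste'' argument. Given $N_1, N_2 \in \Feas(\GMMN)$ with $N_1[X_t] = N_2[X_t]$, I would build a third feasible solution $N_3 = (\pi^3_w)_{w\in\GMMN}$ that copies $N_1$ inside the subtree rooted at $t$ and $N_2$ outside, namely $\pi^3_w = \pi^1_w$ for $w \in \GMMN_t$ and $\pi^3_w = \pi^2_w$ for $w \in \GMMN\setminus\GMMN_t$; since $\Feas(\GMMN) = \prod_{w}\Pi_\GMMN(w)$, this $N_3$ is automatically feasible. The goal is then to show $\|N_3\| < \|N_2\|$, which contradicts $N_2 \in \Opt(\GMMN)$.

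The engine of the proof is the separator property of the tree decomposition, specialized to M-paths. First I would check that no pair of $\GMMN_t\setminus X_t$ is adjacent in $\IG[\GMMN]$ to any pair of $\GMMN\setminus\GMMN_t$: if $w\in\GMMN_t\setminus X_t$, then by~\ref{def:tree-decomposition:prop3} the set of nodes whose bag contains $w$ lies strictly below $t$ in $T$ (otherwise connectivity would force $w\in X_t$), whereas for $w'\in\GMMN\setminus\GMMN_t$ no bag in the subtree rooted at $t$ contains $w'$; hence no bag contains both $w$ and $w'$, and~\ref{def:tree-decomposition:prop2} forbids the edge $\{w,w'\}$. By the definition of the intersection graph this means $E(\cH(\GMMN,w))\cap E(\cH(\GMMN,w'))=\emptyset$, and since every M-path for a pair lies in its bounding box we have $E(\pi_w)\subseteq E(\cH(\GMMN,w))$ for any $\pi_w\in\Pi_\GMMN(w)$; therefore, in any feasible solution, the M-paths for $w$ and for $w'$ share no edge.

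Next I would do the length bookkeeping. Write $\|F\|=\sum_{e\in F}\|e\|$ for an edge set $F$, and set $A' = E\bigl(\bigcup_{w\in\GMMN_t\setminus X_t}\pi^1_w\bigr)$, $A = E\bigl(\bigcup_{w\in\GMMN_t\setminus X_t}\pi^2_w\bigr)$, $B = E\bigl(\bigcup_{w\in X_t}\pi^1_w\bigr) = E\bigl(\bigcup_{w\in X_t}\pi^2_w\bigr)$ (the two expressions are equal because $N_1[X_t]=N_2[X_t]$), and $C = E\bigl(\bigcup_{w\in\GMMN\setminus\GMMN_t}\pi^2_w\bigr)$. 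Since $\GMMN=(\GMMN_t\setminus X_t)\sqcup X_t\sqcup(\GMMN\setminus\GMMN_t)$, we get $E(N_3)=A'\cup B\cup C$ and $E(N_2)=A\cup B\cup C$, and the separator step gives $A'\cap C = A\cap C = \emptyset$. Inclusion--exclusion then yields $\|N_3\| = \|A'\cup B\| + \|C\| - \|B\cap C\| = \|N_1[\GMMN_t]\| + \|C\| - \|B\cap C\|$ and, in the same way, $\|N_2\| = \|N_2[\GMMN_t]\| + \|C\| - \|B\cap C\|$. Subtracting, $\|N_3\|-\|N_2\| = \|N_1[\GMMN_t]\| - \|N_2[\GMMN_t]\| < 0$ by hypothesis, so $N_3\in\Feas(\GMMN)$ is strictly shorter than $N_2$, proving $N_2\notin\Opt(\GMMN)$.

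The routine part here is the inclusion--exclusion; the part that needs care is the separator step, i.e.\ turning ``non-adjacent in $\IG[\GMMN]$'' into ``edge-disjoint M-paths.'' This is exactly where the paper's (slightly nonstandard) definition of the intersection graph is convenient: pairs whose bounding boxes meet only at a corner are \emph{not} adjacent, but a shared corner point carries no edge length, so it cannot affect $\|N_3\|$ or $\|N_2\|$; all that matters is that $\cH(\GMMN,w)$ and $\cH(\GMMN,w')$ share no \emph{edge}, which is precisely what non-adjacency gives.
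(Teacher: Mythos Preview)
Your proof is correct and follows essentially the same cut-and-paste argument as the paper: build a hybrid solution that agrees with $N_1$ on $\GMMN_t$ and with $N_2$ elsewhere, then use the separator property of the tree decomposition to conclude it is strictly shorter than $N_2$. Your version is in fact more explicit than the paper's, which asserts $\|N_2'\| < \|N_2\|$ without spelling out the inclusion--exclusion; your careful translation of ``non-adjacent in $\IG[\GMMN]$'' into ``edge-disjoint M-paths'' is exactly the point the paper leaves implicit.
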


\begin{proof}
Let $N_2'$ be the network obtained from $N_2$ by replacing $N_2[\GMMN_t]$ with $N_1[\GMMN_t]$.
As $\mathcal{T}$ is a tree decomposition of the intersection graph $\IG[\GMMN]$, if we remove all the vertices in $X_t$ from $\IG[\GMMN]$, then $\GMMN_t \setminus X_t$ is disconnected from its complement in the remaining graph (cf.~the condition (T3) in Definition~\ref{def:tree-decomposition}).
Moreover, $N_1$ and $N_2$ have the same M-paths for $X_t$, and hence the network $N_2'$ is still an feasible solution for $\GMMN$. 
In addition, $\left\|N_1[\GMMN_t]\right\| < \left\|N_2[\GMMN_t]\right\|$ implies that $\|N_2'\| < \|N_2\|$, and we are done. 
\end{proof}

Based on this lemma, we define subproblems for possible solutions in $X_t$ as follows: given a GMMN instance $\GMMN$ and an M-path $\pi^*_v \in \Pi_\GMMN(v)$ for each $v \in X_t$, we are required to find a network $\hat{N} = (\hat{\pi}_w)_{w \in \GMMN} \in \Feas(\GMMN)$ such that $\hat{N}$ minimizes $\|\hat{N}[\GMMN_t]\|$ subject to $\hat{\pi}_v = \pi^*_v$ for all $v \in X_t$.
Formally, we define 
\begin{align}
    \twdp(t, (\pi_v^*)_{v \in X_t}) = \min\left\{ \left\|N[\GMMN_t] \right\| \bigm|
         N = (\pi_w)_{w \in \GMMN} \in \Feas(\GMMN),\ \pi_v = \pi_v^* \ (\forall v \in X_t) \right\}. \label{eq:dp_tree_decomposition:def}
\end{align}
If $t$ is a leaf, i.e., when $X_t = \emptyset$, then we write $(\pi_v^*)_{v \in X_t} = \epsilon$.
As with the tree case, it suffices to consider $O((cn)^{2\Delta})$ candidates for $\pi_v^* \in \Pi_\GMMN(v)$, and hence there exist $O((cn)^{2\Delta(\tw + 1)})$ candidates for $(\pi_v^*)_{v \in X_t}$ as $|X_t| \leq \tw + 1$.
We describe recursive formulae for filling up the DP table in the next section.

\subsection{Recursive Formula}
We separately discuss the four types of nodes in a nice tree decomposition (cf.~Definition~\ref{def:nice-tree-decomposition}).

\paragraph{Leaf node.} If $t$ is a leaf node, then $\twdp(t, \epsilon) = 0$ since $X_t = \emptyset$ and $\GMMN_t = \emptyset$. 

\paragraph{Introduce node.} If $t$ is an introduce node with the child $t'$ such that $X_t = X_{t'} \cup \{w\}$ for some $w \in X_{t'}$, then 
\begin{align}
    \twdp(t, (\pi_v^*)_{v \in X_t}) = \twdp(t', (\pi_v^*)_{v \in X_{t'}}) + d(s_w, t_w) - \left\|\pi^*_w \cap N^*\right\|, \label{eq:dp:introduce-node}
\end{align}
where we define $N^* = \bigcup_{v \in X_{t'}} \pi_v^*$ and the correctness of \eqref{eq:dp:introduce-node} is shown as follows.
If $w$ has a neighbor $u$ in $\GMMN_t \setminus X_{t'}$, then the edge $\{u, w\} \in E(\IG[\GMMN])$ cannot belong to any bag (because $u$ has already been forgotten in the subtree rooted at $t'$), which contradicts that $\mathcal{T}$ is a tree decomposition of $\IG[\GMMN]$ (cf.~the condition (T2) in Definition~\ref{def:tree-decomposition}).
Hence, it suffices to care the total length of segments shared by $\pi_w^*$ and $N^*$, which leads to the formula \eqref{eq:dp:introduce-node}.

\paragraph{Forget node.} If $t$ is a forget node with the child $t'$ such that $X_t = X_{t'} \setminus \{u\}$ for some $u \in X_{t'}$, then
\begin{align}
    \twdp(t, (\pi_v^*)_{v \in X_{t}}) = \min\left\{ \twdp(t', (\pi_v)_{v \in X_{t'}}) \mid \pi_v = \pi^*_v \ (\forall v \in X_{t'}\setminus \{u\})\right\},\label{eq:dp:forget-node}
\end{align} 
whose correctness is shown as follows.
By definition, we have $\GMMN_t = \GMMN_{t'}$ and any network $N = (\pi_w)_{w \in \GMMN} \in \Feas(\GMMN)$ with $\pi_v = \pi_v^*$ $(\forall v \in X_{t'})$ satisfies $\pi_v = \pi_v^*$ $(\forall v \in X_t \subseteq X_{t'})$, and hence $\twdp(t, (\pi_v^*)_{v \in X_{t}}) \le \twdp(t', (\pi^*_v)_{v \in X_{t'}})$.
On the other hand, if we take a network $N = (\pi_w)_{w \in \GMMN} \in \Feas(\GMMN)$ attaining $\twdp(t', (\pi_v^*)_{v \in X_{t'}})$, then we have $\twdp(t', (\pi_v^*)_{v \in X_{t'}}) = \twdp(t, (\pi_v)_{v \in X_{t}})$ and $\pi_v = \pi_v^*$ for all $v \in X_{t'}$.
Thus, we see that the formula \eqref{eq:dp:forget-node} holds.

\paragraph{Join node.} If $t$ is a join node with two children $t_1$ and $t_2$ such that $X_t = X_{t_1} = X_{t_2}$, then
\begin{align}
    \twdp(t,(\pi_v^*)_{v \in X_t}) = \twdp(t_1, (\pi_v^*)_{v \in X_t}) + \twdp(t_2, (\pi_v^*)_{v \in X_t}) - \|N^*\|,\label{eq:dp:join-node}
\end{align}
where we define $N^* = \bigcup_{v \in X_t} \pi_v^*$ and the correctness of \eqref{eq:dp:join-node} is shown as follows.
Let $N_1 = (\pi^{(1)}_w)_{w \in \GMMN} \in \Feas(\GMMN)$ and $N_2 = (\pi_w^{(2)})_{w \in \GMMN} \in \Feas(\GMMN)$ be networks attaining $\twdp(t_1, (\pi_v^*)_{v \in X_t})$ and $\twdp(t_2, (\pi_v^*)_{v \in X_t})$, respectively.
Since two pairs $u \in \GMMN_{t_1} \setminus X_{t}$ and $w \in \GMMN_{t_2} \setminus X_{t}$ cannot be adjacent (otherwise, the edge $\{u, w\} \in E(\IG[\GMMN])$ cannot belong to any bag, a contradiction), the restrictions $N_1[\GMMN_{t_1} \setminus X_t]$ and $N_2[\GMMN_{t_2} \setminus X_t]$ do not share any nontrivial segment.
In addition, as $\GMMN_t = \GMMN_{t_1} \cup \GMMN_{t_2}$ and $\pi^{(1)}_v = \pi^{(2)}_v = \pi_v^*$ for all $v \in X_{t} = X_{t_1} = X_{t_2}$, we obtain 
\begin{align}
    \twdp(t, (\pi_v^*)_{v \in X_t}) &= \|N_1\| + \|N_2\| - \|N^*\|= \twdp(t_1, (\pi_v^*)_{v \in X_t}) + \twdp(t_2, (\pi_v^*)_{v \in X_t}) - \|N^*\|.
\end{align}

\subsection{Computational Time Analysis}
In this section, we show that the whole algorithm runs in $O(f(\tw, \Delta) \cdot n^{2\Delta(\tw + 1) + 1})$ time, which completes the proof of Theorem~\ref{thm:tree-decomposition}.
Recall that $|V(T)| = O(\tw \cdot n)$ and the size of the DP table $\twdp(t, \cdot)$ is bounded by $O((cn)^{2\Delta(\tw + 1)})$ for each node $t \in V(T)$ (cf.~Section~\ref{sec:outline_tree_decomposition}).

For each forget node $t$, we just reduce the table $\twdp(t', \cdot)$ of size $O((cn)^{2\Delta(\tw + 1)})$ for the child $t'$ by taking the minimum according to \eqref{eq:dp:forget-node}, which requires $O((cn)^{2\Delta(\tw + 1)})$ time in total.

Recall that, when we consider candidates for $\pi_v^* \in \Pi_\GMMN(v)$, we restrict them on a coarse grid $\mathcal{H}(P', v)$ with $|P'| \leq \delta_v + 1 \leq \Delta + 1$ (cf.~Section~\ref{sec:outline_tree_decomposition}).
Hence, for each introduce or join node $t$, by \eqref{eq:dp:introduce-node} or \eqref{eq:dp:join-node}, respectively, one can compute $\twdp(t, (\pi_v^*)_{v \in X_t})$ in time depending only on $\Delta$ and $\tw$.

Thus, one can fill up each table $\twdp(t, \cdot)$ in $O(f(\tw, \Delta) \cdot n^{2\Delta(\tw + 1)})$ time, where $c^{2\Delta(\tw + 1)}$ is also included in $f(\tw, \Delta)$.
This concludes that the overall computational time is bounded by $O(f(\tw, \Delta) \cdot n^{2\Delta(\tw + 1) + 1})$.

We remark that, in the running time of our algorithm, the exponent of $n$ still contains both the treewidth $\tw$ and the maximum degree $\Delta$ of the intersection graph.
It remains open whether the GMMN problem is fixed parameter tractable (FPT) with respect to such parameters or not.

\section{Approximation Ratio Based on Chromatic Number}\label{sec:coloring}
In this section, we give a simple observation based on graph coloring.

\begin{table}[t]
    \centering
    \caption{Current best approximation ratios classified by the class of intersection graphs, whose treewidth and maximum degree are denoted by $\tw$ and $\Delta$, respectively.}
    \small
    \begin{tabular}{ccc}\hline
         Class & Approximation Ratio & Strategy\\\hline\hline
         \multirow{2}{*}{General} & $(6 + \epsilon)\log n $ & Divide-and-conquer \cite{10.1007/s00453-017-0298-0}\\\cline{2-3}
         & $O(\mathcal{D})$ ($\mathcal{D}=\Theta(\log n)$) & Divide-and-conquer \cite{funke2014generalized}\\\hline
        \multirow{3}{*}{\begin{tabular}{c}Complete Graphs\\$(\tw = \Delta = n - 1)$\end{tabular}} & $2+\epsilon$ & \begin{tabular}{c}
            Using PTASes for RSA \cite{zachariasen2000approximation,lu2000polynomial} \\
            (cf.~\cite[Lemma~3]{10.1007/s00453-017-0298-0})
         \end{tabular}\\ \cline{2-3}
         & $4$ & \begin{tabular}{c}
            Using a $2$-approximation algorithm \\for RSA \cite{rao1992rectilinear} 
            (cf.~\cite[Lemma~3]{10.1007/s00453-017-0298-0})
         \end{tabular}\\\hline \hline
         \multirow{2}{*}{General} & $ \Delta$ & 
              Coloring (Corollary~\ref{coro:degree-bounded})\\\cline{2-3}
              & $\tw + 1$ & Coloring (Corollary~\ref{coro:tw-bounded})\\\hline
         Planar Graphs & 4 & Coloring (Corollary~\ref{coro:planar})\\ \hline
     \end{tabular}
    \label{tab:previous_result:approx}
\end{table}

\begin{proposition}
\label{prop:coloring-IG}
Let $\GMMN$ be a GMMN instance and $N^*\in \Opt(\GMMN)$ be an optimal solution for $\GMMN$. If the intersection graph $\IG[\GMMN]$ of $\GMMN$ is $k$-colorable, for every $N \in \Feas(\GMMN)$, the total length of $N$ is at most $k$ times of the total length of $N^*$.
\end{proposition}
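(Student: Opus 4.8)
The plan is to squeeze both $\|N\|$ and $\|N^*\|$ against the quantity $L := \sum_{v \in \GMMN} d(s_v, t_v)$: I will bound $\|N\|$ from above and $\|N^*\|$ from below by $L$ and $L/k$ respectively, and then combine. First I would record the upper bound: every $N \in \Feas(\GMMN)$ is by definition of the form $N = \bigcup_{v \in \GMMN} \pi_v$ with $\pi_v \in \Pi_\GMMN(v)$, and every M-path satisfies $\|\pi_v\| = d(s_v, t_v)$; since $E\bigl(\bigcup_v \pi_v\bigr) \subseteq \bigcup_v E(\pi_v)$, the length of a union is subadditive, so $\|N\| \le \sum_{v \in \GMMN} \|\pi_v\| = L$ for \emph{every} $N \in \Feas(\GMMN)$.

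Next I would fix a proper $k$-coloring of $\IG[\GMMN]$ with color classes $C_1, \dots, C_k$ and write $N^* = (\pi^*_v)_{v \in \GMMN}$. The crucial step is to argue that, for each $i \in [k]$, the M-paths $\{\pi^*_v \mid v \in C_i\}$ are pairwise edge-disjoint: each $\pi^*_v$ is a subgraph of $\cH(\GMMN, v)$ because an M-path for $v$ lying on the Hanan grid is contained in the (convex) bounding box $B(v)$; and since $C_i$ is an independent set, distinct $u, v \in C_i$ are nonadjacent in $\IG[\GMMN]$, which by the definition of the intersection graph means $E(\cH(\GMMN, u)) \cap E(\cH(\GMMN, v)) = \emptyset$, hence $E(\pi^*_u) \cap E(\pi^*_v) = \emptyset$. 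Then $\bigcup_{v \in C_i} \pi^*_v$ is a subgraph of $N^*$ whose length equals \emph{exactly} $\sum_{v \in C_i} \|\pi^*_v\| = \sum_{v \in C_i} d(s_v, t_v)$, so $\|N^*\| \ge \sum_{v \in C_i} d(s_v, t_v)$.

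Finally, summing this over $i \in [k]$ and using that $C_1, \dots, C_k$ partition $\GMMN$ yields $k \cdot \|N^*\| \ge \sum_{i=1}^{k} \sum_{v \in C_i} d(s_v, t_v) = L \ge \|N\|$, which is exactly the assertion. I expect the one delicate point to be the edge-disjointness claim in the second paragraph: it is precisely here that the paper's convention of declaring two pairs adjacent only when their bounding-box subgrids share an \emph{edge} (rather than merely a point) is needed, together with the observation that M-paths meeting only at corner vertices of their bounding boxes overlap in a set of zero length and hence do not disturb the additivity of lengths. Everything else is a routine unwinding of the definitions in Section~\ref{sec:preliminary}.
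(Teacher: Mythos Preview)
Your proof is correct and follows essentially the same approach as the paper: both arguments use a proper $k$-coloring to partition $\GMMN$ into independent sets, observe that within each color class the M-paths of any feasible solution are pairwise edge-disjoint (so their total length equals the sum of Manhattan distances and is bounded above by $\|N^*\|$), and then combine via subadditivity of the union. The paper phrases the lower bound on $\|N^*\|$ in terms of optimal solutions $N_i^*$ for the subinstances $\GMMN_i$, whereas you argue edge-disjointness directly, but this is only a cosmetic difference.
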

\begin{proof}
Let $\GMMN$ be a GMMN instance such that $\IG[\GMMN]$ is $k$-colorable, i.e., there exists a $k$-partition $\{\GMMN_1, \GMMN_2, \dots, \GMMN_k\}$ of $\GMMN$ such that every $\GMMN_i$ is an independent set in $\IG[\GMMN]$.
Then, for each $i \in [k]$, the total length of an optimal solution $N_i^* = (\pi_w)_{w \in \GMMN_i} \in \Opt(\GMMN_i)$ for the GMMN subinstance $\GMMN_i$ is equal to the sum of the Manhattan distances, i.e., 
\begin{equation}
    \left\|N_i^*\right\| = \sum_{w = (s, t) \in \GMMN_i} d(s, t).
\end{equation}
For every $i \in [k]$, the optimal solution $N^*$ also contains an M-path for every pair in $\GMMN_i$ since $\GMMN_i \subseteq \GMMN$, and hence $\|N_i^*\| \le \|N^*\|$.
Since any feasible solution $N \in \Feas(\GMMN)$ is written as $\bigcup_{i \in [k]} N_i^*$ for some $N_i^* \in \Opt(\GMMN_i)$ $(i \in [k])$ by definition, we have 
\begin{equation}
\|N\| = \left\|\bigcup_{i \in [k]} N_i^* \right\| \leq \sum_{i \in [k]} \|N_i\| \le k \cdot \|N^*\|,
\end{equation} 
and we are done.
\end{proof}

From Lemma~\ref{prop:coloring-IG}, we immediately obtain the following corollaries. For complete graphs and odd cycles, obviously, one needs $\Delta + 1$ colors, where $\Delta$ is the maximum degree. However, all other connected graphs are $\Delta$-colorable~\cite{brooks_1941}. Since the GMMN problem whose intersection graph is a complete graph and a cycle admits an $O(1)$-approximation algorithm and a polynomial-time (exact) algorithm, respectively, we focus on approximation ratio for other cases.
\begin{corollary}\label{coro:degree-bounded}
Let $\GMMN$ be a GMMN instance whose intersection graph has maximum degree at most $\Delta$, and is neither a complete graph nor an odd cycle. Let $N^*\in \Opt(\GMMN)$ be an optimal solution for $\GMMN$. Then for any feasible solution $N \in \Feas(\GMMN)$, we have $\|N\| \le \Delta \cdot \|N^*\|$.
\end{corollary}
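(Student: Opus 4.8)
The plan is to obtain this as an immediate consequence of Proposition~\ref{prop:coloring-IG} combined with Brooks' theorem. Proposition~\ref{prop:coloring-IG} already does all the geometric work: it guarantees that whenever $\IG[\GMMN]$ is $k$-colorable, \emph{every} feasible solution $N \in \Feas(\GMMN)$ satisfies $\|N\| \le k\cdot\|N^*\|$. Hence it suffices to certify that, under the hypotheses of the corollary, the intersection graph $\IG[\GMMN]$ is $\Delta$-colorable.

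To do this I would invoke Brooks' theorem~\cite{brooks_1941}: a connected graph $G$ of maximum degree $\Delta$ has chromatic number at most $\Delta$, with the sole exceptions of $G \cong K_{\Delta+1}$ and, when $\Delta = 2$, $G$ an odd cycle. Since $\IG[\GMMN]$ is assumed to be neither a complete graph nor an odd cycle, the exceptional cases are excluded and we may properly color $\IG[\GMMN]$ with at most $\Delta$ colors; when $\IG[\GMMN]$ is disconnected one runs the argument componentwise, and the few low-degree boundary cases (e.g.\ $\Delta \le 1$) are themselves complete graphs and thus already ruled out. Plugging $k = \Delta$ into Proposition~\ref{prop:coloring-IG} then yields $\|N\| \le \Delta\cdot\|N^*\|$ for every $N \in \Feas(\GMMN)$, which is exactly the claim.

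Because Proposition~\ref{prop:coloring-IG} carries the substantive content, there is essentially no obstacle here: the statement is a direct corollary. The only point that deserves any care is the bookkeeping of the exceptional graphs in Brooks' theorem — precisely what the hypothesis ``neither a complete graph nor an odd cycle'' is included to discharge — together with the routine componentwise reduction in the disconnected case.
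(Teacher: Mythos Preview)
Your proposal is correct and matches the paper's approach exactly: the paper also derives the corollary immediately from Proposition~\ref{prop:coloring-IG} together with Brooks' theorem~\cite{brooks_1941}, noting that the hypothesis excluding complete graphs and odd cycles is precisely what is needed to guarantee $\Delta$-colorability.
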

It is easy to check that a graph of treewidth at most $\tw$ is $(\tw+1)$-colorable. 
\begin{corollary}\label{coro:tw-bounded}
Let $\GMMN$ be a GMMN instance whose intersection graph is of treewidth at most $\tw$. Let $N^*\in \Opt(\GMMN)$ be an optimal solution for $\GMMN$. Then for any feasible solution $N \in \Feas(\GMMN)$, we have $\|N\| \le (\tw+1)\cdot \|N^*\|$.
\end{corollary}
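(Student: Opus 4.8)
The plan is to obtain the corollary as an immediate consequence of Proposition~\ref{prop:coloring-IG} once we know that the intersection graph is $(\tw+1)$-colorable. So the only substantive step is the purely graph-theoretic fact that every graph $G$ with $\tw(G)\le\tw$ admits a proper vertex coloring using $\tw+1$ colors. Applying this to $G=\IG[\GMMN]$ and then invoking Proposition~\ref{prop:coloring-IG} with $k=\tw+1$ yields $\|N\|\le(\tw+1)\|N^*\|$ for every $N\in\Feas(\GMMN)$, which is exactly the claim.

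To prove $(\tw+1)$-colorability, I would fix a tree decomposition $\mathcal{T}=(T,(X_t)_{t\in V(T)})$ of $G$ of width $\tw$, so that $|X_t|\le\tw+1$ for all $t$, and argue by induction on $|V(G)|$. If $|V(G)|\le\tw+1$ the claim is trivial. Otherwise, root $T$ arbitrarily; after contracting any edge $\{t,t'\}$ of $T$ with $X_t\subseteq X_{t'}$ (which keeps $\mathcal{T}$ a tree decomposition of the same width), we may assume no bag is contained in an adjacent bag, so a leaf node $\ell$ of $T$ possesses a vertex $v\in X_\ell$ belonging to no other bag. By condition \ref{def:tree-decomposition:prop2} of Definition~\ref{def:tree-decomposition}, every neighbor of $v$ in $G$ shares a bag with $v$ and hence lies in $X_\ell$; therefore $v$ has at most $|X_\ell|-1\le\tw$ neighbors. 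Deleting $v$ gives a graph $G-v$ whose treewidth is still at most $\tw$ (restrict $\mathcal{T}$ to $V(G)\setminus\{v\}$), so by the induction hypothesis $G-v$ is $(\tw+1)$-colorable; since $v$ has at most $\tw$ already-colored neighbors, at least one of the $\tw+1$ colors remains available for $v$, extending the coloring to $G$.

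I do not expect a genuine obstacle here: the entire force of the corollary is already carried by Proposition~\ref{prop:coloring-IG}, and the supporting lemma — that bounded-treewidth graphs are bounded-degenerate and hence greedily colorable — is standard. The only point needing a modicum of care is the existence of a ``private'' vertex in some leaf bag, which is why I would first normalize $\mathcal{T}$ so that no bag is contained in a neighboring bag; once that is in place the induction runs without friction.
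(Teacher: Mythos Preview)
Your proposal is correct and follows exactly the same route as the paper: invoke the standard fact that any graph of treewidth at most $\tw$ is $(\tw+1)$-colorable, then apply Proposition~\ref{prop:coloring-IG} with $k=\tw+1$. The paper merely asserts the colorability claim without proof, whereas you supply the usual degeneracy/induction argument; your normalization-and-leaf-bag step is fine and nothing more is needed.
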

Every planar graph is known to be $4$-colorable~\cite{appel1977part1,appel1977part2,robertson1997}.
\begin{corollary}\label{coro:planar}
Let $\GMMN$ be a GMMN instance whose intersection graph is planar. Let $N^*\in \Opt(\GMMN)$ be an optimal solution for $\GMMN$. Then for any feasible solution $N \in \Feas(\GMMN)$, we have $\|N\| \le 4 \cdot \|N^*\|$.
\end{corollary}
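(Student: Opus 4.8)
The plan is to obtain this corollary as an immediate specialization of Proposition~\ref{prop:coloring-IG}. That proposition states that whenever the intersection graph $\IG[\GMMN]$ is $k$-colorable, every feasible solution $N \in \Feas(\GMMN)$ satisfies $\|N\| \le k\cdot\|N^*\|$ for every optimal $N^* \in \Opt(\GMMN)$; its hypothesis is purely graph-theoretic, so any bound on the chromatic number of $\IG[\GMMN]$ transfers directly into an approximation guarantee. Hence the only thing left to supply is a chromatic bound for planar intersection graphs.

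First I would invoke the Four Color Theorem~\cite{appel1977part1,appel1977part2,robertson1997}: since $\IG[\GMMN]$ is assumed planar, it is $4$-colorable. Second, I would apply Proposition~\ref{prop:coloring-IG} with $k = 4$. No further work is needed, because that proposition already carries out the decomposition of $N$ into the union $\bigcup_{i\in[4]} N_i^*$ over four color classes $\GMMN_1,\dots,\GMMN_4$ — each an independent set in $\IG[\GMMN]$, so that $\|N_i^*\| = \sum_{w=(s,t)\in\GMMN_i} d(s,t) \le \|N^*\|$ — together with the length estimate $\|N\| \le \sum_{i\in[4]}\|N_i^*\| \le 4\cdot\|N^*\|$. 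This is exactly the asserted inequality.

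There is no genuine obstacle here beyond quoting the Four Color Theorem, which we use as a black box. The one subtlety worth stating explicitly is that planarity is exploited \emph{only} through the $4$-colorability of the abstract intersection graph $\IG[\GMMN]$; the corollary makes no claim about planarity of any geometric network embedded in $\mathbb{R}^2$, and the two notions should not be conflated when reading the statement. (If one is uneasy about relying on the full Four Color Theorem, the same argument with the weaker five-color theorem yields the slightly worse bound $\|N\| \le 5\cdot\|N^*\|$ by an entirely elementary proof.)
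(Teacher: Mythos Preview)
Your proposal is correct and matches the paper's own argument essentially verbatim: the paper simply notes that every planar graph is $4$-colorable (citing the Four Color Theorem) and then invokes Proposition~\ref{prop:coloring-IG} with $k=4$.
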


\end{document}